\tikzstyle{decision} = [diamond, draw, fill=blue!20, 
\tikzstyle{block} = [rectangle, draw, fill=blue!20, 
\tikzstyle{line} = [draw, -latex']
\tikzstyle{cloud} = [draw, ellipse,fill=red!20, node distance=3cm,
 \newtheoremstyle{mystyle}{36pt}{}{}{}{\bfseries}{.}{ }{}
  \theoremstyle{plain}
 \newtheorem{theorem}{Theorem}
 \newtheorem{lemma}[]{Lemma}
 \newtheorem{definition}[]{Definition}
 \newtheorem{proposition}[]{Proposition}
 \newtheorem{example}{Example}
  \theoremstyle{remark}
 \newtheorem{remark}{Remark}
\tikzset{main node/.style={circle,fill=blue!20,draw,minimum size=1cm,inner sep=0pt},  }
\providecommand{\bbs}[1]{\left(#1\right)}
\newcommand{\ud}{\,\mathrm{d}}
\newcommand{\sP}{\mathcal{P}}
\newcommand{\ts}{\mathsf{T}}
\newcommand{\dd}{\mathcal{\dagger}}
\newcommand{\g}{\mathbf{g}}
\newcommand{\grad}{\mathrm{grad}}
\newcommand{\V}{\mathsf{V}}
\newcommand{\W}{\mathsf{W}}
\newcommand{\Hess}{\mathrm{Hess}}
\newcommand{\R}{\mathrm{R}}
\newcommand{\K}{\mathrm{K}}
\newcommand{\GG}{\Gamma^3}
\begin{document}
\title[Geometric calculations on Probability manifolds]{Geometric calculations on probability manifolds from reciprocal relations in Master equations}
\author[Li]{Wuchen Li}
\email{wuchen@mailbox.sc.edu}
\address{Department of Mathematics, University of South Carolina, Columbia, 29208.}
\keywords{Onsager reciprocal relations; Master equations with detailed balance; Thermodynamical probability manifolds; Curvatures in probability manifolds.}
\begin{abstract}
Onsager reciprocal relations model physical irreversible processes from complex systems. Recently, it has been shown that Onsager principles for master equations on finite states introduce a class of Riemannian metrics on the probability simplex, leading to probability manifolds or finite-state Wasserstein--2 spaces. In this paper, we study geometric calculations on probability manifolds, deriving the Levi-Civita connection, gradient, Hessian operators of energies, parallel transport, and calculating both the Riemannian and sectional curvatures. We present two examples of geometric quantities in probability manifolds. One example is the Levi-Civita connection from the chemical monomolecular triangle reaction. The other example is the sectional, Ricci, and scalar curvatures in Wasserstein space on a three-point lattice graph. 
\end{abstract}
\maketitle

\section{Introductions}
Non-equilibrium thermodynamics \cite{MFT, GEN} studies dynamical systems that interact with reservoirs for thermal energy or volume. Examples are coupled physical irreversible processes in complex systems, which are modeled by probability functions over physical states. Typical dynamics include thermoelectric phenomena and heat conduction in an anisotropic medium. In the modeling of irreversible processes, Onsager introduces the reciprocal relationship \cite{Onsager}, which describes the symmetric dissipation of the probability transition equations arising from physical processes, known as free energy dissipation or entropy production. Here, the free energy dissipation follows the second law of thermodynamics. Rather than addressing arbitrary physical mechanisms, one can focus on two canonical classes of stochastic models: overdamped Langevin dynamics in continuous state spaces, and detailed-balance Markov processes with transition rates on discrete domains. In both cases, entropy production is expressed as the time derivative of an entropy (or free energy) functional along the governing probability evolution equations, such as the Fokker--Planck equation in the continuous setting or the master equation in the discrete setting.

In recent decades, a particular type of gradient flow has been widely studied in optimal transport theory \cite{am2006, vil2008, LiYing}. They characterize a class of partial differential equations as gradient flows of energies on a probability space equipped with an infinite-dimensional Riemannian metric, namely the Wasserstein-$2$ space. Among these dynamics, the famous example is the probability density transition equation for overdamped Langevin dynamics, which is the gradient-drift Fokker-Planck equation. In this setting, the gradient drift Fokker-Planck equation satisfies the gradient flow of free energy in Wasserstein-$2$ space \cite{JKO,otto2001}. From the angle of non-equilibrium thermodynamics \cite{Ito}, the gradient drift Fokker-Planck equation satisfies the Onsager reciprocal relation. 
Along this direction, master equations with detailed balance conditions \cite{JS}  can also be formulated as gradient flows in discrete Wasserstein-$2$ spaces \cite{Chow2012, EM1, M} for probability distributions on discrete states. It also satisfies Onsager reciprocal relations \cite{Onsager}. From now on, we mainly study these generalized Wasserstein-$2$ spaces on a simplex, called {thermodynamical probability manifolds}, in short {\em probability manifolds}. 

Nowadays, geometric calculations in probability manifolds are essential for understanding fluctuation relations and thermodynamic properties. E.g., \cite{LiN1, WD} introduces overdamped Langevin dynamics in the probability manifold, namely Wasserstein common noises. It is based on geometric quantities, such as second-order operators in finite-state probability manifolds. Along this direction, a natural question arises. 
{\em What are geometric quantities, including Riemannian and sectional curvatures, in probability manifolds?}

In this paper, we study Riemannian calculations in a class of thermodynamical probability manifolds. This study is a generalization of Otto calculus on discrete states with general mobility functions. We first compute the Levi-Civita connection, gradient, and Hessian operators on probability manifolds. We further derive Riemannian and sectional curvatures in the probability simplex set. Two examples of these computations are performed. One example is the Levi-Civita connection on a three-state probability simplex, where the discrete probability manifold is constructed from the master equation for a chemical monomolecular triangle reaction. The other example is the Ricci curvature on a probability manifold with a three-point lattice graph. 

In literature, geometric calculations in Wasserstein-$2$ spaces have been studied in \cite{Lafferty, Lott}. More recently, \cite{LiHess, LiBreg, LiG1} generalized these calculations of metric spaces in both continuous and discrete states and studied divergence functions in Wasserstein-$2$  space. However, geometric calculations in generalized Wasserstein-$2$ metric spaces from the Onsager reciprocal relation on discrete states are still unknown. In this direction, \cite{LiG2, LiHess1} has studied the Hessian operators in generalized metric spaces on continuous and discrete state spaces (lattices), respectively. In this paper, we formulate geometric calculations in thermodynamical probability manifolds based on {\em generalized Onsager's response matrices}. We further compute Riemannian and sectional curvatures on probability manifolds. It is also worth mentioning that master equations and their Onsager reciprocal relations have been widely studied in the network theory \cite{JS}. Geometric calculations in probability manifolds also generalize this direction. The high-order derivative of Onsager response matrices defines the Riemannian curvature tensors in probability manifolds, shown in Theorem \ref{thm2}. We leave the related network analysis of curvature tensors for future work. 

In addition, there exist several studies connecting Onsager reciprocal relations \cite{Ito}, information geometry \cite{WS}, optimal transport on graphs \cite{Chow2012, EM1, M}, and dynamical density functional theory \cite{DDFT}. In particular, \cite{WS} investigates a class of Onsager response matrices constructed from the Hessian of entropy functionals and interprets Onsager gradient flows as natural gradient flows in information geometry. In the present work, the Onsager response matrix is given from the mean function $\theta$ arising in master equations and depends on the choice of free energy. From this perspective, intrinsic geometric properties of probability manifolds--such as connections and curvatures--can be related to Hessian and higher-order derivatives of the free energy, objects that are central in information geometry \cite{IG1, AyJostLeSchwachhoefer2017, Zhang2004}. See a concrete example in Example \ref{ex3}. 

The paper is organized as follows. In section \ref{sec2}, we briefly review reciprocal relations, master equations with detailed balance conditions, and optimal transport metric spaces on discrete states. In section \ref{sec3}, we introduce the Riemannian geometric calculations, including Levi-Civita connections, parallel transport equations, and curvature tensors. We also present two examples of geometric calculations on a three-point probability simplex in sections \ref{sec4} and \ref{sec5}.  

\section{Onsager reciprocal relations, network notations, and Thermodynamical probability manifolds}\label{sec2}
This section reviews finite-state master equations with detailed balance relations, Onsager reciprocal relations, network notation, and a class of thermodynamical probability manifolds. See more details in \cite{EM1, M, JS}.
\subsection{Master equations with detailed balance relations} 
Consider a discrete state $\big\{1,2,\cdots, n\big\}$. Denote a probability distribution $p(t)=(p_i(t))_{i=1}^n\in\mathbb{R}^n_+$ over states $i=1,2,\cdots, n$, that characterizes the discrete state system in a time domain $t\geq 0$, with
\begin{equation*}
0\leq p_i(t)\leq 1, \quad \sum_{i=1}^np_i(t)=1.
\end{equation*}
The master equation of the system $\big\{1,2,\cdots, n\big\}$ refers to the dynamical evolution of a probability function: 
\begin{equation}\label{master}
\frac{\ud p_i(t)}{\ud t} =\sum_{j=1}^n \Big[Q_{ji} p_j(t) - Q_{ij} p_i(t)\Big], 
\end{equation}
where there is an initial value probability function $p(0)$, and the nonnegative quantity $Q_{ji}\geq 0$, $1\leq i\neq j\leq n$, is the constant transition probability per time from state $j$ to state $i$. In modeling, this transition rate often represents the internal rate and external conditions, as determined by physical quantity differences between reservoir systems, such as temperature or pressure. 

Master equation \eqref{master} describes a continuous time Markov process on a finite state $\big\{1,2,\cdots ,n\big\}$ with a $Q$-matrix. Here, matrix $Q=(Q_{ij})_{1\leq i,j\leq n}$ is the generator of the continuous time Markov process. Denote $\pi=(\pi_i)_{i=1}^n\in \mathbb{R}^n$ with $\pi_i>0$ and $\sum_{i=1}^n\pi_i=1$ as a stationary point of equation \eqref{master}. We also call $\pi$ as the invariant distribution satisfying 
\begin{equation*}
\sum_{j=1}^n Q_{ji} \pi_j=\sum_{j=1}^nQ_{ij} \pi_i. 
\end{equation*}
A fact is that the master equation is with Onsager reciprocal relations, meaning that there exists a detailed balance condition for the Markov process with $Q$-matrix. See an example in section \ref{sec4}. More examples have been studied in \cite{JS}.   
\begin{definition}[Detailed balance condition]
Suppose that there exists a vector $\pi=(\pi_i)_{i=1}^n\in \mathbb{R}^n$, with $\pi_i>0$ and $\sum_{i=1}^n\pi_i=1$, such that 
\begin{equation}\label{db}
Q_{ij}\pi_i = Q_{ji} \pi_j, \quad \textrm{for $i$, $j\in\{1,2,\cdots, n\}$.}
\end{equation} 
\end{definition}
From now on, we study the master equation \eqref{master} with the detailed balance condition \eqref{db}. 
\subsection{Onsager reciprocal relations}
 One can rewrite equation \eqref{master} into a symmetric dissipation format \cite{Chow2012,EM1, M}, which satisfy Onsager reciprocal relations. This formulation represents gradient flows of free energies on probability manifolds. 

Denote a matrix $\omega=(\omega_{ij})_{1\leq i,j\leq n}\in\mathbb{R}^{n\times n}$, such that
\begin{equation}\label{omega}
\omega_{ij}=   Q_{ij}\pi_i\geq 0, \quad \textrm{for $i$, $j\in\{1,2,\cdots, n\}$.}
\end{equation}
From the detailed balance condition \eqref{db}, the weighted matrix $\omega$ is symmetric, since $\omega_{ij}=Q_{ij}\pi_i=Q_{ji}\pi_j=\omega_{ji}$. For a convex function $f\in C^1(\mathbb{R};\mathbb{R})$ with $f(1)=0$, master equation \eqref{master} can be written as below:
\begin{equation}\label{symmetrica}
 \begin{split}
   \frac{dp_i(t)}{dt}=&\sum_{j=1}^n \Big[Q_{ji}\pi_j \frac{p_j(t)}{\pi_j} - Q_{ij}\pi_i\frac{p_i(t)}{\pi_i}\Big]=\sum_{j=1}^n\omega_{ij}\Big(\frac{p_j(t)}{\pi_j}-\frac{p_i(t)}{\pi_i}\Big)\\
   =&\sum_{j=1}^n\omega_{ij}\frac{\frac{p_j(t)}{\pi_j}-\frac{p_i(t)}{\pi_i}}{f'(\frac{p_j(t)}{\pi_j})- f'(\frac{p_i(t)}{\pi_i})}\cdot\big(f'(\frac{p_j(t)}{\pi_j})- f'(\frac{p_i(t)}{\pi_i})\big)\\
   =&\sum_{j=1}^n \omega_{ij} \theta_{ij}(p(t)) \big(f'(\frac{p_j(t)}{\pi_j})- f'(\frac{p_i(t)}{\pi_i})\big). 
\end{split}
 \end{equation}
In equation \eqref{symmetrica}, we represent a matrix weight function $\theta\in C^1(\mathbb{R}^n; \mathbb{R}_+^{n\times n})$, with $\theta(p)=(\theta_{ij}(p))_{1\leq i,j\leq n}$ satisfying 
\begin{equation}\label{theta}
    \theta_{ij}(p):=\left\{\begin{aligned} 
    &\frac{\frac{p_j}{\pi_j}-\frac{p_i}{\pi_i}}{f'(\frac{p_j}{\pi_j})- f'(\frac{p_i}{\pi_i})}, &\textrm{if $\frac{p_i}{\pi_i}\neq \frac{p_j}{\pi_j}$;}\\
    & \frac{1}{f''(\frac{p_i}{\pi_i})} , &\quad\textrm{if $\frac{p_i}{\pi_i}=\frac{p_j}{\pi_j}$} .
    \end{aligned}\right.
    \end{equation}
In fact, the R.H.S. of symmetric formulation \eqref{symmetrica} can be written as a matrix function times a vector function. This is known as the strong Onsager gradient flow (physical terminology), which contains a symmetric matrix function (Onsager's response matrix) and the gradient of a free energy function (the force).  
To illustrate this point, we first define a force vector function. 
\begin{definition}[Force vector]
Define the free energy function, also named $f$--divergence, on the probability simplex: \begin{equation*}
 \mathrm{D}_{f}(p\|\pi) := \sum_{i=1}^n f(\frac{p_i}{\pi_i}) \pi_i.
 \end{equation*} 
Denote the Euclidean gradient operator of free energy by
\begin{equation*}
\nabla_p\mathrm{D}_{f}(p\|\pi)=\Big(\frac{\partial}{\partial p_i}\mathrm{D}_{f}(p\|\pi)\Big)_{i=1}^n= \Big(f'(\frac{p_i}{\pi_i})\Big)_{i=1}^n\in \mathbb{R}^n, 
\end{equation*}
where $\frac{\partial}{\partial p_i}$ is the Euclidean partial derivative with respect to $p_i$. 
In Onsager principle, one calls vector $\nabla_p\mathrm{D}_{f}(p\|\pi)$ a generalized force. 
\end{definition}  
 We further present the following symmetric matrix function, named Onsager's response matrix.  
\begin{definition}[Onsager's response matrix]
Denote a matrix function $L\in C^1(\mathbb{R}^{n\times n};\mathbb{R}^{n\times n})$. For a matrix $a\in\mathbb{R}^{n\times n}$, denote $L(a)=(L_{ij}(a))_{1\leq i,j\leq n}\in\mathbb{R}^{n\times n}$, such that
\begin{equation*}
L_{ij}(a):=\begin{cases}
-\omega_{ij}a_{ij},&\textrm{if $j\neq i$};\\
\sum_{k=1}^n\omega_{ki}a_{ki},&\textrm{if $j=i$.}
\end{cases}
\end{equation*}
The Onsager's response matrix refers to the composition of matrix functions $L\circ \theta\in C^1(\mathbb{R}^n; \mathbb{R}^{n\times n})$, such that $L(\theta(p)):=(L_{ij}(\theta(p)))_{1\leq i,j\leq n}$,
\begin{equation*}
L_{ij}(\theta(p)):=\begin{cases}
-\omega_{ij}\theta_{ij}(p),&\textrm{if $j\neq i$};\\
\sum_{k=1}^n\omega_{ki}\theta_{ki}(p),&\textrm{if $j=i$.}
\end{cases}
\end{equation*}
For simplicity of notation, we write $L(\theta)=L(\theta(p))$.  
We note that the matrix $L(\theta)$ is symmetric and semi-positive definite. I.e.,
\begin{equation*}
L_{ij}(\theta)=L_{ji}(\theta), 
\quad L(\theta)\succeq 0.
\end{equation*}
The semi-positive definiteness of $L(\theta)$ holds. For any vector $x=(x_i)_{i=1}^n\in \mathbb{R}^n$, we have \begin{equation*}
\begin{split}
x^{\ts}L(\theta)x=&\sum_{i=1}^n x_i \sum_{j=1}^n \omega_{ij}\theta_{ij}(p)(x_i-x_j)\\
=&\frac{1}{2}\sum_{i=1}^n\sum_{j=1}^n x_i  \omega_{ij}\theta_{ij}(p)(x_i-x_j)+\frac{1}{2}\sum_{i=1}^n\sum_{j=1}^n x_j \omega_{ji} \theta_{ji}(p)(x_j-x_i)\\
=&\frac{1}{2}\sum_{i=1}^n\sum_{j=1}^n \omega_{ij}(x_i-x_j)^2\theta_{ij}(p)\geq 0,
\end{split}
\end{equation*}
where we use the fact that $\theta_{ij}(p)=\theta_{ji}(p)\geq 0$, $\omega_{ij}=\omega_{ji}\geq 0$, for any $1\leq i\neq j\leq n$. 

\end{definition}
\begin{proposition}[Onsager reciprocal relations]
Master equation \eqref{master} can be rewritten as  
 \begin{equation} \label{o-gradient}
 \frac{\ud  p(t)}{\ud t} = - L(\theta(p(t))) \nabla_p \mathrm{D}_{f}(p(t)\|\pi), 
 \end{equation}
where $\nabla_p\mathrm{D}_{f}(p\|\pi)$ is the generalized force and $L(\theta)$ is the Onsager's response matrix. In addition, along with the solution of the master equation \eqref{master}, the free energy $\mathrm{D}_{f}(p(t)\|\pi)$ decays in the time variable. 
\begin{equation}\label{dissipation}
\frac{d}{dt}\mathrm{D}_{f}(p(t)\|\pi)=-\nabla_p \mathrm{D}_{f}(p(t)\|\pi)^{\ts}L(\theta(p(t)))\nabla_p \mathrm{D}_{f}(p(t)\|\pi)\leq 0. 
\end{equation}
In other words, $\mathrm{D}_{f}(p(t)\|\pi)$ is a Lyapunov function of master equation \eqref{master}. 
\end{proposition}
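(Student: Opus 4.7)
The plan is to verify directly that the symmetric reformulation already carried out in equation \eqref{symmetrica} is identical to the matrix-vector product $-L(\theta(p))\nabla_p \mathrm{D}_{f}(p\|\pi)$, and then to obtain the dissipation identity from a one-line chain-rule computation together with the semi-positive definiteness of $L(\theta)$ established in the preceding definition block. No new structural ingredient is needed beyond what has already been set up.

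First I would write the Euclidean gradient componentwise as $[\nabla_p \mathrm{D}_{f}(p\|\pi)]_i = f'(p_i/\pi_i)$ and set $u_i := f'(p_i/\pi_i)$. Starting from the definition of $L(\theta)$, I expand $[L(\theta(p))u]_i$ row by row: the diagonal contribution $\sum_{k\neq i}\omega_{ki}\theta_{ki}(p)\,u_i$ combined with the off-diagonal contribution $-\sum_{j\neq i}\omega_{ij}\theta_{ij}(p)\,u_j$ collapses, after invoking the symmetries $\omega_{ki}=\omega_{ik}$ and $\theta_{ki}=\theta_{ik}$, to $\sum_{j\neq i}\omega_{ij}\theta_{ij}(p)(u_i-u_j)$. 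Comparing with the last line of \eqref{symmetrica} (with an overall opposite sign, and noting that the $j=i$ term vanishes) yields exactly \eqref{o-gradient}.

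For the second statement I would differentiate $t\mapsto \mathrm{D}_{f}(p(t)\|\pi)$ along a solution of the master equation. By the Euclidean chain rule,
\begin{equation*}
\frac{d}{dt}\mathrm{D}_{f}(p(t)\|\pi) \;=\; \nabla_p \mathrm{D}_{f}(p(t)\|\pi)^{\ts}\,\frac{d p(t)}{dt}.
\end{equation*}
Substituting \eqref{o-gradient} produces the quadratic form $-\nabla_p \mathrm{D}_{f}^{\ts} L(\theta(p))\nabla_p \mathrm{D}_{f}$, which is non-positive by the semi-positive definiteness of $L(\theta)$ already proved. This simultaneously gives \eqref{dissipation} and identifies $\mathrm{D}_{f}(p(t)\|\pi)$ as a Lyapunov functional.

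The only potentially delicate point is the index and sign bookkeeping when repackaging the sum in \eqref{symmetrica} as a matrix action; but this is routine once one consistently uses the symmetries $\omega_{ij}=\omega_{ji}$ and $\theta_{ij}(p)=\theta_{ji}(p)$. There is no genuinely hard step here: the proposition is essentially a reassembly of the calculation already performed in \eqref{symmetrica} into matrix-vector notation, together with one application of the chain rule and the quadratic-form estimate from the definition of the Onsager response matrix.
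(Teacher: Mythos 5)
Your proposal is correct and follows essentially the same route as the paper: expand $-L(\theta(p))\nabla_p \mathrm{D}_f(p\|\pi)$ componentwise, use the symmetries $\omega_{ij}=\omega_{ji}$ and $\theta_{ij}=\theta_{ji}$ to recover the last line of \eqref{symmetrica}, and then apply the chain rule together with the semi-positive definiteness of $L(\theta)$ for the dissipation inequality. No gaps; the sign and index bookkeeping you describe is exactly what the paper's proof carries out.
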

\begin{proof}
 From the definition of $L(\theta)$, we have 
\begin{equation*}
\begin{split}
 \frac{\ud  p_i(t)}{\ud t} =&\big(-L(\theta(p(t)))\nabla_p\mathrm{D}_{f}(p(t)\|\pi)\big)_i\\
 =&\sum_{j\neq i, j=1}^n L_{ij}(\theta(p(t))) \frac{\partial}{\partial p_j}\mathrm{D}_{f}(p(t)\|\pi)+ L_{ii}(\theta(p(t)))\frac{\partial}{\partial p_i}\mathrm{D}_{f}(p(t)\|\pi)\\
=&\sum_{j=1}^n \omega_{ij} \theta_{ij}(p(t)) \bbs{f'(\frac{p_j(t)}{\pi_j})- f'(\frac{p_i(t)}{\pi_i})}. 
\end{split}
\end{equation*}
Thus, equation \eqref{o-gradient} is a reformulation of the symmetric condition \eqref{symmetrica}. In addition, 
\begin{equation*}
\frac{d}{dt}\mathrm{D}_{f}(p(t)\|\pi)=\nabla_p\mathrm{D}_{f}(p(t)\|\pi)^{\ts}\frac{dp}{dt}=-\nabla_p \mathrm{D}_{f}(p(t)\|\pi)^{\ts}L(\theta(p(t)))\nabla_p \mathrm{D}_{f}(p(t)\|\pi)\leq 0, 
\end{equation*}
where we use the fact that $L(\theta)$ is semi-positive definite. This proves the result. 
\end{proof}
We present an important example of Onsager reciprocal relations, with a selected convex function $f$ and a weight function $\theta$. For example, we choose the $f$-divergence to be the KL divergence. 
\begin{example}[KL divergence mean \cite{EM1, M}]\label{exm2}
Consider $f(z)=z\log z-(z-1)$, where $\log$ is the natural logarithm function. Then the free energy is the f-divergence $\mathrm{D}_f(p\|\pi)$, also named Kullback--Leibler (KL) divergence: 
\begin{equation*}
 \mathrm{D}_{\mathrm{KL}}(p\|\pi)=\sum_{i=1}^np_i\log\frac{p_i}{\pi_i}. 
\end{equation*}
In this case, $\frac{\partial}{\partial p_i} \mathrm{D}_{\mathrm{KL}}(p\|\pi)=\log\frac{p_i}{\pi_i}+1$, and 
\begin{equation*}
    \theta_{ij}(p):= \frac{\frac{p_j}{\pi_j}-\frac{p_i}{\pi_i}}{\log\frac{p_j}{\pi_j}- \log\frac{p_i}{\pi_i}}.
    \end{equation*}
Thus, master equation \eqref{master} satisfies
\begin{equation*}
\frac{dp(t)}{dt}=-L(\theta(p(t)))\cdot\nabla_{p} \mathrm{D}_{\mathrm{KL}}(p(t)\|\pi).
\end{equation*}
\end{example}
We next study the generalization of KL divergence, namely $\alpha$-divergence \cite{IG1}. It introduces the following mean function to define the Onsager response matrix.
\begin{example}[$\alpha$-divergence mean]\label{exm2}
Consider $f(z)=\frac{4}{1-\alpha^2}(\frac{1-\alpha}{2}+\frac{1+\alpha}{2}z-z^{\frac{1+\alpha}{2}})$, where $\alpha\neq 1$. We call the f-divergence $\mathrm{D}_f(p\|\pi)$ as the $\alpha$-divergence function: 
\begin{equation*}
 \mathrm{D}_{\alpha}(p\|\pi)=\frac{4}{1-\alpha^2}\Big(1-\sum_{i=1}^n p_i^{\frac{1+\alpha}{2}}\pi_i^{\frac{1-\alpha}{2}}\Big). 
\end{equation*}
In this case, $\frac{\partial}{\partial p_i} \mathrm{D}_{\alpha}(p\|\pi)=\frac{2}{\alpha-1}(z^{\frac{\alpha-1}{2}}-1)$, and 
\begin{equation*}
    \theta_{ij}(p):= \frac{\alpha-1}{2}\frac{\frac{p_j}{\pi_j}-\frac{p_i}{\pi_i}}{(\frac{p_j}{\pi_j})^{\frac{\alpha-1}{2}}- (\frac{p_i}{\pi_i})^{\frac{\alpha-1}{2}}}.
    \end{equation*}
Thus, master equation \eqref{master} satisfies
\begin{equation*}
\frac{dp(t)}{dt}=-L(\theta(p(t)))\cdot\nabla_{p} \mathrm{D}_{\mathrm{KL}}(p(t)\|\pi).
\end{equation*}
\end{example}
\begin{example}[Geometric mean]
Given a reversible Markov chain with $Q$-matrix and invariant distribution $\pi\in\mathbb{R}_+^n$.  
Denote a geometric mean function 
\begin{equation*}
\theta_{ij}(p):=\sqrt{\frac{p_ip_j}{\pi_i\pi_j}}.
\end{equation*}
One can study a generalized Onsager gradient flow below. Define a free energy function $E\in C^{1}(\mathcal{P}_+;\mathbb{R})$. Consider a dynamical model by 
\begin{equation*}
\frac{d}{dt}p_i=-\Big(L(\theta)\nabla_p E(p)\Big)_i=\sum_{j\in N(i)}\sqrt{\frac{p_ip_j}{\pi_i\pi_j}}\omega_{ij}(\frac{\partial}{\partial p_j}E(p)-\frac{\partial}{\partial p_i}E(p)). 
\end{equation*}
\end{example}
\begin{remark}
We note that Onsager reciprocal relations \cite{Onsager} are not limited to equation \eqref{master}. These relations show that the macroscopic physical evolution equation can be written in the form \eqref{o-gradient} for general choices of weight functions $\theta$ and $f$-divergences (free energies). E.g., one can derive a general weight function $\theta$ from nonlinear master equations. On the other hand, many scientific computing methods also derive new classes of weight functions. Typical examples include the geometric mean, Harmonic mean, and arithmetic mean. See \cite{GLL,EM1, M}. These mean functions are also studied in modeling dynamical density functional theory \cite{DDFT}. 
The geometric calculations in this paper hold for general choices of average functions $\theta=(\theta_{ij})_{(i,j)\in E}$, under the assumption that $\theta_{ij}\in C^2(\mathcal{P}_+)$, for any indices $(i,j)\in E$.    
\end{remark}

\subsection{Network notations}
As in Kirchhoff's circuit laws \cite{JS}, we can apply graph gradient and divergence operators to represent Onsager reciprocal relations. Consider a weighted graph $G=(V, E, \omega)$, where
$V:=\{1,2,\cdots, n\}$ is the vertex set representing the physical system, and
$E:=\{(i,j), \,\, 1 \le i,j,\le n \colon \omega_{ij}>0 \}$ is the edge index set with weights $\omega_{ij}$. Denote the neighborhood set $N(i):= \{j\in V: (i,j) \in E\}$.

We review gradient, divergence, and Laplacian operators on graphs. Given a function $\Phi \colon V \to \mathbb{R}$, denote $\Phi=(\Phi_i)_{i=1}^n\in \mathbb{R}^n$. Define a weighted gradient as a function $\nabla_\omega \Phi \colon E \to \mathbb{R}$, 
\begin{equation*}
(i,j)  \, \mapsto \,\,  (\nabla_\omega\Phi)_{ij} :=\sqrt{\omega_{ij}}\,(\Phi_j-\Phi_i). 
\end{equation*} 
We call $\nabla_\omega\Phi$ a potential vector field on $E$. A general vector field is an antisymmetric function on $E$ such that 
$v=\big( v_{ij} \big)_{(i,j)\in E}$: 
\begin{equation*}
v_{ij}=-v_{ji}, \quad (i,j) \in E. 
\end{equation*}
The divergence of a vector field $v$ is a function $\mathrm{div}_\omega(v) \colon E \to \mathbb{R}$,
\begin{equation*}
i \, \mapsto \, \mathrm{div}_\omega(v)_i := \sum_{j\in N(i)}\sqrt{\omega_{ij}}\, v_{ij}.
\end{equation*}
For a function $\Phi$ on $V$, the weighted graph Laplacian $\mathrm{div}_\omega\circ\nabla_\omega \colon V \to \mathbb{R}$ satisfies 
\begin{equation*}
 i\, \mapsto \,
\mathrm{div}_\omega\bbs{\nabla_\omega\Phi }_i
=\sum_{j\in N(i)}\sqrt{\omega_{ij}}(\nabla_\omega\Phi)_{ij}=\sum_{j\in N(i)}\omega_{ij}\,(\Phi_j-\Phi_i).
\end{equation*}
We note that $\Delta_\omega\in\mathbb{R}^{n\times n}$ is a negative semi-definite matrix. 
Clearly, the graph divergence operator $\mathrm{div}_\omega$ is the adjoint of the gradient operator. In other words,  
\begin{equation*}
\begin{split}
\sum_{i=1}^n \Phi_i \mathrm{div}_\omega(v)_i=&\sum_{i=1}^n\sum_{j\in N(i)}\Phi_i\sqrt{\omega_{ij}}\, v_{ij}\\
=&\frac{1}{2}\Big[\sum_{i=1}^n\sum_{j\in N(i)}\Phi_i\sqrt{\omega_{ij}}\, v_{ij}+\sum_{j=1}^n\sum_{i\in N(j)}\Phi_j\sqrt{\omega_{ji}}\, v_{ji}\Big]\\
=&-\frac{1}{2}\sum_{(i,j)\in E} (\nabla_\omega \Phi)_{ij}v_{ij},
\end{split}
\end{equation*}
where we adopt the notation that $\sum_{(i,j)\in E}=\sum_{i=1}^n\sum_{j\in N(i)}=\sum_{j=1}^n\sum_{i\in N(j)}$, and use the fact that $v_{ij}=-v_{ji}$. 
The operator $\mathrm{div}_\omega\bbs{\nabla_\omega}\in\mathbb{R}^{n\times n}$ is negative definite, since for any vector $\Phi\in\mathbb{R}^n$, 
\begin{equation*}
\sum_{i=1}^n \Phi_i \mathrm{div}_\omega(\nabla_\omega \Phi)_i=-\frac{1}{2}\sum_{(i,j)\in E} (\nabla_\omega \Phi)^2_{ij}\leq 0.
\end{equation*}

We write the Onsager's response matrix as $L(\theta):=-\mathrm{div}_\omega(\theta \nabla_\omega)\colon V \to \mathbb{R}$. In other words, denote a vector $\Phi\in\mathbb{R}^n$,  we write $\theta(p)\nabla_{\omega}\Phi$ as a vector field, 
\begin{equation*}
(\theta(p)\nabla_\omega\Phi)_{ij}=\theta_{ij}(p)(\nabla_\omega\Phi)_{ij}, 
\end{equation*}
and 
 \begin{equation*}
 i\, \mapsto \,
\mathrm{div}_\omega\bbs{\theta(p)\nabla_\omega\Phi}_i=
\sum_{j\in N(i)} \omega_{ij}(\Phi_j-\Phi_i)\theta_{ij}(p). 
\end{equation*}
Under the above notation, the master equation \eqref{master} satisfies 
\begin{equation}\label{not1}
\frac{dp(t)}{dt}=\mathrm{div}_\omega\big(\theta(p(t)) \nabla_\omega \nabla_p\mathrm{D}_{f}(p(t)\|\pi)\big). 
\end{equation}
Thus, the time derivative of free energy in \eqref{dissipation} can be written as  
\begin{equation}\label{not2}
\begin{split}
\frac{d}{dt}\mathrm{D}_{f}(p(t)\|\pi)
=&-\frac{1}{2}\sum_{(i,j)\in E}\Big(\nabla_\omega\nabla_p\mathrm{D}_{f}(p(t)\|\pi)\Big)_{ij}^2\theta_{ij}(p(t))\leq 0,
\end{split}
\end{equation}
where 
\begin{equation*}
\Big(\nabla_\omega\nabla_p\mathrm{D}_{f}(p(t)\|\pi)\Big)_{ij}:=\sqrt{\omega_{ij}}(\frac{\partial}{\partial p_j}-\frac{\partial}{\partial p_i})\mathrm{D}_{f}(p(t)\|\pi). 
\end{equation*}
Equation \eqref{not1} is known as the Wasserstein gradient flow on finite state spaces \cite{am2006, EM1, M}. In addition, the dissipation of free energy in \eqref{not2} equals the squared norm in the probability simplex with optimal-transport-induced Riemannian metrics on finite states. We present these properties in the next section. 

\subsection{Thermodynamical probability manifolds}
We introduce probability manifolds based on Onsager's response matrices $L(\theta)$. We shall define gradient operators, arc-lengths of curves, and distance functions between two probabilities in the simplex set.  

Denote the probability simplex set without boundary by
\begin{equation*}
\sP_+:=\Big\{(p_i)_{i=1}^n\in \mathbb{R}^n \colon \sum_{i=1}^np_i=1, \quad p_i> 0\Big\}.
\end{equation*}
Denote the tangent space at $p\in\mathcal{P}_+$ by
\begin{equation*}
T_p\mathcal{P}_+ = \Big\{(\sigma_i)_{i=1}^n\in \mathbb{R}^n\colon  \sum_{i=1}^n\sigma_i=0 \Big\}.
\end{equation*}
We next apply $L(\theta)$ to define an inner product in the probability simplex. When $p\in \sP_+$, $L(\theta)$ is a symmetric semi-positive matrix, whose diagonalization satisfies  
\begin{equation}\label{diag}
L(\theta)=U(p)\begin{pmatrix}
0 & & &\\
& {\lambda_{1}(p)}& &\\
& & \ddots & \\
& & & {\lambda_{n-1}(p)}
\end{pmatrix}U(p)^{\ts},
\end{equation}
where $0<\lambda_1(p)\leq\cdots\leq \lambda_{n-1}(p)$ are eigenvalues of $L(\theta)$ in the ascending order, 
and $U(p)=(u_0(p),u_1(p),\cdots, u_{n-1}(p))\in \mathbb{R}^{n\times n}$ is the orthogonal matrix, whose column are corresponding eigenvectors, with $u_0=\frac{1}{\sqrt{n}}(1,\cdots, 1)^{\ts}$. Denote the pseudo-inverse matrix of $L(\theta)$ by
\begin{equation*}
R(\theta):=L(\theta)^{\dagger}=U(p)\begin{pmatrix}
0 & & &\\
& \frac{1}{\lambda_{1}(p)}& &\\
& & \ddots & \\
& & & \frac{1}{\lambda_{n-1}(p)}
\end{pmatrix}U(p)^{\ts},
\end{equation*}
where $\dagger$ represents the symbol for the pseudo-inverse operator of a matrix.

For any vector $\Phi\in\mathbb{R}^n$ up to a constant shift, we denote a tangent vector field in $T_p\mathcal{P}_+$ for any $p\in\mathcal{P}_+$ by
\begin{equation*}
\V_{\Phi}:=L(\theta)\Phi=-\mathrm{div}_\omega (\theta\nabla_\omega \Phi)\in T_p\mathcal{P}_+. 
\end{equation*}
We remark that the map $\Phi\rightarrow \V_{\Phi}$ is an isomorphism $\mathbb{R}^n/\mathbb{R}\rightarrow T_{ p}\mathcal{P}_+$. This is true from the diagonalization of $L(\theta)$ in \eqref{diag}, since there is only one constant eigenvector $u_0(p)$ with eigenvalue $0$ for any vector $p\in\sP_+$. 

\begin{definition}[Finite state Riemannian metric tensor]
Define the inner product $\g:\mathcal{P}_+\times T_p\mathcal{P}_+\times T_p\mathcal{P}_+\rightarrow\mathbb{R}$ by 
\begin{equation*}
\begin{split}
\g(p)(\V_{\Phi_1}, \V_{\Phi_2}):=\langle \V_{\Phi_1}, \V_{\Phi_2}\rangle(p):=\V^{\ts}_{\Phi_1}R(\theta)\V_{\Phi_2},
\end{split}
\end{equation*}
where $\Phi_k\in\mathbb{R}^n/\mathbb{R}$, $k=1,2$, where $\Phi_k$ is a vector in $\mathbb{R}^n$ up to a constant vector shrift in the direction of $u_0$, such that 
\begin{equation*}
\V_{\Phi_k}=L(\theta)\Phi_k=-\mathrm{div}_\omega (\theta\nabla_\omega \Phi_k)\in T_p\mathcal{P}_+. 
\end{equation*}
From the definition of pseudo-inverse matrix, $L(\theta)R(\theta)L(\theta)=L(\theta)$. Then  
\begin{equation*}
\begin{split}
\langle \V_{\Phi_1}, \V_{\Phi_2}\rangle(p)=&\Phi_1^{\ts}L(\theta)R(\theta)L(\theta)\Phi_2=\Phi_1^{\ts}L(\theta)\Phi_2\\
=&\frac{1}{2}\sum_{(i,j)\in E} (\nabla_\omega\Phi_1)_{ij}(\nabla_\omega\Phi_2)_{ij}\theta_{ij}(p).
\end{split}
\end{equation*}
\end{definition}
In literature \cite{EM1, M}, the inner product $\g$ defines a Wasserstein-$2$  type metric on the simplex set $\mathcal{P}_+$. 
 It is also a finite-dimensional Riemannian manifold. To emphasize its connection with the Onsager reciprocal relation, we call $(\mathcal{P}_+, \g)$ the {\em thermodynamical probability manifold}. 

We next illustrate some basic quantities in the manifold $(\mathcal{P}_+, \g)$. 
We first study the gradient operator. Mathematically, the Onsager reciprocal relation shows that master equation \eqref{master} is a gradient descent flow of $f$--divergence in $(\mathcal{P}_+, \g)$. Denote $\bar\grad=\grad_\g\colon \mathcal{P}_+\times C^\infty(\mathcal{P}_+; \mathbb{R})\rightarrow T_p\sP_+$. 
\begin{proposition}[Gradient operators]\label{prop2}
Denote an energy function $F\in C^\infty (\mathcal{P}_+; \mathbb{R})$. The gradient operator of function $F$ in $(\mathcal{P}_+, \g)$ satisfies  
\begin{equation*}
\bar\grad F( p)=L(\theta)\nabla_p F(p)=-\mathrm{div}_\omega(\theta \nabla_\omega\nabla_p F(p)),\end{equation*}
where 
\begin{equation*}
\Big(\mathrm{div}_\omega(\theta \nabla_\omega\nabla_p F(p))\Big)_i
=\sum_{j\in N(i)}w_{ij}(\frac{\partial}{\partial  p_j}-\frac{\partial}{\partial  p_i})F( p)\theta_{ij}( p).
\end{equation*}
In particular, if $F(p)=\mathrm{D}_{f}(p\|\pi)$, then the negative gradient direction of $f$-divergence satisfies the R.H.S. of master equation \eqref{master}. 
\begin{equation}\label{Onsager}
\begin{split}
-\bar\grad \mathrm{D}_{f}(p\|\pi)=&-L(\theta)\nabla_{p}\mathrm{D}_{f}(p\|\pi)\\
=&\mathrm{div}_\omega(\theta \nabla_\omega\nabla_p \mathrm{D}_{f}(p\|\pi))\\
=&\Big(\sum_{j=1}^n \Big[Q_{ji} p_j - Q_{ij} p_i\Big]\Big)_{i=1}^n.
\end{split}
\end{equation}
\end{proposition}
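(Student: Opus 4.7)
The plan is to use the defining duality of the Riemannian gradient on $(\mathcal{P}_+, \g)$: the vector $\bar\grad F(p)\in T_p\mathcal{P}_+$ is uniquely characterized by
$$\g(p)\big(\bar\grad F(p),\sigma\big) \;=\; dF(p)\,\sigma \;=\; \nabla_p F(p)^{\ts}\sigma$$
for every tangent vector $\sigma\in T_p\mathcal{P}_+$. Since the excerpt has already established that $\Phi\mapsto\V_\Phi=L(\theta)\Phi$ is an isomorphism $\mathbb{R}^n/\mathbb{R}\to T_p\mathcal{P}_+$, the test directions can be written as $\sigma=\V_\Phi=L(\theta)\Phi$, and the unknown gradient as $\bar\grad F(p)=\V_\Psi=L(\theta)\Psi$ for some $\Psi\in\mathbb{R}^n/\mathbb{R}$. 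Matching the two sides of the duality in this parameterization and solving for $\Psi$ is the core of the argument.

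The computation is short. On one side, the inner-product formula given in the definition of $\g$, together with the pseudo-inverse identity $L(\theta)R(\theta)L(\theta)=L(\theta)$, gives $\g(p)(\V_\Psi,\V_\Phi)=\Psi^{\ts}L(\theta)\Phi$. On the other side, the Euclidean differential evaluated at $\V_\Phi=L(\theta)\Phi$ is $\nabla_p F(p)^{\ts}L(\theta)\Phi$. Requiring these to agree for every $\Phi\in\mathbb{R}^n/\mathbb{R}$ forces $L(\theta)\big(\Psi-\nabla_p F(p)\big)=0$, so $\Psi$ coincides with $\nabla_p F(p)$ modulo the kernel of $L(\theta)$, which by the diagonalization \eqref{diag} is exactly the constant direction $u_0$. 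Since $L(\theta)$ annihilates constants, the resulting tangent vector $\bar\grad F(p)=L(\theta)\Psi=L(\theta)\nabla_p F(p)$ is unambiguous. Replacing $L(\theta)$ by $-\mathrm{div}_\omega(\theta\nabla_\omega\cdot)$, using the network identity established in the previous subsection, yields the divergence form stated in the proposition.

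For the specialization to $F(p)=\mathrm{D}_f(p\|\pi)$, the Euclidean gradient is $(f'(p_j/\pi_j))_{j=1}^n$. Substituting into $L(\theta)\nabla_p F(p)$ and using the symmetry $\omega_{ij}\theta_{ij}(p)=\omega_{ji}\theta_{ji}(p)$ collapses the $i$-th component to $\sum_{j\in N(i)}\omega_{ij}\theta_{ij}(p)\big(f'(p_i/\pi_i)-f'(p_j/\pi_j)\big)$, which is exactly the negative of the summand produced in the symmetric reformulation \eqref{symmetrica}. Rewriting $\omega_{ij}(p_i/\pi_i)=Q_{ij}p_i$ and $\omega_{ij}(p_j/\pi_j)=Q_{ji}p_j$ via \eqref{omega} and \eqref{db}, the negative gradient $-\bar\grad\mathrm{D}_f(p\|\pi)$ recovers the right-hand side of the master equation \eqref{master}. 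The only real subtlety is the one-dimensional kernel of $L(\theta)$: different representatives $\Psi\in\mathbb{R}^n$ may map to the same tangent vector, so I would be careful to phrase the duality equation as one on the quotient $\mathbb{R}^n/\mathbb{R}$, where the map $\Phi\mapsto\V_\Phi$ is genuinely invertible and the solution is unique.
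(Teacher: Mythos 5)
Your proposal is correct and follows essentially the same route as the paper: both characterize $\bar\grad F$ through the metric tensor $R(\theta)=L(\theta)^{\dagger}$ (the paper writes $R(\theta)\bar\grad F=\nabla_pF+cu_0$ and applies the pseudo-inverse, you phrase the same identity as a duality pairing on $\mathbb{R}^n/\mathbb{R}$), and both reduce the $f$-divergence case to the symmetric reformulation \eqref{symmetrica}. Your explicit treatment of the kernel direction $u_0$ is a slightly more careful rendering of the paper's constant-shift step, but it is not a different argument.
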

\begin{proof}
The proof follows from the definition of the gradient operator in a Riemannian manifold. Since $R(\theta)=L(\theta)^{\dagger}$ is a Riemannian metric tensor in $(\mathcal{P}_+, \g)$, then 
\begin{equation*}
R(\theta)\cdot\bar\grad F( p)=\nabla_p F(p)+c u_0, 
\end{equation*}
where $c\in\mathbb{R}$ is a constant. Thus 
\begin{equation*}
\bar \grad F( p)=R(\theta)^{\dagger}\nabla_p F(p)=(L(\theta)^{\dagger})^{\dagger}\nabla_pF(p)=L(\theta)\nabla_pF(p). 
\end{equation*}
When $F(p)=\mathrm{D}_{f}(p\|\pi)$, from the definition of $\bar\grad F$ and the R.H.S. of \eqref{symmetrica}, we derive equation \eqref{Onsager}.
\end{proof}
Onsager reciprocal relation introduces a class of Riemannian manifolds $(\mathcal{P}_+, \g)$. One can define functions related to the dynamical behaviors or properties of the master equation \eqref{master} or to general curves in probability simplex sets. 
E.g., one defines the arc length function in $(\mathcal{P}_+, \g)$.  
\begin{proposition}[Arc length function]
For a curve $\gamma\in C^{1}([0, T];\sP_+)$, where $T>0$, the arc length $\mathrm{Len}_\g(\gamma)$ of curve $\gamma$ is defined as   
\begin{equation}\label{L}
\mathrm{Len}_\g(\gamma):=\int_0^{T} \big|\dot\gamma(t)^{\ts}R(\theta(\gamma(t)))\dot\gamma(t)\big|^{\frac{1}{2}}dt, 
\end{equation}
where we denote $\dot\gamma(t)=\frac{d}{dt}\gamma(t)$. If we denote $\Phi\in C^1([0, T]; \mathbb{R}^n)$, such that $R(\theta(\gamma))\dot\gamma=\Phi$, i.e., $\dot\gamma=L(\theta(\gamma))\Phi$, then 
\begin{equation*}
\mathrm{Len}_\g(\gamma)=\int_0^{T} \big|\Phi(t)^{\ts}L(\theta(\gamma(t))\Phi(t))\big|^{\frac{1}{2}}dt.
\end{equation*}
\end{proposition}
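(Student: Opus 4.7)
The plan is to derive both formulas directly from the Riemannian definition of arc length together with the explicit form of the metric tensor $\g$ supplied earlier in the excerpt. The statement is essentially a bookkeeping computation using the identity $L(\theta)R(\theta)L(\theta)=L(\theta)$, so there is no substantive analytic obstacle; the only point that needs a brief justification is that $\dot\gamma$ can always be represented as $L(\theta)\Phi$.

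First I would note that differentiating the simplex constraint $\sum_{i=1}^n\gamma_i(t)=1$ gives $\sum_{i=1}^n\dot\gamma_i(t)=0$, so $\dot\gamma(t)\in T_{\gamma(t)}\mathcal{P}_+$ for every $t\in[0,T]$. The standard Riemannian definition of arc length is
\begin{equation*}
\mathrm{Len}_\g(\gamma)=\int_0^{T}\sqrt{\g(\gamma(t))(\dot\gamma(t),\dot\gamma(t))}\,dt.
\end{equation*}
By the definition of the metric tensor $\g$, for any tangent vector $\V=L(\theta)\Phi\in T_p\mathcal{P}_+$ one has $\g(p)(\V,\V)=\V^{\ts}R(\theta)\V$. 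Applying this with $\V=\dot\gamma(t)$ immediately yields the first formula for $\mathrm{Len}_\g(\gamma)$.

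Next I would justify the representation $\dot\gamma(t)=L(\theta(\gamma(t)))\Phi(t)$. By the diagonalization \eqref{diag}, the kernel of $L(\theta(p))$ is $\mathrm{span}(u_0)$ and its range is exactly $T_p\mathcal{P}_+=u_0^{\perp}$. Since $\dot\gamma(t)\in T_{\gamma(t)}\mathcal{P}_+$, it lies in the range of $L(\theta(\gamma(t)))$; hence there exists $\Phi(t)\in\mathbb{R}^n/\mathbb{R}$ (unique modulo constant shifts along $u_0$) with $\dot\gamma(t)=L(\theta(\gamma(t)))\Phi(t)$. The regularity $\Phi\in C^1([0,T];\mathbb{R}^n)$ follows by choosing $\Phi(t)=R(\theta(\gamma(t)))\dot\gamma(t)$ and using continuity of the pseudo-inverse on the set where the nonzero spectrum of $L(\theta)$ stays bounded away from $0$, which holds because $\gamma([0,T])$ is a compact subset of $\mathcal{P}_+$.

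Finally I would substitute into the first formula and use the pseudo-inverse identity $L(\theta)R(\theta)L(\theta)=L(\theta)$:
\begin{equation*}
\dot\gamma(t)^{\ts}R(\theta(\gamma(t)))\dot\gamma(t)=\Phi(t)^{\ts}L(\theta(\gamma(t)))R(\theta(\gamma(t)))L(\theta(\gamma(t)))\Phi(t)=\Phi(t)^{\ts}L(\theta(\gamma(t)))\Phi(t),
\end{equation*}
which yields the second formula. The ambiguity in $\Phi$ modulo $u_0$ is harmless since $L(\theta)u_0=0$ implies the quadratic form $\Phi^{\ts}L(\theta)\Phi$ is unchanged by such shifts. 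This completes the proposed proof; no step poses a real difficulty beyond the range/kernel remark used to recover $\Phi$ from $\dot\gamma$.
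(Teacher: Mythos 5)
Your proof is correct and follows essentially the same route the paper takes implicitly: the paper states this proposition without a separate proof, relying on the identity $L(\theta)R(\theta)L(\theta)=L(\theta)$ already recorded in the definition of the metric tensor, which is exactly the substitution you perform. Your added justification that $\dot\gamma(t)$ lies in the range of $L(\theta(\gamma(t)))$ (so that a representative $\Phi$ exists and the ambiguity along $u_0$ is harmless) is a detail the paper leaves unstated.
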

Using arc length functions, one can further study the minimal arc length between two points in a simplex set. 
\begin{definition}[Minimal arc length problems and Distances]
Given two points $p^0$, $p^1\in\mathcal{P}_+$. The minimal arc length problem satisfies the following optimization problem: 
\begin{equation*}
\mathrm{Dist}(p^0,  p^1):=\inf_{\gamma} \Big\{\mathrm{Len}_\g(\gamma)\colon \gamma(0)=p^0,~~\gamma(1)=p^1\Big\},
\end{equation*}
where the minimization is taken over all continuous differentiable curves in the probability simplex set, $\gamma(t)\in C^1([0,1];\mathcal{P}_+)$, $t\in [0,1]$, connecting boundary points $\gamma(0)=p^0$, $\gamma(1)=p^1$. 
We note that the minimal value function $\mathrm{Dist}=\mathrm{Dist}_\g\colon \sP_+\times\sP_+\rightarrow\mathbb{R}$ represents the Wasserstein-$2$  distance on the simplex set. 
\end{definition}
In literature, the minimal arc length problem and its generalizations have been widely studied in \cite{GLL, EM1}. They belong to the study of dynamical optimal transport on discrete states. 

\section{Geometric calculations in Thermodynamical probability manifold}\label{sec3}
In this section, we are ready to present the main result of this paper. We derive Levi-Civita connections, parallel transport, geodesic curves, and curvature tensors in probability manifolds.  

\subsection{Levi-Civita connection, parallel transport and geodesics}
Given $\Phi\in \mathbb{R}^{n}$, we define a vector field $\V_{\Phi}:=L(\theta)\Phi\in {T}_ p \mathcal{P}_+$. Suppose $F\in C^{\infty}(\mathcal{P}_+; \mathbb{R})$. Denote the directional derivative of $F$ at direction $\V_\Phi$ as 
\begin{equation*}
\begin{split}
(\V_{\Phi}F)( p):=&\frac{d}{d\epsilon}|_{\epsilon=0}F( p+\epsilon L(\theta)\Phi)=\nabla_p F(p)^{\ts}L(\theta)\Phi.
\end{split}
\end{equation*}
We also define the first order directional derivative of matrix function $\theta$ at direction $\V_\Phi$ as $\V_\Phi\theta=((\V_\Phi\theta)_{ij})_{1\leq i,j\leq n}\in\mathbb{R}^{n\times n}$, such that
\begin{equation*}
(\V_\Phi\theta)_{ij}:=\frac{\partial\theta_{ij}}{\partial p_i}(\V_\Phi)_i+\frac{\partial\theta_{ij}}{\partial p_j}(\V_\Phi)_j.
\end{equation*}

We first compute commutators of two vector fields in $(\mathcal{P}_+, \g)$. Denote the commutator by $[\cdot, \cdot]\colon \sP_+\times T_p\sP_+\times T_p\sP_+\rightarrow T_p\sP_+$.
\begin{lemma}
Given vectors $\Phi_1$, $\Phi_2\in \mathbb{R}^{n}$, the commutator $[\V_{\Phi_1}, \V_{\Phi_2}]\in T_p\mathcal{P}_+$ satisfies 
\begin{equation}\label{comm}
[\V_{\Phi_1}, \V_{\Phi_2}]= L(\V_{\Phi_1}\theta)\Phi_2-L(V_{\Phi_2}\theta)\Phi_1.
\end{equation}
In details, denote $[\V_{\Phi_1}, \V_{\Phi_2}]=([\V_{\Phi_1}, \V_{\Phi_2}]_i)_{i=1}^n$, then 
\begin{equation*}
\begin{split}
[\V_{\Phi_1}, \V_{\Phi_2}]_i=&\quad\sum_{j\in N(i)}\sum_{i'\in N(i)}\frac{\partial\theta_{ij}}{\partial p_i}\theta_{ii'}\Big((\nabla_\omega\Phi_1)_{i'i}(\nabla_\omega\Phi_2)_{ji}-(\nabla_\omega\Phi_2)_{i'i}(\nabla_\omega\Phi_1)_{ji}\Big)\\
&+\sum_{j\in N(i)}\sum_{j'\in N(j)}
\frac{\partial\theta_{ij}}{\partial p_j}\theta_{jj'}\Big((\nabla_\omega\Phi_1)_{j'j} (\nabla_\omega\Phi_2)_{ji}-(\nabla_\omega\Phi_2)_{j'j} (\nabla_\omega\Phi_1)_{ji}\Big).
\end{split}
\end{equation*}
\end{lemma}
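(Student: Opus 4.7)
The plan is to compute the Lie bracket $[\V_{\Phi_1},\V_{\Phi_2}]$ directly from the defining identity
\[
[\V_{\Phi_1},\V_{\Phi_2}]F \;=\; \V_{\Phi_1}\bigl(\V_{\Phi_2}F\bigr)-\V_{\Phi_2}\bigl(\V_{\Phi_1}F\bigr)
\]
applied to an arbitrary test function $F\in C^{\infty}(\mathcal{P}_+;\mathbb{R})$, and then read off the coordinate expression by unpacking the definitions of $L(\cdot)$ and $\V_{\Phi}\theta$.

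The key observation I would exploit is that $L$ is \emph{linear} in its matrix argument, so when we differentiate the scalar $(\V_{\Phi_2}F)(p)=\nabla_p F(p)^{\ts}L(\theta(p))\Phi_2$ in the direction $\V_{\Phi_1}$, the chain rule produces exactly two contributions:
\[
\V_{\Phi_1}\bigl(\V_{\Phi_2}F\bigr)(p) \;=\; \Phi_1^{\ts}L(\theta(p))\,\Hess F(p)\,L(\theta(p))\Phi_2 \;+\; \nabla_p F(p)^{\ts}L(\V_{\Phi_1}\theta)\Phi_2,
\]
where I have used $\tfrac{d}{d\epsilon}\big|_{0}L(\theta(p+\epsilon\V_{\Phi_1}))=L(\V_{\Phi_1}\theta)$ by linearity of $L$, together with the directional derivative rule for $\theta$ stated just before the lemma. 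Subtracting the analogous expression with $\Phi_1$ and $\Phi_2$ swapped, the Hessian cross-term cancels because $\Hess F$ is symmetric and $L(\theta)$ is symmetric, so $\Phi_1^{\ts}L(\theta)\Hess F\,L(\theta)\Phi_2=\Phi_2^{\ts}L(\theta)\Hess F\,L(\theta)\Phi_1$. What remains is
\[
[\V_{\Phi_1},\V_{\Phi_2}]F \;=\; \nabla_p F(p)^{\ts}\bigl(L(\V_{\Phi_1}\theta)\Phi_2-L(\V_{\Phi_2}\theta)\Phi_1\bigr),
\]
and since $F$ is arbitrary and both terms on the right already lie in $T_p\mathcal{P}_+$ (because $L$ of any symmetric matrix has row sums zero, hence preserves the simplex tangent constraint), this establishes the stated identity \eqref{comm}.

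For the component formula I would then substitute the definitions. Starting from
\[
\bigl(L(\V_{\Phi_1}\theta)\Phi_2\bigr)_i \;=\; \sum_{j\in N(i)}\omega_{ij}(\V_{\Phi_1}\theta)_{ij}\bigl(\Phi_{2,i}-\Phi_{2,j}\bigr),
\]
I would expand $(\V_{\Phi_1}\theta)_{ij}=\tfrac{\partial\theta_{ij}}{\partial p_i}(\V_{\Phi_1})_i+\tfrac{\partial\theta_{ij}}{\partial p_j}(\V_{\Phi_1})_j$ and substitute the formula $(\V_{\Phi_1})_i=\sum_{i'\in N(i)}\omega_{ii'}\theta_{ii'}(\Phi_{1,i}-\Phi_{1,i'})$, grouping the result into the two types of edge pairs (those sharing vertex $i$ and those sharing vertex $j$). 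Rewriting the probability differences via the graph gradient $(\nabla_\omega\Phi)_{\ell m}=\sqrt{\omega_{\ell m}}(\Phi_m-\Phi_\ell)$ yields the two double sums displayed in the lemma. Subtracting the same expression with $\Phi_1\leftrightarrow\Phi_2$ produces the antisymmetric combinations $(\nabla_\omega\Phi_1)_{i'i}(\nabla_\omega\Phi_2)_{ji}-(\nabla_\omega\Phi_2)_{i'i}(\nabla_\omega\Phi_1)_{ji}$, and likewise for the $j'$-sum.

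The main obstacle is really bookkeeping rather than mathematical difficulty: keeping track of which index is being summed, ensuring the edge-weight factors combine consistently (so that the two terms in $(\V_{\Phi_1}\theta)_{ij}$ produce the two double sums in the stated formula), and verifying that the symmetry $\theta_{ij}=\theta_{ji}$ is respected throughout. The only conceptual input beyond that is the cancellation of the Hessian term, which is guaranteed by the symmetry of $\Hess F$ and $L(\theta)$; every other manipulation is the standard chain-rule/product-rule calculation one expects for a Lie bracket of vector fields that are built from a state-dependent linear operator applied to fixed potentials.
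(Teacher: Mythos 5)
Your proposal is correct and follows essentially the same route as the paper: apply the bracket to a test function $F$, use linearity of $L$ in its matrix argument together with the chain rule to produce the Hessian term and the $L(\V_{\Phi}\theta)$ term, cancel the Hessian contributions by symmetry, and then unpack $(L(\V_{\Phi_1}\theta)\Phi_2)_i$ via the definitions of $(\V_{\Phi_1}\theta)_{ij}$ and $(\V_{\Phi_1})_i$ to obtain the two double sums. The bookkeeping you describe matches the paper's computation exactly.
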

\begin{proof}
We note that 
\begin{equation*}
\begin{split}
([\V_{\Phi_1}, \V_{\Phi_2}])F( p)=&(\V_{\Phi_1}(\V_{\Phi_2}F))( p)-(\V_{\Phi_1}(\V_{\Phi_2}F))( p)\\
=&\frac{d}{d\epsilon}|_{\epsilon=0} (\V_{\Phi_2}F)( p+\epsilon L(\theta)\Phi_1)-\frac{d}{d\epsilon}|_{\epsilon=0} (\V_{\Phi_1}F)( p+\epsilon_2 L(\theta)\Phi_2)  \\
=&\V_{\Phi_1}(\nabla_ pF(p)^{\ts}L(\theta)\Phi_2)-\V_{\Phi_2}(\nabla_pF(p)^{\ts} L(\theta)\Phi_1)\\
=&\quad (L(\theta)\Phi_1)^{\ts}\nabla^2_{ pp}F(p)(L(\theta)\Phi_2)+\nabla_ pF(p)^{\ts}L(\V_{\Phi_1}\theta)\Phi_2\\
&-(L(\theta)\Phi_2)^{\ts}\nabla^2_{ pp}F(p)(L(\theta)\Phi_1)- \nabla_ pF(p)^{\ts}L(\V_{\Phi_2}\theta)\Phi_1  \\
=&\nabla_ pF(p)^{\ts}\Big(L(\V_{\Phi_1}\theta)\Phi_2- L(\V_{\Phi_2}\theta)\Phi_1\Big), 
\end{split}
\end{equation*}
where the last equality is based on the fact that $\nabla^2_{ pp}F$ is a symmetric matrix. In detail, 
\begin{equation*}
\begin{split}
&(L(\V_{\Phi_1}\theta)\Phi_2)_i
\\=&\sum_{j\in N(i)} (\V_{\Phi_1}\theta)_{ij} (\nabla_\omega\Phi_2)_{ji}\sqrt{\omega_{ij}}\\=&\sum_{j\in N(i)} (\frac{\partial\theta_{ij}}{\partial p_i}(\V_{\Phi_1})_i+\frac{\partial\theta_{ij}}{\partial p_j}(\V_{\Phi_1})_j)(\nabla_\omega\Phi_2)_{ji}\sqrt{\omega_{ij}}\\
=&\sum_{j\in N(i)}\Big(\sum_{i'\in N(i)}\frac{\partial\theta_{ij}}{\partial p_i}\theta_{ii'}(\nabla_\omega\Phi_1)_{i'i}(\nabla_\omega\Phi_2)_{ji}+\sum_{j'\in N(j)}
\frac{\partial\theta_{ij}}{\partial p_j}\theta_{jj'}(\nabla_\omega\Phi_1)_{j'j} (\nabla_\omega\Phi_2)_{ji}\Big)\sqrt{\omega_{ij}}.
\end{split}
\end{equation*}
Thus, by switching the indices $1$ and $2$ and subtracting them, we finish the proof. 
\end{proof}

We next derive the Levi-Civita connection in $(\mathcal{P}_+, \g)$. We need the following definition. 
\begin{definition}
For any $p\in \mathcal{P}_+$, define $\Gamma\colon \mathbb{R}^n\times\mathbb{R}^n\times \sP_+\rightarrow\mathbb{R}^n$, such that $\Gamma(\Phi_1, \Phi_2,p)=(\Gamma(\Phi_1,\Phi_2,p)_i)_{i=1}^n\in \mathbb{R}^n$, with  
\begin{equation*}
\Gamma(\Phi_1, \Phi_2, p)_i:=\sum_{j\in N(i)}(\nabla_\omega\Phi_1)_{ij}(\nabla_\omega\Phi_2)_{ij}\frac{\partial}{\partial p_i}\theta_{ij}(p).
\end{equation*}
\end{definition}
Denote the Levi-Civita connection by $\bar \nabla=\nabla^{\g}\colon \sP_+\times T_p\sP_+\times T_p\sP_+\rightarrow T_p\sP_+$. 
\begin{lemma}
The Levi-Civita connection $\bar \nabla$ in $(\sP_+, \g)$ satisfies
\begin{equation}\label{V12}
\bar\nabla_{\V_{\Phi_1}}\V_{\Phi_2}=\frac{1}{2}\Big(L(\V_{\Phi_1}\theta)\Phi_2-L(\V_{\Phi_2}\theta)\Phi_1+L(\theta)\Gamma(\Phi_1,\Phi_2,p)\Big).
\end{equation}
\end{lemma}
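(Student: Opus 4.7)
My plan is to apply the Koszul formula to three arbitrary vector fields $V_{\Phi_1}$, $V_{\Phi_2}$, $V_{\Phi_3}$ and to read off $\bar\nabla_{V_{\Phi_1}} V_{\Phi_2}$ by varying $\Phi_3$. Since $\Phi\mapsto V_\Phi$ is an isomorphism $\mathbb{R}^n/\mathbb{R}\to T_p\mathcal{P}_+$ and the metric $\g$ is non-degenerate on the tangent space, the identity
\begin{equation*}
\begin{split}
2\langle \bar\nabla_{V_{\Phi_1}}V_{\Phi_2}, V_{\Phi_3}\rangle = & V_{\Phi_1}\langle V_{\Phi_2}, V_{\Phi_3}\rangle + V_{\Phi_2}\langle V_{\Phi_1}, V_{\Phi_3}\rangle - V_{\Phi_3}\langle V_{\Phi_1}, V_{\Phi_2}\rangle \\
& + \langle [V_{\Phi_1}, V_{\Phi_2}], V_{\Phi_3}\rangle - \langle [V_{\Phi_1}, V_{\Phi_3}], V_{\Phi_2}\rangle - \langle [V_{\Phi_2}, V_{\Phi_3}], V_{\Phi_1}\rangle
\end{split}
\end{equation*}
will pin down $\bar\nabla_{V_{\Phi_1}} V_{\Phi_2}$ uniquely once both sides are matched as linear functionals of $\Phi_3$.

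For the three derivative terms, the representation $\langle V_{\Phi_\ell}, V_{\Phi_m}\rangle = \tfrac12\sum_{(i,j)\in E}(\nabla_\omega\Phi_\ell)_{ij}(\nabla_\omega\Phi_m)_{ij}\theta_{ij}(p)$ puts all $p$-dependence in $\theta$. Using $(V_{\Phi_k}\theta)_{ij} = \partial_i\theta_{ij}(V_{\Phi_k})_i + \partial_j\theta_{ij}(V_{\Phi_k})_j$ and then relabeling $i\leftrightarrow j$ (so that $\theta_{ij}=\theta_{ji}$ combined with $(\nabla_\omega\Phi)_{ji}=-(\nabla_\omega\Phi)_{ij}$ folds the $\partial_j$ piece into the $\partial_i$ piece), I expect to obtain
\begin{equation*}
V_{\Phi_k}\langle V_{\Phi_\ell}, V_{\Phi_m}\rangle = \Phi_k^{\ts} L(\theta)\,\Gamma(\Phi_\ell, \Phi_m, p).
\end{equation*}
For the three commutator terms I plug in formula \eqref{comm}. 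Because $L$ applied to any symmetric matrix yields a zero-sum vector, $[V_{\Phi_a}, V_{\Phi_b}]$ already lies in $T_p\mathcal{P}_+$, and pairing $\langle X, V_\Phi\rangle = X^{\ts} R(\theta) L(\theta)\Phi$ reduces on tangent vectors to the Euclidean pairing $X^{\ts}\Phi$, giving
\begin{equation*}
\langle [V_{\Phi_a}, V_{\Phi_b}], V_{\Phi_c}\rangle = \Phi_c^{\ts} L(V_{\Phi_a}\theta)\Phi_b - \Phi_c^{\ts} L(V_{\Phi_b}\theta)\Phi_a.
\end{equation*}

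The hard part is the bookkeeping needed to collapse the six resulting terms. The decisive bridge is the exchange identity
\begin{equation*}
\Phi_k^{\ts} L(\theta)\,\Gamma(\Phi_\ell, \Phi_m, p) = \Phi_m^{\ts} L(V_{\Phi_k}\theta)\Phi_\ell,
\end{equation*}
which I intend to establish by applying the quadratic-form representation $\Phi^{\ts} L(a)\Psi = \tfrac12\sum_{(i,j)\in E}(\nabla_\omega\Phi)_{ij}(\nabla_\omega\Psi)_{ij}a_{ij}$ (valid for any symmetric $a$) to $a = V_{\Phi_k}\theta$ and then reusing the same relabeling trick. With this identity, the derivative contributions rewrite as $\Phi_3^{\ts} L(V_{\Phi_1}\theta)\Phi_2 + \Phi_3^{\ts} L(V_{\Phi_2}\theta)\Phi_1 - \Phi_3^{\ts} L(\theta)\Gamma(\Phi_1,\Phi_2,p)$, while symmetry of each $L(V_{\Phi_k}\theta)$ forces the commutator sum to simplify to $-2\Phi_3^{\ts} L(V_{\Phi_2}\theta)\Phi_1 + 2\Phi_3^{\ts} L(\theta)\Gamma(\Phi_1,\Phi_2,p)$. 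Adding the two produces $\Phi_3^{\ts}\big[L(V_{\Phi_1}\theta)\Phi_2 - L(V_{\Phi_2}\theta)\Phi_1 + L(\theta)\Gamma(\Phi_1,\Phi_2,p)\big]$, and varying $\Phi_3$ yields the claimed expression \eqref{V12}.
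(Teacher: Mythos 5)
Your proposal is correct and follows essentially the same route as the paper: the Koszul formula, the computation $\V_{\Phi_k}\langle \V_{\Phi_\ell},\V_{\Phi_m}\rangle=\Phi_\ell^{\ts}L(\V_{\Phi_k}\theta)\Phi_m$, and the exchange identity $\Phi_k^{\ts}L(\theta)\Gamma(\Phi_\ell,\Phi_m,p)=\Phi_m^{\ts}L(\V_{\Phi_k}\theta)\Phi_\ell$, which is exactly the paper's Claim \eqref{claim}. The only cosmetic difference is that you express the metric-derivative terms in the $\Gamma$ form from the outset and invoke the exchange identity in both directions, whereas the paper keeps them in the $L(\V_{\Phi_k}\theta)$ form and applies the identity once at the end; the cancellation structure and conclusion are identical.
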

\begin{proof}
Since $\langle \V_{\Phi_2}, \V_{\Phi_3}\rangle( p)=\Phi_2^{\ts}L(\theta)\Phi_3$ is a $C^2$ function of $ p$, then 
\begin{equation}\label{V}
\begin{split}
\V_{\Phi_1}\langle \V_{\Phi_2}, \V_{\Phi_3}\rangle( p)=&\frac{d}{d\epsilon}|_{\epsilon=0} \Phi_2^{\ts}L( p+\epsilon L(\theta)\Phi_1)\Phi_3\\
=& \Phi_2^{\ts}L(\V_{\Phi_1}\theta)\Phi_3.
\end{split}
\end{equation}
From the definition of the Levi-Civita connection in the Koszul formula, we have
\begin{equation}\label{V11}
\begin{split}
2\langle\bar\nabla_{\V_{\Phi_1}}\V_{\Phi_2}, \V_{\Phi_3}\rangle=&\V_{\Phi_1}\langle \V_{\Phi_2}, \V_{\Phi_3}\rangle +\V_{\Phi_2}\langle \V_{\Phi_3}, \V_{\Phi_1}\rangle -\V_{\Phi_3}\langle \V_{\Phi_1}, \V_{\Phi_2}\rangle \\
+&\langle \V_{\Phi_3}, [\V_{\Phi_1}, \V_{\Phi_2}]\rangle-  \langle \V_{\Phi_2}, [\V_{\Phi_1}, \V_{\Phi_3}]\rangle-\langle \V_{\Phi_1}, [\V_{\Phi_2}, \V_{\Phi_3}]\rangle.
\end{split}
\end{equation}
Substituting \eqref{comm}, \eqref{V} into \eqref{V11}, we have 
\begin{equation}\label{V1}
\begin{split}
2\langle\bar\nabla_{\V_{\Phi_1}}\V_{\Phi_2}, \V_{\Phi_3}\rangle=&\quad\Phi_2^{\ts}L(\V_{\Phi_1}\theta)\Phi_3+\Phi_1^{\ts}L(\V_{\Phi_2}\theta)\Phi_3 -\Phi_1^{\ts}L(\V_{\Phi_3}\theta)\Phi_2 \\
&+\Phi_3^{\ts} L(\V_{\Phi_1}\theta)\Phi_2-\Phi_3^{\ts}L(\V_{\Phi_2}\theta)\Phi_1\\
&- \Phi_2^{\ts} L(\V_{\Phi_1}\theta)\Phi_3+\Phi_2^{\ts}L(\V_{\Phi_3}\theta)\Phi_1-\Phi_1^{\ts} L(\V_{\Phi_2}\theta)\Phi_3+\Phi_1^{\ts}L(\V_{\Phi_3}\theta)\Phi_2.
\end{split}
\end{equation}
By annealing equivalent formulas in \eqref{V1}, we derive
\begin{equation}\label{a}
\langle \bar\nabla_{\V_{\Phi_1}}\V_{\Phi_2}, \V_{\Phi_3}\rangle=\frac{1}{2}\Big(\Phi_3^{\ts}L(\V_{\Phi_1}\theta)\Phi_2- \Phi_3^{\ts}L(\V_{\Phi_2}\theta)\Phi_1\Big)+\frac{1}{2}\Phi_1^{\ts}L(\V_{\Phi_3}\theta)\Phi_2.
\end{equation}
\noindent\textbf{Claim:} 
\begin{equation}\label{claim}
\Phi_1^{\ts}L(\V_{\Phi_3}\theta)\Phi_2=\Phi_3^{\ts} L(\theta)\Gamma(\Phi_1, \Phi_2,p).
\end{equation}
\begin{proof}[Proof of Claim \eqref{claim}]
In fact, formula \eqref{claim} can be verified by the following facts
\begin{equation*}
\begin{split}
\Phi_1^{\ts}L(\V_{\Phi_3}\theta)\Phi_2
=&\frac{1}{2}\sum_{(i,j)\in {E}}(\nabla_\omega \Phi_1)_{ij}(\nabla_\omega \Phi_2)_{ij} (\V_{\Phi_3}\theta)_{ij}\\
=&\frac{1}{2}\sum_{(i,j)\in {E}}(\nabla_\omega \Phi_1)_{ij}(\nabla_\omega \Phi_2)_{ij}\Big[\frac{\partial\theta_{ij}}{\partial p_i}(L(\theta)\Phi_3)_i+ \frac{\partial\theta_{ij}}{\partial p_j}(L(\theta)\Phi_3)_j\Big]\\
=&\sum_{(i,j)\in {E}}(\nabla_\omega \Phi_1)_{ij}(\nabla_\omega \Phi_2)_{ij}\frac{\partial\theta_{ij}}{\partial p_i}(L(\theta)\Phi_3)_i\\
=&\sum_{i=1}^n\Big(\sum_{j\in N(i)}(\nabla_\omega \Phi_1)_{ij}(\nabla_\omega \Phi_2)_{ij}\frac{\partial\theta_{ij}}{\partial p_i}\Big)(L(\theta)\Phi_3)_i\\
=&\Phi_3^{\ts} L(\theta) \Gamma(\Phi_1, \Phi_2, p).
\end{split}
\end{equation*}
\end{proof}
Using the claim and substituting \eqref{claim} into \eqref{a}, we prove the result. 
 \end{proof}
\begin{lemma}\label{lemma5}
The following equality holds: 
\begin{equation*}
\bar\nabla_{\V_{\Phi_1}}\V_{\Phi_2}+\bar\nabla_{\V_{\Phi_2}}\V_{\Phi_1}=\V_{\Gamma(\Phi_1, \Phi_2, p)}.
\end{equation*}
\end{lemma}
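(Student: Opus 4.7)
The plan is to derive the identity by direct substitution into the Koszul-type formula \eqref{V12} for the Levi-Civita connection already established in the preceding lemma, and then exploit the symmetry of $\Gamma(\cdot,\cdot,p)$ in its first two arguments.

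First, I would apply \eqref{V12} twice: once as stated, giving
\[
\bar\nabla_{\V_{\Phi_1}}\V_{\Phi_2}=\tfrac{1}{2}\bigl(L(\V_{\Phi_1}\theta)\Phi_2-L(\V_{\Phi_2}\theta)\Phi_1+L(\theta)\Gamma(\Phi_1,\Phi_2,p)\bigr),
\]
and a second time with the roles of $\Phi_1$ and $\Phi_2$ interchanged, giving
\[
\bar\nabla_{\V_{\Phi_2}}\V_{\Phi_1}=\tfrac{1}{2}\bigl(L(\V_{\Phi_2}\theta)\Phi_1-L(\V_{\Phi_1}\theta)\Phi_2+L(\theta)\Gamma(\Phi_2,\Phi_1,p)\bigr).
\]
Summing the two expressions, the two pairs $L(\V_{\Phi_1}\theta)\Phi_2$ and $L(\V_{\Phi_2}\theta)\Phi_1$ enter with opposite signs and therefore cancel, leaving only the symmetric contribution
\[
\bar\nabla_{\V_{\Phi_1}}\V_{\Phi_2}+\bar\nabla_{\V_{\Phi_2}}\V_{\Phi_1}=\tfrac{1}{2}L(\theta)\bigl(\Gamma(\Phi_1,\Phi_2,p)+\Gamma(\Phi_2,\Phi_1,p)\bigr).
\]

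The remaining step is the symmetry observation $\Gamma(\Phi_1,\Phi_2,p)=\Gamma(\Phi_2,\Phi_1,p)$, which is immediate from the defining formula
\[
\Gamma(\Phi_1,\Phi_2,p)_i=\sum_{j\in N(i)}(\nabla_\omega\Phi_1)_{ij}(\nabla_\omega\Phi_2)_{ij}\tfrac{\partial}{\partial p_i}\theta_{ij}(p),
\]
because the product $(\nabla_\omega\Phi_1)_{ij}(\nabla_\omega\Phi_2)_{ij}$ is obviously symmetric under the swap and $\partial_{p_i}\theta_{ij}(p)$ does not depend on $\Phi_1$ or $\Phi_2$. Consequently the right-hand side collapses to $L(\theta)\Gamma(\Phi_1,\Phi_2,p)$, and recalling the definition $\V_{\Psi}=L(\theta)\Psi$ for any $\Psi\in\mathbb{R}^n$ identifies this with $\V_{\Gamma(\Phi_1,\Phi_2,p)}$, as desired.

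I do not expect any genuine obstacle in this argument: the work has already been done in the preceding lemma. The only point that merits a brief sentence of justification is the symmetry of $\Gamma$ in its first two slots, which is what makes the antisymmetric terms drop out cleanly. Everything else is a one-line bookkeeping computation, so the proof should be short and essentially an observation following from \eqref{V12}.
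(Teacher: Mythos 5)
Your proof is correct and follows essentially the same route as the paper: both apply the connection formula \eqref{V12} with the arguments swapped, cancel the antisymmetric $L(\V_{\Phi}\theta)$ terms, and use the symmetry of $\Gamma$ in its first two slots. Your write-up just makes explicit the symmetry observation that the paper leaves implicit.
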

\begin{proof}
Since \eqref{V12} holds and $$\bar\nabla_{\V_{\Phi_1}}\V_{\Phi_2}+ \bar\nabla_{\V_{\Phi_2}}\V_{\Phi_1}=L(\theta)\Gamma(\Phi_1, \Phi_2, p),$$
we prove the result.
\end{proof}

\begin{lemma}
The Levi-Civita connection coefficient at $p\in \sP_+$ is given as below. For any $\Phi_1$, $\Phi_2$, $\Phi_3\in \mathbb{R}^n$,   
\begin{equation}\label{LC}
\langle \bar\nabla_{\V_{\Phi_1}}\V_{\Phi_2}, \V_{\Phi_3}\rangle=\frac{1}{2}\Big\{\Phi_1^{\ts} L(\theta)\Gamma(\Phi_2, \Phi_3, p){-}\Phi_2^{\ts} L(\theta)\Gamma(\Phi_1, \Phi_3, p){+}\Phi_3^{\ts} L(\theta)\Gamma(\Phi_1, \Phi_2, p) \Big\}.
\end{equation}
In detail, 
\begin{equation}\label{iterative_Gamma}
\begin{split}
\langle \bar\nabla_{\V_{\Phi_1}}\V_{\Phi_2}, \V_{\Phi_3}\rangle=&\frac{1}{4}\sum_{(i,j)\in E} \Big\{(\nabla_\omega \Phi_1)_{ij}(\nabla_\omega \Gamma(\Phi_2, \Phi_3, p))_{ij}{-}(\nabla_\omega \Phi_2)_{ij}(\nabla_\omega \Gamma(\Phi_1, \Phi_3,p))_{ij}\\
&\hspace{1.6cm}{+}(\nabla_\omega \Phi_3)_{ij}(\nabla_\omega \Gamma(\Phi_1, \Phi_2,p))_{ij}\Big\}\theta_{ij}(p).
\end{split}
\end{equation}
\end{lemma}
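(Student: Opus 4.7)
The plan is to derive \eqref{LC} directly from the intermediate Koszul-type formula \eqref{a} already obtained in the proof of the preceding lemma, namely
\[
\langle \bar\nabla_{\V_{\Phi_1}}\V_{\Phi_2}, \V_{\Phi_3}\rangle
=\tfrac{1}{2}\bigl(\Phi_3^{\ts}L(\V_{\Phi_1}\theta)\Phi_2 - \Phi_3^{\ts}L(\V_{\Phi_2}\theta)\Phi_1\bigr)+\tfrac{1}{2}\Phi_1^{\ts}L(\V_{\Phi_3}\theta)\Phi_2.
\]
First I would apply the identity \eqref{claim}, $\Psi_1^{\ts}L(\V_{\Psi_3}\theta)\Psi_2=\Psi_3^{\ts}L(\theta)\Gamma(\Psi_1,\Psi_2,p)$, to each of the three summands on the right-hand side. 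Under the appropriate relabeling of $(\Psi_1,\Psi_2,\Psi_3)$, this rewrites $\Phi_3^{\ts}L(\V_{\Phi_1}\theta)\Phi_2$, $\Phi_3^{\ts}L(\V_{\Phi_2}\theta)\Phi_1$, and $\Phi_1^{\ts}L(\V_{\Phi_3}\theta)\Phi_2$ as $\Phi_1^{\ts}L(\theta)\Gamma(\Phi_3,\Phi_2,p)$, $\Phi_2^{\ts}L(\theta)\Gamma(\Phi_3,\Phi_1,p)$, and $\Phi_3^{\ts}L(\theta)\Gamma(\Phi_1,\Phi_2,p)$, respectively.

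The next step is to invoke the manifest symmetry of $\Gamma$ in its first two arguments, $\Gamma(\Psi',\Psi'',p)=\Gamma(\Psi'',\Psi',p)$, which is immediate from the defining formula since $(\nabla_\omega\Psi')_{ij}(\nabla_\omega\Psi'')_{ij}$ is symmetric under the swap. Using this to turn $\Gamma(\Phi_3,\Phi_2,p)$ into $\Gamma(\Phi_2,\Phi_3,p)$ and $\Gamma(\Phi_3,\Phi_1,p)$ into $\Gamma(\Phi_1,\Phi_3,p)$, and substituting the three identities back into \eqref{a}, collects the right-hand side into the alternating combination
\[
\tfrac{1}{2}\bigl\{\Phi_1^{\ts}L(\theta)\Gamma(\Phi_2,\Phi_3,p)-\Phi_2^{\ts}L(\theta)\Gamma(\Phi_1,\Phi_3,p)+\Phi_3^{\ts}L(\theta)\Gamma(\Phi_1,\Phi_2,p)\bigr\},
\]
which is precisely \eqref{LC}.

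For the explicit edge-sum expression \eqref{iterative_Gamma}, I would expand each of the three bilinear forms $\Phi^{\ts}L(\theta)\Psi$ appearing above by means of the already-recorded identity
\[
\Phi^{\ts}L(\theta)\Psi=\tfrac{1}{2}\sum_{(i,j)\in E}(\nabla_\omega\Phi)_{ij}(\nabla_\omega\Psi)_{ij}\theta_{ij}(p)
\]
coming from the definition of the Riemannian metric $\g$. Taking $\Psi$ to be the corresponding $\Gamma(\cdot,\cdot,p)$-vector in each summand and absorbing the outer factor $\tfrac{1}{2}$ produces the overall prefactor $\tfrac{1}{4}$ in \eqref{iterative_Gamma}, and the three signed summands then match term by term.

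I do not anticipate a serious obstacle: the two nontrivial ingredients, the Koszul formula \eqref{a} and the bilinear identity \eqref{claim}, are already in hand from the previous lemma. The main care required is accurate bookkeeping of the permutations when invoking \eqref{claim} and the correct use of the symmetry of $\Gamma$ in its first two slots, so that the signed alternating pattern in \eqref{LC} emerges cleanly.
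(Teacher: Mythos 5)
Your proposal is correct and follows essentially the same route as the paper: it combines the Koszul-derived identity \eqref{a} with the claim \eqref{claim} (applied under relabeling, together with the symmetry of $\Gamma$ in its first two arguments) to obtain \eqref{LC}, and then expands each term via $\Phi^{\ts}L(\theta)\Psi=\tfrac{1}{2}\sum_{(i,j)\in E}(\nabla_\omega\Phi)_{ij}(\nabla_\omega\Psi)_{ij}\theta_{ij}(p)$ to obtain \eqref{iterative_Gamma}. The bookkeeping of the permutations and the prefactor $\tfrac14$ are all consistent with the paper's computation.
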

\begin{proof}
The Levi-Civita connection \eqref{LC} is derived from the combination of equations \eqref{V1} and \eqref{claim}. In addition, to prove equation \eqref{iterative_Gamma}, we only need to write $\Phi_1^{\ts} L(\theta)\Gamma(\Phi_2, \Phi_3, p)$ in details. Note
\begin{equation*}
\begin{split}
\Phi_1^{\ts} L(\theta)\Gamma(\Phi_2, \Phi_3,p)=\frac{1}{2} \sum_{(i,j)\in E}(\nabla_\omega\Phi_1)_{ij}(\nabla_\omega \Gamma(\Phi_2, \Phi_3,p))_{ij}\theta_{ij}(p).
\end{split}
\end{equation*}
\end{proof}

We next compute the parallel transport. Let $ \gamma\colon [0,T]\rightarrow \mathcal{P}_+$ be a smooth curve, with a parameter $t>0$. 
Denote $\V_\Phi$ as the tangent direction of the curve $\gamma(t)$ at time $t$. I.e., $\frac{d \gamma(t)}{dt}=L(\theta(\gamma(t)))\Phi(t)=\V_{\Phi(t)}$. 
Consider a smooth vector field $\V_{\eta}$ given by $\eta(t)\in \mathbb{R}^n$. Then equation for $\V_{\eta}$ to be parallel along $\gamma(t)$ satisfies   
\begin{equation*}
\V_{\frac{d\eta}{dt}}+ \bar\nabla_{\V_{\Phi}}\V_\eta=0.
\end{equation*}

\begin{theorem}[Parallel transport in probability manifold]
For $\V_\eta$ to be parallel along the curve $\gamma$, then the following system of parallel transport equations holds: 
\begin{equation}\label{parallel}
\left\{\begin{aligned}
&\frac{d\gamma}{dt}-L(\theta)\Phi=0,\\
& L(\theta)\frac{d\eta}{dt}+\frac{1}{2}\Big(L(\V_\Phi\theta)\eta-L(\V_\eta\theta)\Phi+L(\theta)\Gamma(\Phi, \eta, \gamma)\Big)=0.
\end{aligned}\right.
\end{equation}
In addition, the following statements hold:
\begin{itemize}
\item[(i)] If $\eta_1(t)$, $\eta_2(t)$ is parallel along $\gamma(t)$, then  
\begin{equation*}
\frac{d}{dt}\langle \V_{\eta_1}, \V_{\eta_2}\rangle=0.
\end{equation*}
\item[(ii)] The geodesic equation satisfies  
\begin{equation}\label{geo}
\frac{d\gamma}{dt}-L(\theta)\Phi=0,\quad L(\theta)\Big(\frac{d\Phi}{dt}+ \frac{1}{2}\Gamma(\Phi, \Phi, \gamma)\Big)=0.
\end{equation}
In detail, a solution of geodesic equation satisfies a system of continuity equation and Hamilton-Jacobi equation on graphs:  
\begin{equation}\label{wgeo}
\left\{\begin{split}
& \frac{d\gamma_i}{dt}+\mathrm{div}_\omega(\theta(\gamma)\nabla_\omega \Phi)_i=0,\\
&\frac{d\Phi_i}{dt}+\frac{1}{2}\sum_{j\in N(i)}(\nabla_\omega\Phi)^2_{ij}\partial_i\theta_{ij}(\gamma)=0.
\end{split}\right.
\end{equation}
\end{itemize}
\end{theorem}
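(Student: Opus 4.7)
The plan is to derive the parallel transport system directly from the definition $V_{d\eta/dt} + \bar\nabla_{V_\Phi} V_\eta = 0$ by substituting the Levi-Civita formula \eqref{V12}, and then obtain (i) from metric compatibility and (ii) as the special case $\eta = \Phi$.

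For the system \eqref{parallel}, I would first write $\dot\gamma = V_\Phi = L(\theta(\gamma))\Phi$, which is simply the definition relating the covector $\Phi$ and the tangent vector along the curve; this yields the first equation. For the second equation, I would note that $V_{d\eta/dt} = L(\theta(\gamma))\tfrac{d\eta}{dt}$ and then insert the Koszul-style formula \eqref{V12} with $\Phi_1 = \Phi$ and $\Phi_2 = \eta$, which directly produces
\[
L(\theta)\tfrac{d\eta}{dt} + \tfrac{1}{2}\bigl(L(V_\Phi\theta)\eta - L(V_\eta\theta)\Phi + L(\theta)\Gamma(\Phi,\eta,\gamma)\bigr) = 0 .
\]
For claim (i), I would invoke the fact that the Levi-Civita connection, by its very construction through the Koszul formula \eqref{V11}, is metric compatible. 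Hence $\tfrac{d}{dt}\langle V_{\eta_1},V_{\eta_2}\rangle = \langle\bar\nabla_{V_\Phi}V_{\eta_1} + V_{d\eta_1/dt},V_{\eta_2}\rangle + \langle V_{\eta_1},\bar\nabla_{V_\Phi}V_{\eta_2} + V_{d\eta_2/dt}\rangle$, and each term vanishes by the parallel transport equation.

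For the geodesic equation in (ii), I would specialize the parallel system to $\eta = \Phi$. The two antisymmetric terms $L(V_\Phi\theta)\Phi - L(V_\Phi\theta)\Phi$ cancel, leaving $L(\theta)\bigl(\tfrac{d\Phi}{dt} + \tfrac{1}{2}\Gamma(\Phi,\Phi,\gamma)\bigr) = 0$, which is \eqref{geo}. To pass to the detailed system \eqref{wgeo}, I would use the graph-theoretic identity $L(\theta)\Phi = -\mathrm{div}_\omega(\theta\nabla_\omega\Phi)$ for the continuity equation, and observe that since $L(\theta)$ has one-dimensional kernel spanned by the constant vector $u_0$, the equation $L(\theta)X = 0$ fixes $X$ up to an additive constant. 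Choosing the representative $\tfrac{d\Phi}{dt} + \tfrac{1}{2}\Gamma(\Phi,\Phi,\gamma) = 0$ and unfolding the definition of $\Gamma$ coordinate-wise yields the Hamilton-Jacobi equation on graphs.

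The main obstacle I anticipate is the mild subtlety of working modulo the constant kernel of $L(\theta)$: the map $\Phi \mapsto V_\Phi$ is only an isomorphism on $\mathbb{R}^n/\mathbb{R}$, so the equality $L(\theta)Y = 0$ in \eqref{geo} does not by itself force $Y = 0$. Making explicit that \eqref{wgeo} is the canonical representative in the quotient (and checking that this choice is preserved by the flow) is the only step requiring care; the rest is algebraic manipulation using the identities of the preceding lemmas.
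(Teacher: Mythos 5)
Your proposal is correct and follows essentially the same route as the paper: the system \eqref{parallel} is read off from the connection formula \eqref{V12} with $\Phi_1=\Phi$, $\Phi_2=\eta$, and part (ii) is obtained by setting $\eta=\Phi$ and then quotienting out the constant kernel of $L(\theta)$ (the paper makes your ``canonical representative'' explicit via the shift $\hat\Phi_i(t)=\Phi_i(t)+\int_0^t c(s)\,ds$, which is exactly the gauge-fixing step you flag as needing care). The only presentational difference is in (i), where you invoke metric compatibility of the Koszul-defined connection abstractly, whereas the paper verifies the cancellation by an explicit term-by-term computation from \eqref{parallel}; both arguments are valid.
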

\begin{proof}
By the Levi-Civita connection \eqref{V12}, we derive \eqref{parallel}. (i). Since $\langle \V_{\eta_1}, \V_{\eta_2}\rangle=\eta_1^{\ts}L(\theta)\eta_2$, and $\eta_1$, $\eta_2$ satisfy \eqref{parallel}, then 
\begin{equation*}
\begin{split}
\frac{d}{dt}\langle \V_{\eta_1}, \V_{\eta_2}\rangle
=&(L(\theta)\frac{d\eta_1}{dt})^{\ts}\eta_2+\eta_1^{\ts}L(\V_{\Phi}\theta)\eta_2+\eta_1^{\ts}L(\theta)\frac{d\eta_2}{dt}\\
=&-\frac{1}{2}\eta_2^{\ts}L(\V_{\Phi}\theta)\eta_1+\frac{1}{2}\eta_2^{\ts}L(\V_{\eta_1}\theta)\Phi-\frac{1}{2}\Phi^{\ts}L(\V_{\eta_2}\theta)\eta_1\\
&+\eta_1^{\ts}L(\V_{\Phi}\theta)\eta_2-\frac{1}{2}\eta_1^{\ts}L(\V_{\Phi}\theta)\eta_2+\frac{1}{2}\eta_1^{\ts}L(\V_{\eta_2}\theta)\Phi-\frac{1}{2}\Phi^{\ts}L(\V_{\eta_1}\theta)\eta_2\\
=&0.
\end{split}
\end{equation*}
(ii). If $\eta=\Phi$, then $(\gamma, \eta)$ satisfies the geodesic equation:  
$$ \V_{\frac{d\Phi}{dt}}+\bar\nabla_{\V_{\Phi}}\V_{\Phi}=0.$$
Since $[\V_{\Phi}, \V_{\Phi}]=0$, then the geodesic equation satisfies
\begin{equation*}
L(\theta)(\frac{d\Phi}{dt}+\frac{1}{2}\Gamma(\Phi, \Phi, \gamma))=0.
\end{equation*}
We note that $L(\theta)u_0=0$, where $u_0=\frac{1}{\sqrt{n}}(1,\cdots, 1)^{\ts}\in\mathbb{R}^n$ is a constant vector. When $\gamma\in \mathcal{P}_+$, $u_0$ is the only basis for the kernel space of $L(\theta)$. 
Thus, there exists a constant vector $c(t)(1,\cdots, 1)^{\ts}$, where $c\colon [0,T]\rightarrow\mathbb{R}$ is a scalar function of the time variable $t$, such that 
\begin{equation*}
\frac{d\Phi}{dt}+\frac{1}{2}\Gamma(\Phi, \Phi, \gamma)=c(t)(1,\cdots, 1)^{\ts}.
\end{equation*}
Assume $c(t)$ is integrable in the time interval $[0, T]$, and we let $\hat\Phi_i(t)=\Phi_i(t)+\int_0^t c(s)ds$, for any $i\in \{1,\cdots, n\}$, then $(\gamma(t), \hat\Phi(t))$ satisfies the geodesic equation \eqref{wgeo}. This finishes the proof. 
\end{proof}

We last present the Hessian operators of smooth functions in $(\mathcal{P}_+, \g)$. 
\begin{lemma}\label{diff}
Given a function $F\in C^{2}(\sP_+; \mathbb{R})$, denote the Hessian operator of $F$ in $(\mathcal{P}_+, \g)$ as $\bar\Hess F:=\Hess_{\g}F\colon \sP_+\times\mathbb{R}^n\times \mathbb{R}^n\rightarrow\mathbb{R}$. 
Then the Hessian operator of $F$ at directions $\V_{\Phi_1}$, $\V_{\Phi_2}$ satisfies  
\begin{equation}\label{Hess}
\begin{split}
 \bar{\Hess}F( p)\langle \V_{\Phi_1}, \V_{\Phi_2}\rangle=&\quad\Phi_1^{\ts} L(\theta)\nabla^2_{pp}F( p)L(\theta)\Phi_2\\
 &+ \frac{1}{2}\nabla_{ p}F( p)^{\ts}\Big(L(\V_{\Phi_1}\theta)\Phi_2+L(\V_{\Phi_2}\theta)\Phi_1-L(\theta)\Gamma(\Phi_1, \Phi_2, p)\Big) . 
 \end{split}
 \end{equation}
In detail, 
\begin{equation}\label{Hess_explicit}
\begin{split}
& \bar{\Hess}F( p)\langle \V_{\Phi_1}, \V_{\Phi_2}\rangle\\=&\quad \frac{1}{4}\sum_{(i,j)\in E}\sum_{(k,l)\in E}(\nabla_\omega\nabla_p)_{ij}(\nabla_\omega\nabla_p)_{kl}F(p)(\nabla_\omega\Phi_1)_{ij}(\nabla_\omega\Phi_2)_{kl}\theta_{ij}(p)\theta_{kl}(p)\\
 &+ \frac{1}{4}\sum_{(i,j)\in E} \Big\{(\nabla_\omega \Phi_1)_{ij}(\nabla_\omega \Gamma(\Phi_2, \nabla_pF(p), p))_{ij}+(\nabla_\omega \Phi_2)_{ij}(\nabla_\omega \Gamma(\Phi_1, \nabla_pF(p),p))_{ij}\\
&\hspace{1.9cm}-(\nabla_\omega \nabla_p F(p))_{ij}(\nabla_\omega \Gamma(\Phi_1, \Phi_2,p))_{ij}\Big\}\theta_{ij}(p), 
 \end{split}
 \end{equation}
 where we denote
 \begin{equation*}
\begin{split}
(\nabla_\omega\nabla_p)_{ij}(\nabla_\omega\nabla_p)_{kl}F(p):=&\sqrt{\omega_{ij}}\sqrt{\omega_{kl}}(\frac{\partial}{\partial p_{j}}-\frac{\partial}{\partial p_{i}})(\frac{\partial}{\partial p_{l}}-\frac{\partial}{\partial p_{k}})F(p).
\end{split}
 \end{equation*}
 \end{lemma}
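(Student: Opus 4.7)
The plan is to start from the standard Riemannian definition of the Hessian,
$$\bar\Hess F(p)\langle \V_{\Phi_1},\V_{\Phi_2}\rangle = \V_{\Phi_1}(\V_{\Phi_2} F)(p)-(\bar\nabla_{\V_{\Phi_1}}\V_{\Phi_2})F(p),$$
and to plug in the ingredients already available in the excerpt: the directional-derivative formula $\V_\Phi F(p)=\nabla_pF(p)^{\ts}L(\theta)\Phi$, the derivative $\V_\Phi\theta$ of $\theta$ along a tangent direction, and the Levi-Civita connection \eqref{V12}.

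First I would expand $\V_{\Phi_1}(\V_{\Phi_2} F)(p)$ by the product rule, treating $F$ and $\theta$ as the only $p$-dependent factors. Differentiating $\nabla_pF$ gives the Euclidean Hessian $\nabla_{pp}^2F$ contracted with $L(\theta)\Phi_1$ and $L(\theta)\Phi_2$, while differentiating $L(\theta)$ produces the $L(\V_{\Phi_1}\theta)\Phi_2$ term. Since $\bar\nabla_{\V_{\Phi_1}}\V_{\Phi_2}$ is itself a tangent vector in $T_p\sP_+$, its action on $F$ is simply the Euclidean dot product of $\nabla_pF(p)$ with the right-hand side of \eqref{V12}. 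Subtracting this from $\V_{\Phi_1}(\V_{\Phi_2}F)$ cancels half of the $L(\V_{\Phi_1}\theta)\Phi_2$ contribution and symmetrizes it with $L(\V_{\Phi_2}\theta)\Phi_1$, leaving exactly the compact formula \eqref{Hess}.

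To pass from \eqref{Hess} to the explicit graph-gradient form \eqref{Hess_explicit}, I would translate each of the four terms using the integration-by-parts identity $\Psi^{\ts}L(\theta)x=\tfrac{1}{2}\sum_{(i,j)\in E}(\nabla_\omega\Psi)_{ij}(\nabla_\omega x)_{ij}\theta_{ij}(p)$. For the quadratic term $\Phi_1^{\ts}L(\theta)\nabla_{pp}^2F(p)L(\theta)\Phi_2$ this identity is applied twice: once with the outer sum over $(i,j)$ after peeling off one $L(\theta)$, and again with an inner sum over $(k,l)$ for the remaining $L(\theta)$. Careful commutation of the two discrete difference operators with the mixed partials $\partial_{p_k}\partial_{p_l}F$ assembles into the symbol $(\nabla_\omega\nabla_p)_{ij}(\nabla_\omega\nabla_p)_{kl}F$, producing the overall factor of $\tfrac{1}{4}$ and matching the first displayed line. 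For the three remaining $\Gamma$-type terms I would invoke the Claim \eqref{claim} already proven in the derivation of the Levi-Civita connection, i.e.\ $\Psi_1^{\ts}L(\V_{\Psi_3}\theta)\Psi_2=\Psi_3^{\ts}L(\theta)\Gamma(\Psi_1,\Psi_2,p)$, with the role of $\Psi_3$ played by $\nabla_pF(p)$; this rewrites $\nabla_pF(p)^{\ts}L(\V_{\Phi_1}\theta)\Phi_2$ as $\Phi_1^{\ts}L(\theta)\Gamma(\Phi_2,\nabla_pF(p),p)$ (and symmetrically for the second term), so that a single further application of the integration-by-parts identity produces the graph-gradient pairings listed in \eqref{Hess_explicit}.

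The conceptual steps are short, so the main obstacle is bookkeeping: correctly tracking which differences act on $F$ versus on $\theta$, maintaining the symmetry $\theta_{ij}=\theta_{ji}$ to ensure that $L(\V_\Phi\theta)$ is symmetric (so the claim \eqref{claim} applies freely), and aligning the combinatorial factors so that the two $\tfrac12$'s from the Koszul cancellation in \eqref{Hess} together with the two $\tfrac12$'s from the two integrations by parts yield the stated $\tfrac14$ prefactors in \eqref{Hess_explicit}.
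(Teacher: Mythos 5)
Your proposal is correct and follows essentially the same route as the paper: the paper also starts from $\bar\Hess F(\V_{\Phi_1},\V_{\Phi_2})=\V_{\Phi_1}\langle \V_{\bar\grad F},\V_{\Phi_2}\rangle-\langle \V_{\bar\grad F},\bar\nabla_{\V_{\Phi_1}}\V_{\Phi_2}\rangle$ (equivalent to your $\V_{\Phi_1}(\V_{\Phi_2}F)-(\bar\nabla_{\V_{\Phi_1}}\V_{\Phi_2})F$), substitutes \eqref{V12} to get \eqref{Hess}, and then obtains \eqref{Hess_explicit} by the same double integration by parts for the quadratic term and the same identity \eqref{claim} (via \eqref{iterative_Gamma}) for the $\nabla_pF$ terms.
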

\begin{proof}
From the definition of Hessian operator in $(\mathcal{P}_+, \g)$, we have 
\begin{equation*}
\begin{split}
&\bar{\Hess}F( p)\langle \V_{\Phi_1}, \V_{\Phi_2}\rangle\\
=&\V_{\Phi_1}\langle \V_{\bar\grad F}, \V_{\Phi_2}\rangle- \langle \V_{\bar\grad F}, \bar\nabla_{\V_{\Phi_1}}\V_{\Phi_2}\rangle\\
=&\Phi_1^{\ts} L(\theta)\nabla^2_{pp}F( p)L(\theta)\Phi_2+ \nabla_pF(p)^{\ts}\Big(L(\V_{\Phi_1}\theta)\Phi_2-\bar\nabla_{\V_{\Phi_1}}\V_{\Phi_2}\Big)\\
=&\Phi_1^{\ts} L(\theta)\nabla^2_{pp}F( p)L(\theta)\Phi_2+\frac{1}{2}\nabla_pF(p)^{\ts}\Big(L(\V_{\Phi_1}\theta)\Phi_2+L(\V_{\Phi_2}\theta)\Phi_1-L(\theta)\Gamma(\Phi_1,\Phi_2,p)\Big).
\end{split}
\end{equation*}
From equation \eqref{V12} for the Levi-Civita connection, we derive the Hessian operator formula \eqref{Hess}. We next estimate two terms in the Hessian formula \eqref{Hess} to obtain equation \eqref{Hess_explicit}. Firstly, 
\begin{equation*}
\begin{split}
&\Phi_1^{\ts} L(\theta)\nabla^2_{pp}F( p)L(\theta)\Phi_2\\=&\sum_{i=1}^n\sum_{k=1}^n\frac{\partial^2}{\partial p_i\partial p_k}F(p)(L(\theta)\Phi_1)_{i}(L(\theta)\Phi_2)_{k}\\
=&\sum_{i=1}^n\sum_{k=1}^n\frac{\partial^2}{\partial p_i\partial p_k}F(p)(-\sum_{j\in N(i)}\theta_{ij}\sqrt{\omega}_{ij}(\nabla_{\omega}\Phi_1)_{ij})(-\sum_{l\in N(k)}\sqrt{\omega_{kl}}(\nabla_\omega\Phi_2)_{kl}\theta_{kl})\\
=&\frac{1}{4}\sum_{(i,j)\in E}\sum_{(k,l)\in E}\sqrt{\omega_{ij}}\sqrt{\omega_{kl}}\Big([\frac{\partial^2}{\partial p_i\partial p_k}-\frac{\partial^2}{\partial p_j\partial p_k}]-[\frac{\partial^2}{\partial p_i\partial p_l}-\frac{\partial^2}{\partial p_j\partial p_l}]\Big)F(p) \\
&\hspace{5cm}\cdot(\nabla_{\omega}\Phi_1)_{ij}(\nabla_\omega\Phi_2)_{kl}\theta_{ij}\theta_{kl}.
\end{split}
\end{equation*}
Secondly, from equations \eqref{V12} and \eqref{iterative_Gamma}, we prove the explicit formula for the term with $\nabla_p F(p)$. This finishes the proof.
 \end{proof}

\subsection{Riemannian curvature tensor}
We compute the Riemannian curvature tensor in $(\mathcal{P}_+, \g)$. Denote $\bar{\R}=\R_\g\colon \mathcal{P}_+\times \mathbb{R}^n/\mathbb{R}\times\mathbb{R}^n/\mathbb{R}\times\mathbb{R}^n/\mathbb{R}\rightarrow\mathbb{R}^n/\mathbb{R}$.

The following notations are needed. For $\Phi_1$, $\Phi_2\in\mathbb{R}^n$, define the second order direction derivative of matrix function $\theta$ at directions $\V_{\Phi_1}$, $\V_{\Phi_2}$, by $\W_{\Phi_1, \Phi_2}\theta=((\W_{\Phi_1,\Phi_2}\theta)_{ij})_{1\leq i,j\leq n}\in\mathbb{R}^{n\times n}$, such that
\begin{equation}\label{W12}
\begin{split}
&(\W_{\Phi_1, \Phi_2}\theta)_{ij}:=\V_{\Phi_2}(\frac{\partial\theta_{ij}}{\partial p_i})(\V_{\Phi_1})_i+\V_{\Phi_2}(\frac{\partial\theta_{ij}}{\partial p_j})(\V_{\Phi_1})_j . 
\end{split}
\end{equation}
Define $\nabla_p \theta L(\V_{\Phi_1}\theta)\Phi_2=((\nabla_p \theta L(\V_{\Phi_1}\theta)\Phi_2)_{ij})_{1\leq i,j\leq n}\in\mathbb{R}^{n\times n}$, such that 
\begin{equation*}
(\nabla_p \theta L(\V_{\Phi_1}\theta)\Phi_2)_{ij}:=\frac{1}{2}\Big[\frac{\partial \theta_{ij}}{\partial p_i}(L(\V_{\Phi_1}\theta)\Phi_2)_i+\frac{\partial \theta_{ij}}{\partial p_j}(L(\V_{\Phi_1}\theta)\Phi_2)_j\Big].
\end{equation*}
We denote $m(\Phi_1, \Phi_2)=(m(\Phi_1, \Phi_2)_{ij})_{1\leq i,j\leq n}\in\mathbb{R}^{n\times n}$, such that
\begin{equation*}
m(\Phi_1, \Phi_2)_{ij}:=-2(\W_{\Phi_1, \Phi_2}\theta)_{ij}-\Big(\nabla_p \theta L(\V_{\Phi_1}\theta)\Phi_2+\nabla_p\theta L(\V_{\Phi_2}\theta)\Phi_1 \Big)_{ij}.
\end{equation*}

\begin{theorem}[Rimennain curvature in probability manifold]\label{thm2}
Given potentials $\Phi_1$, $\Phi_2$, $\Phi_3$, $\Phi_4\in \mathbb{R}^{n}/\mathbb{R}$, the Riemannian curvature at directions $\V_{\Phi_1}$, $\V_{\Phi_2}$, $\V_{\Phi_3}$, $\V_{\Phi_4}$ satisfies 
\begin{equation}\label{tensor}
\begin{split}
&\langle \bar{\R}(\V_{\Phi_1}, \V_{\Phi_2})\V_{\Phi_3}, \V_{\Phi_4}\rangle\\
=&\frac{1}{4}\Big\{\Phi_2^{\ts} L(m(\Phi_1, \Phi_3))\Phi_4 + \Phi_1^{\ts}L(m(\Phi_2, \Phi_4))\Phi_3-\Phi_2^{\ts} L(m(\Phi_1, \Phi_4))\Phi_3- \Phi_1^{\ts} L(m(\Phi_2, \Phi_3))\Phi_4\\
&\quad+\Gamma(\Phi_1, \Phi_3, p)^{\ts}L(\theta) \Gamma(\Phi_2, \Phi_4, p)- \Gamma(\Phi_2, \Phi_3, p)^{\ts}L(\theta)\Gamma(\Phi_1, \Phi_4, p)\\
&\quad+[\V_{\Phi_1}, \V_{\Phi_3}]^{\ts}R(\theta)[\V_{\Phi_2}, \V_{\Phi_4}]-[\V_{\Phi_2}, \V_{\Phi_3}]^{\ts}R(\theta)[\V_{\Phi_1}, \V_{\Phi_4}]+2[\V_{\Phi_3}, \V_{\Phi_4}]^{\ts}R(\theta)[\V_{\Phi_1}, \V_{\Phi_2}]\Big\}.
\end{split}
\end{equation}
\end{theorem}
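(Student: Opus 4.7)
The plan is to apply the standard curvature definition
\[
\bar R(X,Y)Z=\bar\nabla_X\bar\nabla_Y Z-\bar\nabla_Y\bar\nabla_X Z-\bar\nabla_{[X,Y]}Z
\]
with $X=\V_{\Phi_1}$, $Y=\V_{\Phi_2}$, $Z=\V_{\Phi_3}$, pair the result with $W=\V_{\Phi_4}$, and substitute using the Levi-Civita formulas \eqref{V12} and \eqref{LC} together with the commutator identity \eqref{comm}. All ingredients have been established earlier; what remains is to propagate the $p$-dependence of $\theta$ through iterated covariant differentiation and to identify the correct packaging of the resulting terms.

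First I would record from \eqref{V12} that $\bar\nabla_{\V_{\Phi_2}}\V_{\Phi_3}=\V_{\psi_{23}}$ for an explicit vector $\psi_{23}=\psi_{23}(\Phi_2,\Phi_3,p)$ built from $L(\V_{\Phi_2}\theta)\Phi_3$, $L(\V_{\Phi_3}\theta)\Phi_2$, and $L(\theta)\Gamma(\Phi_2,\Phi_3,p)$. To compute $\langle\bar\nabla_{\V_{\Phi_1}}\V_{\psi_{23}},\V_{\Phi_4}\rangle$ I would use the Koszul identity \eqref{LC} directly, paying attention to the fact that $\psi_{23}$ now depends on $p$, so an additional Euclidean directional derivative $\V_{\Phi_1}\psi_{23}$ appears alongside the Christoffel piece. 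Expanding $\V_{\Phi_1}$ on the $p$-dependence of $L(\V_{\Phi_2}\theta)\Phi_3$ and $L(\theta)\Gamma(\Phi_2,\Phi_3,p)$ produces the second-order tensor $\W_{\Phi_1,\Phi_2}\theta$ from \eqref{W12} together with cross contributions of the form $\nabla_p\theta\,L(\V\theta)\Phi$; these are precisely the pieces that aggregate to define $m(\Phi_i,\Phi_j)$.

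Next I would subtract the $(\Phi_1,\Phi_2)$-swapped counterpart. Antisymmetrization kills all symmetric-in-$(\Phi_1,\Phi_2)$ terms; the surviving second-order pieces assemble into the four $L(m(\cdot,\cdot))$ contractions on the first line of \eqref{tensor}, while the cross terms between the inner and outer $\Gamma$'s produce the two $\Gamma(\cdot,\cdot,p)^{\ts}L(\theta)\Gamma(\cdot,\cdot,p)$ terms on the second line. Finally, the bracket piece $-\langle\bar\nabla_{[\V_{\Phi_1},\V_{\Phi_2}]}\V_{\Phi_3},\V_{\Phi_4}\rangle$ is handled by substituting \eqref{comm} for the bracket and applying \eqref{LC} once; the three $R(\theta)$-paired bracket contributions on the last line of \eqref{tensor} then emerge after rewriting $\V_\Phi^{\ts}R(\theta)\V_{\Phi'}=\langle\V_\Phi,\V_{\Phi'}\rangle$ via $L(\theta)R(\theta)L(\theta)=L(\theta)$, with the coefficient $2$ in front of $[\V_{\Phi_3},\V_{\Phi_4}]^{\ts}R(\theta)[\V_{\Phi_1},\V_{\Phi_2}]$ arising from combining contributions of both iterated connections and the explicit $\bar\nabla_{[X,Y]}Z$ piece.

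The main obstacle will be disciplined bookkeeping. Four potentials coupled through a $p$-dependent matrix $\theta(p)$ generate on the order of two dozen intermediate monomials before cancellation, and one must trace which combinations package as $L(m)$, which as $\Gamma$-$L(\theta)$-$\Gamma$, and which as Lie-bracket inner products. The key algebraic devices are the product rule applied to $L(\theta)\Phi$ and $L(\V\theta)\Phi$ expressions, the identity $L(\theta)R(\theta)L(\theta)=L(\theta)$ used to convert raw $L(\V\theta)$-pairings into metric inner products of commutators, and the standard symmetries of the Riemannian curvature tensor. A final consistency check is that \eqref{tensor} is manifestly antisymmetric in $(\Phi_1,\Phi_2)$ and in $(\Phi_3,\Phi_4)$, symmetric under the pair swap $(\Phi_1,\Phi_2)\leftrightarrow(\Phi_3,\Phi_4)$, and satisfies the first Bianchi identity; verifying these on the final expression will serve as a strong safeguard against sign or combinatorial errors during the expansion.
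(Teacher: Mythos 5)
Your plan follows the same skeleton as the paper's proof --- the definition $\bar\R(X,Y)Z=\bar\nabla_X\bar\nabla_YZ-\bar\nabla_Y\bar\nabla_XZ-\bar\nabla_{[X,Y]}Z$, substitution of the commutator formula \eqref{comm} and the connection formulas \eqref{V12}, \eqref{a}, and the packaging of second derivatives of $\theta$ into $m(\cdot,\cdot)$ --- but it diverges at one organizational step, and the divergence matters. You propose to realize $\bar\nabla_{\V_{\Phi_2}}\V_{\Phi_3}$ as $\V_{\psi_{23}}$ for an explicit potential $\psi_{23}(p)$ and then apply the connection to this $p$-dependent field, tracking the extra term $\V_{\V_{\Phi_1}\psi_{23}}$. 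That is workable in principle, but note that extracting $\psi_{23}$ from \eqref{V12} requires $\psi_{23}=\tfrac{1}{2}\Gamma(\Phi_2,\Phi_3,p)+\tfrac{1}{2}R(\theta)[\V_{\Phi_2},\V_{\Phi_3}]$, so the directional derivative $\V_{\Phi_1}\psi_{23}$ forces you to differentiate the pseudo-inverse $R(\theta)$ along the flow and to manage the kernel of $L(\theta)$ --- a genuine additional complication your plan does not budget for. The paper avoids this entirely by using metric compatibility first, writing $\langle\bar\nabla_{\V_{\Phi_1}}\bar\nabla_{\V_{\Phi_2}}\V_{\Phi_3},\V_{\Phi_4}\rangle=\V_{\Phi_1}\langle\bar\nabla_{\V_{\Phi_2}}\V_{\Phi_3},\V_{\Phi_4}\rangle-\langle\bar\nabla_{\V_{\Phi_2}}\V_{\Phi_3},\bar\nabla_{\V_{\Phi_1}}\V_{\Phi_4}\rangle$; the first summand is then just the $\V_{\Phi_1}$-derivative of the scalar closed form \eqref{a}, which produces only $L(\V_{\Phi_1}\V_{\Phi_a}\theta)$-type pairings, and the second is computed from the split $\bar\nabla_{\V_{\Phi_2}}\V_{\Phi_3}=\tfrac{1}{2}[\V_{\Phi_2},\V_{\Phi_3}]+\tfrac{1}{2}L(\theta)\Gamma(\Phi_2,\Phi_3,p)$. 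I recommend adopting that device. Two further bookkeeping cautions: the inner products $\langle\bar\nabla_{\V_{\Phi_2}}\V_{\Phi_3},\bar\nabla_{\V_{\Phi_1}}\V_{\Phi_4}\rangle$ generate mixed terms of the form $[\V_{\Phi_a},\V_{\Phi_b}]^{\ts}\Gamma(\Phi_c,\Phi_d,p)$ which your summary does not account for and which must be absorbed into the $L(m)$ contractions, and the coefficient $2$ on $[\V_{\Phi_3},\V_{\Phi_4}]^{\ts}R(\theta)[\V_{\Phi_1},\V_{\Phi_2}]$ in fact comes entirely from the $-\bar\nabla_{[\V_{\Phi_1},\V_{\Phi_2}]}\V_{\Phi_3}$ term (its $\tfrac{1}{2}$ against the overall $\tfrac{1}{4}$), not from combining it with the iterated connections as you suggest. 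Your proposed symmetry and Bianchi checks at the end are a sensible safeguard that the paper does not carry out explicitly.
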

\begin{proof}
To derive the curvature tensor of vector fields $\V_{\Phi_a}$, $a=1, 2,3,4$,  we apply the following formula:
\begin{equation}\label{main}
\begin{split}
\langle \bar\R(\V_{\Phi_1}, \V_{\Phi_2}) \V_{\Phi_3}, \V_{\Phi_4}\rangle=& \langle \bar\nabla_{\V_{\Phi_1}}\bar\nabla_{\V_{\Phi_2}}\V_{\Phi_3}-\bar\nabla_{\V_{\Phi_2}}\bar\nabla_{\V_{\Phi_1}}\bar\V_{\Phi_3}-\bar\nabla_{[\V_{\Phi_1}, \V_{\Phi_2}]}\V_{\Phi_3} , \V_{\Phi_4} \rangle\\
=&\quad \V_{\Phi_1}\langle \bar\nabla_{\V_{\Phi_2}}\V_{\Phi_3}, \V_{\Phi_4}\rangle - \langle \bar\nabla_{\V_{\Phi_2}}\V_{\Phi_3}, \bar\nabla_{\V_{\Phi_1}}\V_{\Phi_4}\rangle \\
&- \V_{\Phi_2}\langle\bar\nabla_{\V_{\Phi_1}}\V_{\Phi_3},\V_{\Phi_4}\rangle+\langle\bar\nabla_{\V_{\Phi_1}}\V_{\Phi_3}, \bar\nabla_{\V_{\Phi_2}}\V_{\Phi_4}\rangle\\
&- \langle\bar\nabla_{[\V_{\Phi_1}, \V_{\Phi_2}]}\V_{\Phi_3}, \V_{\Phi_4}\rangle.
\end{split}
\end{equation}
We estimate the above formula in three steps. Firstly, from \eqref{a}, we denote
\begin{equation*}
\V_{abc}(p):=\langle \bar\nabla_{\V_{\Phi_a}}\V_{\Phi_b}, \V_{\Phi_c}\rangle =\frac{1}{2}\Big(\Phi_c^{\ts}L(\V_{\Phi_a}\theta)\Phi_b- \Phi_c^{\ts}L(\V_{\Phi_b}\theta)\Phi_a+\Phi_a^{\ts}L(\V_{\Phi_c}\theta)\Phi_b\Big).
\end{equation*}
Thus
\begin{equation}\label{ref1}
\begin{split}
\V_{\Phi_1}\langle \bar\nabla_{\V_{\Phi_2}}\V_{\Phi_3}, \V_{\Phi_4}\rangle
=&\frac{d}{d\epsilon}|_{\epsilon=0} \V_{234}(p+\epsilon L(\theta)\Phi_1) \\
=&\frac{1}{2}\Big(\Phi_4^{\ts}L(\V_{\Phi_1}\V_{\Phi_2}\theta)\Phi_3-\Phi_4^{\ts}L(\V_{\Phi_1}\V_{\Phi_3}\theta)\Phi_2+\Phi_2^{\ts}L(\V_{\Phi_1}\V_{\Phi_4}\theta)\Phi_3\Big).
\end{split}
\end{equation}
Similarly, by exchanging the index $1$, $2$, we have 
\begin{equation}\label{ref2}
\begin{split}
\V_{\Phi_2}\langle \bar\nabla_{\V_{\Phi_1}}\V_{\Phi_3}, \V_{\Phi_4}\rangle 
=&\frac{1}{2}\Big(\Phi_4^{\ts}L(\V_{\Phi_2}\V_{\Phi_1}\theta)\Phi_3-\Phi_4^{\ts}L(\V_{\Phi_2}\V_{\Phi_3}\theta)\Phi_1+\Phi_1^{\ts}L(\V_{\Phi_2}\V_{\Phi_4}\theta)\Phi_3\Big). 
\end{split}
\end{equation}

Secondly, since 
\begin{equation*}
\bar\nabla_{\V_{\Phi_2}}\V_{\Phi_3}=\frac{1}{2}[\V_{\Phi_2},\V_{\Phi_3}]+\frac{1}{2}L(\theta)\Gamma(\Phi_2, \Phi_3, p).
\end{equation*}
Then 
\begin{equation}\label{ref3}
\begin{split}
\langle \bar\nabla_{\V_{\Phi_2}}\V_{\Phi_3}, \bar\nabla_{\V_{\Phi_1}}\V_{\Phi_4}\rangle=&\bar\nabla_{\V_{\Phi_2}}\V_{\Phi_3}^{\ts}\cdot R(\theta)\cdot \bar\nabla_{\V_{\Phi_1}}\V_{\Phi_4}\\
=&\frac{1}{4}\Big([\V_{\Phi_2}, \V_{\Phi_3}]^{\ts}R(\theta)[\V_{\Phi_1}, \V_{\Phi_4}]+[\V_{\Phi_2}, \V_{\Phi_3}]^{\ts}\Gamma(\Phi_1,\Phi_4, p)\\
&\quad+[\V_{\Phi_1}, \V_{\Phi_4}]^{\ts}\Gamma(\Phi_2, \Phi_3, p)+\Gamma(\Phi_2,\Phi_3, p)^{\ts}L(\theta)\Gamma(\Phi_1, \Phi_4, p)\Big).
\end{split}
\end{equation}
Similarly, by exchanging the index $1$, $2$, we have \begin{equation}\label{ref4}
\begin{split}
\langle \bar\nabla_{\V_{\Phi_1}}\V_{\Phi_3}, \bar\nabla_{\V_{\Phi_2}}\V_{\Phi_4}\rangle
=&\frac{1}{4}\Big([\V_{\Phi_1}, \V_{\Phi_3}]^{\ts}R(\theta)[\V_{\Phi_2}, \V_{\Phi_4}]+[\V_{\Phi_1}, \V_{\Phi_3}]^{\ts}\Gamma(\Phi_2,\Phi_4, p)\\
&\quad+[\V_{\Phi_2}, \V_{\Phi_4}]^{\ts}\Gamma(\Phi_1, \Phi_3, p)+\frac{1}{4}\Gamma(\Phi_1,\Phi_3, p)^{\ts}L(\theta)\Gamma(\Phi_2, \Phi_4, p)\Big).
\end{split}
\end{equation}

Thirdly, denote $\V_{\Phi}=[\V_{\Phi_1}, \V_{\Phi_2}]$, where
$$\Phi=R(\theta)[\V_{\Phi_1}, \V_{\Phi_2}].$$
Then 
\begin{equation}\label{ref5}
\begin{split}
&\langle\bar\nabla_{[\V_{\Phi_1}, \V_{\Phi_2}]} \V_{\Phi_3}, \V_{\Phi_4}\rangle\\
=&\frac{1}{2}\Phi_4^{\ts}\Big\{L(\V_\Phi\theta)\Phi_3-L(\V_{\Phi_3}\theta)\Phi+L(\theta)\Gamma(\Phi, \Phi_3, p)\Big\}\\
=&\frac{1}{2}\Phi_4^{\ts}L(\V_\Phi\theta)\Phi_3-\frac{1}{2}\Phi_4^{\ts} L(\V_{\Phi_3}\theta)\Phi+\frac{1}{2}\Phi_3^{\ts}L(\V_{\Phi_4}\theta)\Phi\\
=&\frac{1}{2}\Phi_4^{\ts}L([\V_{\Phi_1}, \V_{\Phi_2}]\theta)\Phi_3+\frac{1}{2}[\V_{\Phi_4}, \V_{\Phi_3}]^{\ts}R(\theta)[\V_{\Phi_1}, \V_{\Phi_2}]\\
=&\frac{1}{2}\Phi_4^{\ts}L(\V_{\Phi_1}\V_{\Phi_2}\theta)\Phi_3-\frac{1}{2}\Phi_4^{\ts}L(\V_{\Phi_2}\V_{\Phi_1}\theta)\Phi_3+\frac{1}{2}[\V_{\Phi_4}, \V_{\Phi_3}]^{\ts}R(\theta)[\V_{\Phi_1}, \V_{\Phi_2}].
\end{split}
\end{equation}

For the simplicity of presentation, we denote 
\begin{equation}\label{abcd}
(abcd):=\Phi_a^{\ts}L(\V_{\Phi_b}\V_{\Phi_c}\theta)\Phi_d\ .
\end{equation}
Clearly, we have $(abcd)=(dbca)$. Substituting equations \eqref{ref1}, \eqref{ref2}, \eqref{ref3}, \eqref{ref4}, \eqref{ref5} into equation \eqref{main}, we have 
\begin{equation}\label{ref6}
\begin{split}
&\langle \bar\R(\V_{\Phi_1}, \V_{\Phi_2})\V_{\Phi_3}, \V_{\Phi_4}\rangle\\
=&\quad\frac{1}{4}\Big\{(2143)+(2413)\Big\}-\frac{1}{4}\Big\{(2134)+(2314)\Big\}+\frac{1}{4}\Big\{(1234)+(1324)\Big\}-\frac{1}{4}\Big\{(1243)+(1423)\Big\}\\ 
&+\frac{1}{4}  \Gamma(\Phi_1, \Phi_3, p)^{\ts}L(\theta) \Gamma(\Phi_2, \Phi_4, p)-\frac{1}{4}   \Gamma(\Phi_2, \Phi_3, p)^{\ts}L(\theta)\Gamma(\Phi_1, \Phi_4, p)\\
&+\frac{1}{4}\Big\{ [\V_{\Phi_1}, \V_{\Phi_3}]^{\ts}R(\theta)[\V_{\Phi_2}, \V_{\Phi_4}]- [\V_{\Phi_2}, \V_{\Phi_3}]^{\ts}R(\theta)[\V_{\Phi_1}, \V_{\Phi_4}]+2  [\V_{\Phi_3}, \V_{\Phi_4}]^{\ts}R(\theta)[\V_{\Phi_1}, \V_{\Phi_2}]\Big\}.
\end{split}
\end{equation}
We finish the derivation of equation \eqref{tensor}. 
\end{proof}
We present an explicit formulation of the Riemannian curvature $\bar\R$. Denote a third order iterative Gamma operator $\Gamma^3\colon \mathbb{R}^n/\mathbb{R}\times\mathbb{R}^n/\mathbb{R}\times\mathbb{R}^n/\mathbb{R}\times\mathbb{R}^n/\mathbb{R}\times\sP_+\rightarrow\mathbb{R}^{n\times n}$. Given vectors $\Phi_1$, $\Phi_2$, $\Phi_3$, $\Phi_4\in\mathbb{R}^n$, write 
$\GG(\Phi_1,\Phi_2,\Phi_3,\Phi_4,p)=(\GG(\Phi_1,\Phi_2,\Phi_3,\Phi_4,p)_{ij})_{1\leq i,j\leq n}$, such that
\begin{equation*}
\begin{split}
\GG(\Phi_1,\Phi_2,\Phi_3, \Phi_4, p)_{ij}:=&\frac{1}{2}\sum_{k=1}^n(\nabla_{\omega})_{ik}\Big((\nabla_{\omega}\Gamma(\Phi_1,\Phi_2,p))_{ij}(\nabla_\omega\Phi_4)_{ij}\frac{\partial\theta_{ij}}{\partial p_i}\Big)(\nabla_\omega \Phi_3)_{ik} \theta_{ik}.
\end{split}
\end{equation*}
Here, we denote a matrix $A=(A_{ij})_{i\leq i,j\leq n}\in\mathbb{R}^{n\times n}$, with $A_{ij}:=(\nabla_{\omega}\Gamma(\Phi_1,\Phi_2,p))_{ij}(\nabla_\omega\Phi_4)_{ij}\frac{\partial\theta_{ij}}{\partial p_i}$, and 
\begin{equation*}
(\nabla_\omega)_{ik}A_{ij}:=\sqrt{\omega_{ik}}\Big(A_{kj}-A_{ij}\Big). 
\end{equation*}
We also denote a matrix $C=(C_{ij})_{1\leq i,j\leq n}\in\mathbb{R}^{n\times n}$, such that 
\begin{equation*}
\begin{split}
(\nabla_\omega)_{ij}(\nabla_\omega)_{kl}C 
:=\sqrt{\omega_{ij}}\sqrt{\omega_{kl}}(C_{ik}-C_{il}-C_{jk}+C_{jl}), 
\end{split}
\end{equation*}
and write $\nabla_{\omega}\Phi_1\nabla_\omega\Phi_2\nabla_{pp}^2\theta=((\nabla_{\omega}\Phi_1\nabla_\omega\Phi_2\nabla_{pp}^2\theta)_{ij})_{1\leq i,j\leq n} \in \mathbb{R}^{n\times n}$, such that 
\begin{equation*}
(\nabla_{\omega}\Phi_1\nabla_\omega\Phi_2\nabla_{pp}^2\theta)_{ij}:=(\nabla_{\omega}\Phi_1)_{ij}(\nabla_\omega\Phi_2)_{ij}\frac{\partial^2\theta_{ij}}{\partial p_i\partial p_j}.
\end{equation*}

\begin{proposition}
The Riemannian curvature $\bar\R$ in equation \eqref{tensor} satisfies   
\begin{equation}\label{tensor_explicit}
\begin{split}
&\langle \bar \R(\V_{\Phi_1}, \V_{\Phi_2})\V_{\Phi_3}, \V_{\Phi_4}\rangle\\
=&\quad\frac{1}{2}\sum_{i,j,k,l=1}^n \frac{\partial^2\theta_{ij}}{\partial p_i\partial p_i} \theta_{ik}\theta_{il}\sqrt{\omega_{ik}}\sqrt{\omega_{il}}\Big\{-(\nabla_\omega\Phi_2)_{ij}(\nabla_\omega\Phi_4)_{ij}(\nabla_\omega \Phi_1)_{ik}(\nabla_\omega \Phi_3)_{il}\\
&\hspace{5.8cm}-(\nabla_\omega\Phi_1)_{ij}(\nabla_\omega\Phi_3)_{ij}(\nabla_\omega \Phi_2)_{ik}(\nabla_\omega \Phi_4)_{il}\\
&\hspace{5.8cm}+(\nabla_\omega\Phi_2)_{ij}(\nabla_\omega\Phi_3)_{ij}(\nabla_\omega \Phi_1)_{ik}(\nabla_\omega \Phi_4)_{il}\\
&\hspace{5.8cm}+(\nabla_\omega\Phi_1)_{ij}(\nabla_\omega\Phi_4)_{ij}(\nabla_\omega \Phi_2)_{ik}(\nabla_\omega \Phi_3)_{il}\Big\}\\
&+\frac{1}{8}\sum_{i,j,k,l=1}^n \theta_{ij}\theta_{kl}\Big\{-(\nabla_\omega)_{ij}(\nabla_\omega)_{kl}(\nabla_{\omega}\Phi_2\nabla_\omega\Phi_4\nabla_{pp}^2\theta) (\nabla_\omega \Phi_1)_{ij}(\nabla_\omega \Phi_3)_{kl}\\
&\hspace{3.1cm}-(\nabla_\omega)_{ij}(\nabla_\omega)_{kl}(\nabla_{\omega}\Phi_1\nabla_\omega\Phi_3\nabla_{pp}^2\theta) (\nabla_\omega \Phi_2)_{ij}(\nabla_\omega \Phi_4)_{kl}\\
&\hspace{3.1cm}+(\nabla_\omega)_{ij}(\nabla_\omega)_{kl}(\nabla_{\omega}\Phi_2\nabla_\omega\Phi_3\nabla_{pp}^2\theta) (\nabla_\omega \Phi_1)_{ij}(\nabla_\omega \Phi_4)_{kl}\\
&\hspace{3.1cm}+(\nabla_\omega)_{ij}(\nabla_\omega)_{kl}(\nabla_{\omega}\Phi_1\nabla_\omega\Phi_4\nabla_{pp}^2\theta) (\nabla_\omega \Phi_2)_{ij}(\nabla_\omega \Phi_3)_{kl}\Big\}\\
&+\frac{1}{8}\sum_{(i,j)\in E}\Big\{\quad-\GG(\Phi_2,\Phi_4, \Phi_1, \Phi_3, p)_{ij}-\GG(\Phi_2,\Phi_4, \Phi_3, \Phi_1, p)_{ij}\\
&\hspace{2.5cm}-\GG(\Phi_1,\Phi_3, \Phi_2, \Phi_4, p)_{ij}-\GG(\Phi_1,\Phi_3, \Phi_4, \Phi_2, p)_{ij}\\
&\hspace{2.5cm}+\GG(\Phi_2,\Phi_3, \Phi_1, \Phi_4, p)_{ij}+\GG(\Phi_2,\Phi_3, \Phi_4, \Phi_1, p)_{ij}\\
&\hspace{2.5cm}+\GG(\Phi_1,\Phi_4, \Phi_2, \Phi_3, p)_{ij}+\GG(\Phi_1,\Phi_4, \Phi_3, \Phi_2, p)_{ij}\\
&\hspace{1.0cm}+ \theta_{ij}\Big[(\nabla_\omega \Gamma(\Phi_1, \Phi_3, p))_{ij}(\nabla_\omega \Gamma(\Phi_2, \Phi_4, p))_{ij}-(\nabla_\omega \Gamma(\Phi_2, \Phi_3, p))_{ij}(\nabla_\omega \Gamma(\Phi_1, \Phi_4, p))_{ij}\Big]\Big\}\\
&+\frac{1}{4}\Big\{[\V_{\Phi_1}, \V_{\Phi_3}]^{\ts}R(\theta)[\V_{\Phi_2}, \V_{\Phi_4}]-[\V_{\Phi_2}, \V_{\Phi_3}]^{\ts}R(\theta)[\V_{\Phi_1}, \V_{\Phi_4}]+2[\V_{\Phi_3}, \V_{\Phi_4}]^{\ts}R(\theta)[\V_{\Phi_1}, \V_{\Phi_2}]\Big\}.
\end{split}
\end{equation}

\end{proposition}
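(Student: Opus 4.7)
The plan is to obtain \eqref{tensor_explicit} by a direct expansion of \eqref{tensor}, using the bilinear-form identity $\Phi_a^{\ts}L(a)\Phi_d=\tfrac{1}{2}\sum_{(i,j)\in E}(\nabla_\omega\Phi_a)_{ij}(\nabla_\omega\Phi_d)_{ij}\,a_{ij}$ (already established in the semi-positive definiteness calculation), together with the definitions of $\W_{\Phi_1,\Phi_2}\theta$, $\nabla_p\theta\,L(\V_{\Phi_1}\theta)\Phi_2$, and $m(\Phi_1,\Phi_2)$. The tacit assumption that each $\theta_{ij}$ depends only on the coordinates $(p_i,p_j)$, consistent with \eqref{theta}, allows the chain rule identity $\V_{\Phi}(\partial\theta_{ij}/\partial p_i)=(\partial^{2}\theta_{ij}/\partial p_i^{2})(L(\theta)\Phi)_i+(\partial^{2}\theta_{ij}/\partial p_i\partial p_j)(L(\theta)\Phi)_j$, with the analogous expression for the $\partial/\partial p_j$ factor.

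First, I would insert $m(\Phi_b,\Phi_c)_{ij}=-2(\W_{\Phi_b,\Phi_c}\theta)_{ij}-(\nabla_p\theta\,L(\V_{\Phi_b}\theta)\Phi_c+\nabla_p\theta\,L(\V_{\Phi_c}\theta)\Phi_b)_{ij}$ into each of the four signed $\Phi_a^{\ts}L(m(\Phi_b,\Phi_c))\Phi_d$ contributions of \eqref{tensor}. This naturally splits the curvature into: (i) a $\W$-block, (ii) a $\nabla_p\theta\,L(\V_\bullet\theta)$-block, (iii) the $\Gamma\!\cdot\!L(\theta)\!\cdot\!\Gamma$-block carried directly from \eqref{tensor}, and (iv) the commutator block carried directly from \eqref{tensor}. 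Each remaining task is the explicit computation of (i) and (ii) in network coordinates.

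For (i), after applying the chain rule, $(\W_{\Phi_b,\Phi_c}\theta)_{ij}$ splits into a \emph{diagonal} part containing $\partial^{2}\theta_{ij}/\partial p_i^{2}$ (with its symmetric partner at the $j$-end) and an \emph{off-diagonal} part containing $\partial^{2}\theta_{ij}/\partial p_i\partial p_j$. For the diagonal part, each factor $(L(\theta)\Phi)_i$ is further expanded via $(L(\theta)\Phi)_i=-\sum_{l\in N(i)}\sqrt{\omega_{il}}\,\theta_{il}(\nabla_\omega\Phi)_{il}$; this yields exactly the factor $\theta_{ik}\theta_{il}\sqrt{\omega_{ik}}\sqrt{\omega_{il}}$, and the four signed contributions from \eqref{tensor} assemble into the antisymmetric combination displayed in the first block of \eqref{tensor_explicit}. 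For the off-diagonal part, the outer bilinear-form indices $(i,j)$ and the inner graph-gradient indices $(k,l)$ (coming from $(L(\theta)\Phi)_j$) reorganize into the double graph-gradient operator $(\nabla_\omega)_{ij}(\nabla_\omega)_{kl}$ acting on the matrix $\nabla_\omega\Phi\nabla_\omega\Phi\nabla^{2}_{pp}\theta$, which reproduces the second block.

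For (ii), the key step is to identify
\[
\Phi_a^{\ts}L\bigl(\nabla_p\theta\,L(\V_{\Phi_b}\theta)\Phi_c\bigr)\Phi_d
\]
with the third-order iterative Gamma $\GG$ defined just before the statement: expanding $L(\V_{\Phi_b}\theta)\Phi_c$, using $(\V_{\Phi_b}\theta)_{ij}=(\partial_i\theta_{ij})(\V_{\Phi_b})_i+(\partial_j\theta_{ij})(\V_{\Phi_b})_j$, and letting the outer $L$ supply one more graph-gradient and one more $\theta$-factor yields exactly the summand of $\GG(\Phi_b,\Phi_c,\bullet,\bullet,p)$. Each of the four signed $m$-terms in \eqref{tensor} contributes two such nested pieces (one with $(\Phi_b,\Phi_c)$ and one with $(\Phi_c,\Phi_b)$), producing the eight signed $\GG$-terms in the third block. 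The $\Gamma\!\cdot\!L(\theta)\!\cdot\!\Gamma$ terms become the $(\nabla_\omega\Gamma)(\nabla_\omega\Gamma)\theta_{ij}$ line after one application of the bilinear identity, and the commutator block transcribes verbatim. Tracking prefactors, the global $\tfrac14$ of \eqref{tensor} combined with the $\tfrac12$ of each bilinear expansion produces the $\tfrac14$ on the first block and the $\tfrac18$ on the second and third blocks.

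The hard part is the combinatorial bookkeeping for (i): eight signed contributions arise (four $m$-terms, each with a diagonal and an off-diagonal piece), and one must verify that the antisymmetrizations under $\Phi_1\leftrightarrow\Phi_2$ and $\Phi_3\leftrightarrow\Phi_4$ implicit in \eqref{tensor} line up with the $(i,j)\leftrightarrow(k,l)$ relabelings forced by the double-$\nabla_\omega$ operator. Repeated use of the antisymmetry $(\nabla_\omega\Phi)_{ji}=-(\nabla_\omega\Phi)_{ij}$ and the symmetries $\theta_{ij}=\theta_{ji}$, $\omega_{ij}=\omega_{ji}$, together with careful relabelings of summation indices, is precisely what carries out this matching and closes the argument.
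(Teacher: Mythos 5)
Your proposal is correct and follows essentially the same route as the paper: expand \eqref{tensor} through the bilinear identity $\Phi_a^{\ts}L(a)\Phi_d=\tfrac12\sum_{(i,j)\in E}(\nabla_\omega\Phi_a)_{ij}(\nabla_\omega\Phi_d)_{ij}a_{ij}$, split $m$ into its $\W$ and $\nabla_p\theta\,L(\V_{\bullet}\theta)$ pieces, separate the diagonal $\partial^2\theta_{ij}/\partial p_i^2$ and cross $\partial^2\theta_{ij}/\partial p_i\partial p_j$ contributions (expanding $(L(\theta)\Phi)_i$ into graph gradients for the former and reorganizing the latter into the double operator $(\nabla_\omega)_{ij}(\nabla_\omega)_{kl}$), and identify the nested terms with $\GG$. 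The one slip is in the prefactor bookkeeping for the first block: once the factor $2$ in $m=-2\W-\cdots$ and the $i\leftrightarrow j$ symmetrization of the edge sum are both counted, the diagonal second-derivative block carries the coefficient $\tfrac12$ appearing in \eqref{tensor_explicit}, not $\tfrac14$.
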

\begin{proof}
We derive equation \eqref{tensor} explicitly. Note that 
\begin{equation}\label{tend}
\begin{split}
&\Phi_2^{\ts}L(m(\Phi_1, \Phi_3))\Phi_4\\=&\frac{1}{2}\sum_{(i,j)\in E} (\nabla_\omega\Phi_2)_{ij}(\nabla_\omega\Phi_4)_{ij}m(\Phi_1, \Phi_3)_{ij}\\
=& \frac{1}{2}\sum_{(i,j)\in E} (\nabla_\omega\Phi_2)_{ij}(\nabla_\omega\Phi_4)_{ij}\Big\{2\W_{\Phi_1, \Phi_3}\theta-\nabla_p \theta L(\V_{\Phi_1}\theta)\Phi_3-\nabla_p\theta L(\V_{\Phi_3}\theta)\Phi_1 \Big\}_{ij}.
\end{split}
\end{equation}
We next estimate the following three terms in equation \eqref{tend}. Firstly,  
\begin{equation}\label{formula}
\begin{split}
&\frac{1}{2}\sum_{(i,j)\in E} (\nabla_\omega\Phi_2)_{ij}(\nabla_\omega\Phi_4)_{ij}(2\W_{\Phi_1, \Phi_3}\theta)_{ij}\\
=&\sum_{(i,j)\in E} (\nabla_\omega\Phi_2)_{ij}(\nabla_\omega\Phi_4)_{ij}(\V_{\Phi_3}(\frac{\partial\theta_{ij}}{\partial p_i})(\V_{\Phi_1})_i+\V_{\Phi_3}(\frac{\partial\theta_{ij}}{\partial p_j})(\V_{\Phi_1})_j)\\
=&2\sum_{(i,j)\in E} (\nabla_\omega\Phi_2)_{ij}(\nabla_\omega\Phi_4)_{ij}(\V_{\Phi_3}\frac{\partial\theta_{ij}}{\partial p_i})(\V_{\Phi_1})_i\\
=&2\sum_{(i,j)\in E} (\nabla_\omega\Phi_2)_{ij}(\nabla_\omega\Phi_4)_{ij}\Big(\frac{\partial^2\theta_{ij}}{\partial p_i\partial p_i}(\V_{\Phi_1})_i(\V_{\Phi_3})_i+\frac{\partial^2\theta_{ij}}{\partial p_i\partial p_j}(\V_{\Phi_1})_i(\V_{\Phi_3})_j\Big)\\
=&\quad 2\sum_{i,j,k,l=1}^n (\nabla_\omega\Phi_2)_{ij}(\nabla_\omega\Phi_4)_{ij}\frac{\partial^2\theta_{ij}}{\partial p_i\partial p_i}\sqrt{\omega_{ik}}(\nabla_\omega \Phi_1)_{ik}\sqrt{\omega_{il}}(\nabla_\omega \Phi_3)_{il}\theta_{ik}\theta_{il}\\
&+\frac{1}{2}\sum_{i,j,k,l=1}^n (\nabla_\omega)_{ij}(\nabla_\omega)_{kl}(\nabla_{\omega}\Phi_2\nabla_\omega\Phi_4\nabla_{pp}^2\theta) (\nabla_\omega \Phi_1)_{ij}(\nabla_\omega \Phi_3)_{kl}\theta_{ij}\theta_{kl}.
\end{split}
\end{equation}
To derive the last equality of \eqref{formula}, we use the following derivation. Denote $A_{ij}=(\nabla_\omega\Phi_2)_{ij}(\nabla_\omega\Phi_4)_{ij}\frac{\partial^2\theta_{ij}}{\partial p_i\partial p_j}$, then 
\begin{equation*}
\begin{split}
\sum_{i,j=1}^n A_{ij}(\V_{\Phi_1})_i(\V_{\Phi_3})_j=&\sum_{i,j=1}^n A_{ij}\sum_{i'=1}^n\omega_{ii'}(\Phi_i-\Phi_i')\theta_{ii'}\sum_{j'=1}^n\omega_{jj'}(\Phi_j-\Phi_{j'})\theta_{jj'}\\
=&\frac{1}{2}\sum_{i,i',j,j'=1}^n (A_{ij}-A_{i'j})(\Phi_i-\Phi_i')(\Phi_j-\Phi_{j'})\omega_{ii'}\omega_{jj'}\theta_{ii'}\theta_{jj'}\\
=&\frac{1}{4}\sum_{i,i',j,j'=1}^n \Big(A_{ij}-A_{i'j}-A_{ij'}+A_{i'j'}\Big)(\Phi_i-\Phi_i')(\Phi_j-\Phi_{j'})\omega_{ii'}\omega_{jj'}\theta_{ii'}\theta_{jj'}.
\end{split}
\end{equation*}
This finishes the derivation by switching indices $(i,i'), (j,j')$ to $(i,j), (k,l)$.

Secondly, 
\begin{equation*}
\begin{split}
&\frac{1}{2}\sum_{(i,j)\in E}(\nabla_\omega\Phi_2)_{ij}(\nabla_\omega\Phi_4)_{ij}(\nabla_p \theta L(\V_{\Phi_1}\theta)\Phi_3)_{ij}\\
=&\frac{1}{4}\sum_{(i,j)\in E}(\nabla_\omega\Phi_2)_{ij}(\nabla_\omega\Phi_4)_{ij}\Big(\frac{\partial \theta_{ij}}{\partial p_i}(L(\V_{\Phi_1}\theta)\Phi_3)_i+\frac{\partial \theta_{ij}}{\partial p_j}(L(\V_{\Phi_1}\theta)\Phi_3)_j\Big)\\
=&\frac{1}{2}\sum_{(i,j)\in E}(\nabla_\omega\Phi_2)_{ij}(\nabla_\omega\Phi_4)_{ij}\frac{\partial \theta_{ij}}{\partial p_i}(L(\V_{\Phi_1}\theta)\Phi_3)_i\\
=&\frac{1}{2}\sum_{i=1}^n\Gamma(\Phi_2,\Phi_4,p)_i(L(\V_{\Phi_1}\theta)\Phi_3)_i\\
=&\frac{1}{4}\sum_{i,j=1}^n(\nabla_{\omega}\Gamma(\Phi_2,\Phi_4,p))_{ij}(\nabla_\omega\Phi_3)_{ij}(\V_{\Phi_1}\theta)_{ij}\\
=&\frac{1}{2}\sum_{i,j=1}^n(\nabla_{\omega}\Gamma(\Phi_2,\Phi_4,p))_{ij}(\nabla_\omega\Phi_3)_{ij}\frac{\partial\theta_{ij}}{\partial p_i} (L(\theta)\Phi_1)_i\\
=&\frac{1}{2}\sum_{i,j=1}^n(\nabla_{\omega}\Gamma(\Phi_2,\Phi_4,p))_{ij}(\nabla_\omega\Phi_3)_{ij}\frac{\partial\theta_{ij}}{\partial p_i}\sum_{k=1}^n\sqrt{\omega_{ik}}(\nabla_\omega \Phi_1)_{ik} \theta_{ik}\\
=&\frac{1}{4}\sum_{i,j,k=1}^n(\nabla_{\omega})_{ik}\Big((\nabla_{\omega}\Gamma(\Phi_2,\Phi_4,p))_{ij}(\nabla_\omega\Phi_3)_{ij}\frac{\partial\theta_{ij}}{\partial p_i}\Big)(\nabla_\omega \Phi_1)_{ik} \theta_{ik}\\
=&\sum_{(i,j)\in E}\frac{1}{2}\GG(\Phi_2,\Phi_4, \Phi_1, \Phi_3, p)_{ij},
\end{split}
\end{equation*}
where we apply the definition of $\GG$. Thirdly, we switch the indices $1$ and $3$ in the above formula. We last derive 
\begin{equation*}
\Gamma(\Phi_1, \Phi_3, p)^{\ts}L(\theta) \Gamma(\Phi_2, \Phi_4, p)\\
=\frac{1}{2}\sum_{(i,j)\in E} (\nabla_\omega \Gamma(\Phi_1, \Phi_3, p))_{ij}(\nabla_\omega \Gamma(\Phi_2, \Phi_4, p))_{ij}\theta_{ij}.
\end{equation*}
Combining the above derivations in equation \eqref{tensor}, we derive equation \eqref{tensor_explicit}. 
\end{proof}

\subsection{Sectional curvatures}
We last derive the sectional, Ricci, and scalar curvatures in $(\mathcal{P}_+, \g)$. Define the sectional curvature in $(\sP_+, \g)$ as $\bar\K:=\K^\g\colon \mathcal{P}_+\times\mathbb{R}^n/\mathbb{R}\times\mathbb{R}^n/\mathbb{R}\rightarrow\mathbb{R}$. 
\begin{proposition}
Given potential $\Phi_1$, $\Phi_2\in \mathbb{R}^{n}$, the sectional curvature at directions $\V_{\Phi_1}$, $\V_{\Phi_2}$ satisfies 
\begin{equation*}
\begin{split}
&\bar\K(\V_{\Phi_1},\V_{\Phi_2})\\
=&\frac{1}{4}\Big\{\Phi_2^{\ts} L(m(\Phi_1, \Phi_2))\Phi_1 + \Phi_1^{\ts}L(m(\Phi_2, \Phi_1))\Phi_2-\Phi_2^{\ts}L(m(\Phi_1, \Phi_1))\Phi_2- \Phi_1^{\ts}L(m(\Phi_2, \Phi_2))\Phi_1\\
&\quad+\Gamma(\Phi_1, \Phi_2, p)^{\ts}L(\theta) \Gamma(\Phi_1, \Phi_2, p)- \Gamma(\Phi_2, \Phi_2, p)^{\ts}L(\theta)\Gamma(\Phi_1, \Phi_1, p)\\
&\quad-3[\V_{\Phi_1}, \V_{\Phi_2}]^{\ts} R(\theta) [\V_{\Phi_1}, \V_{\Phi_2}]\Big\}\cdot\Big(\langle \V_{\Phi_1}, \V_{\Phi_1}\rangle\langle \V_{\Phi_2}, \V_{\Phi_2}\rangle- \langle \V_{\Phi_1}, \V_{\Phi_2}\rangle^2\Big)^{-1}. 
\end{split}
\end{equation*}
\end{proposition}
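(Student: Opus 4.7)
The plan is to reduce the sectional curvature formula to a direct specialization of the Riemannian curvature identity \eqref{tensor} in Theorem \ref{thm2}, using the standard Riemannian definition
\[
\bar\K(\V_{\Phi_1},\V_{\Phi_2})=\frac{\langle \bar\R(\V_{\Phi_1},\V_{\Phi_2})\V_{\Phi_2},\V_{\Phi_1}\rangle}{\langle \V_{\Phi_1},\V_{\Phi_1}\rangle\langle \V_{\Phi_2},\V_{\Phi_2}\rangle-\langle \V_{\Phi_1},\V_{\Phi_2}\rangle^2}.
\]
So the first step is to substitute $\Phi_3=\Phi_2$ and $\Phi_4=\Phi_1$ into the eight-term expression for $\langle \bar\R(\V_{\Phi_1},\V_{\Phi_2})\V_{\Phi_3},\V_{\Phi_4}\rangle$ appearing in equation \eqref{tensor}, and then identify the denominator with the squared norm of the bivector $\V_{\Phi_1}\wedge\V_{\Phi_2}$ coming from the metric $\g$.

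Second, I would simplify the resulting expression by exploiting two basic symmetries. On the one hand, the bilinear operator $\Gamma(\cdot,\cdot,p)$ is manifestly symmetric in its first two vector arguments, because
\[
\Gamma(\Phi_a,\Phi_b,p)_i=\sum_{j\in N(i)}(\nabla_\omega\Phi_a)_{ij}(\nabla_\omega\Phi_b)_{ij}\frac{\partial\theta_{ij}}{\partial p_i}(p),
\]
so under the substitution $\Phi_3=\Phi_2$, $\Phi_4=\Phi_1$ the quadratic $\Gamma$--terms collapse to $\Gamma(\Phi_1,\Phi_2,p)^{\ts}L(\theta)\Gamma(\Phi_1,\Phi_2,p)$ and $\Gamma(\Phi_2,\Phi_2,p)^{\ts}L(\theta)\Gamma(\Phi_1,\Phi_1,p)$, matching the Gamma contribution in the claim. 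On the other hand, the commutator bracket $[\cdot,\cdot]$ is antisymmetric and vanishes on equal arguments; hence $[\V_{\Phi_2},\V_{\Phi_2}]=0$ kills one of the three commutator terms, while
\[
[\V_{\Phi_1},\V_{\Phi_2}]^{\ts}R(\theta)[\V_{\Phi_2},\V_{\Phi_1}]=-[\V_{\Phi_1},\V_{\Phi_2}]^{\ts}R(\theta)[\V_{\Phi_1},\V_{\Phi_2}],
\qquad
2[\V_{\Phi_2},\V_{\Phi_1}]^{\ts}R(\theta)[\V_{\Phi_1},\V_{\Phi_2}]=-2[\V_{\Phi_1},\V_{\Phi_2}]^{\ts}R(\theta)[\V_{\Phi_1},\V_{\Phi_2}],
\]
so the three commutator contributions combine, after accounting for the minus sign in the $[\V_{\Phi_2},\V_{\Phi_3}]^{\ts}R(\theta)[\V_{\Phi_1},\V_{\Phi_4}]$ slot of \eqref{tensor}, into the single term $-3[\V_{\Phi_1},\V_{\Phi_2}]^{\ts}R(\theta)[\V_{\Phi_1},\V_{\Phi_2}]$ required by the statement.

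Finally, the four $L(m(\cdot,\cdot))$ terms are obtained by merely renaming indices under $\Phi_3\mapsto\Phi_2$, $\Phi_4\mapsto\Phi_1$ in the corresponding four terms of \eqref{tensor}, which produces exactly $\Phi_2^{\ts}L(m(\Phi_1,\Phi_2))\Phi_1+\Phi_1^{\ts}L(m(\Phi_2,\Phi_1))\Phi_2-\Phi_2^{\ts}L(m(\Phi_1,\Phi_1))\Phi_2-\Phi_1^{\ts}L(m(\Phi_2,\Phi_2))\Phi_1$. Dividing by the area denominator $\langle \V_{\Phi_1},\V_{\Phi_1}\rangle\langle \V_{\Phi_2},\V_{\Phi_2}\rangle-\langle \V_{\Phi_1},\V_{\Phi_2}\rangle^2$ then yields the claimed expression. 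There is essentially no technical obstacle here: the only point requiring care is the sign bookkeeping in the commutator block, where one must correctly track that the term entering \eqref{tensor} with a minus sign becomes $0$ after substitution while the two remaining commutator terms each contribute a minus sign via the swap $\V_{\Phi_2}\leftrightarrow\V_{\Phi_1}$, giving the factor $-3$ rather than, say, $-1$ or $+1$.
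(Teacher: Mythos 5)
Your proof is correct and takes essentially the same route as the paper, which likewise obtains the formula by specializing the curvature identity \eqref{tensor} at $\Phi_3=\Phi_2$, $\Phi_4=\Phi_1$ and invoking $[\V_{\Phi_a},\V_{\Phi_a}]=0$. Your sign bookkeeping in the commutator block (the middle term vanishing and the other two each picking up a minus sign to give the factor $-3$) is exactly right, and in fact spells out more detail than the paper's two-line argument.
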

\begin{proof}
From the definition of sectional curvature, we have 
\begin{equation*}
{\bar\K}(\V_{\Phi_1},\V_{\Phi_2})=\frac{\bar{\R}((\V_{\Phi_1},\V_{\Phi_2})\V_{\Phi_2}, \V_{\Phi_1})}{\langle \V_{\Phi_1}, \V_{\Phi_1}\rangle\langle \V_{\Phi_2}, \V_{\Phi_2}\rangle- \langle \V_{\Phi_1}, \V_{\Phi_2}\rangle^2}. 
\end{equation*}
Using the fact that $[\V_{\Phi_a}, V_{\Phi_a}]=0$, $a=1,2$, we finish the proof. 
\end{proof}
  
\section{Example I: Probability manifolds from chemical monomolecular triangle reactions}\label{sec4}
In this section, we present an example of a thermodynamical probability manifold from a system of monomolecular triangle reactions. We note that this is a classic example of Onsager reciprocal relations \cite[Section 3]{Onsager}. 
We present the Levi-Civita connection for this probability manifold. 

\subsection{Reciprocal relations} Suppose there is a homogenous phase in three forms $\{A=1, B=2, C=3\}$. A phase $i\in \{1,2,3\}$ can transform itself into others as follows. 
\begin{center}
\begin{tikzpicture}[->,shorten >=1pt,auto,node distance=2cm,
        thick,main node/.style={circle,fill=blue!20,draw,minimum size=0.5cm,inner sep=0pt]} ]
   \node[main node] (1) {A};
    \node[main node] (2) [below right of=1]  {C};
    \node[main node](3)[below left of=1]{B};
    \path[<->]
    (1) edge node {} (2)
    (2) edge node{} (3)
    (1) edge node{} (3);
\end{tikzpicture}
\end{center}
Assume that the reaction obeys a simple mass-action law. The fraction of molecule $A=1$ changes into molecule $B=2$ in a short time $\Delta t>0$ follows 
$Q_{12}\Delta t$,
where $Q_{12}$ is a constant reaction rate. It means that the rates of change for the density of three species $A$, $B$, $C$, denoted by  $p_1$, $p_2$, $p_3$, satisfy the following master equation system: 
\begin{equation}\label{mc}
\left\{\begin{aligned}
& \frac{dp_1}{dt}=Q_{21}p_2+Q_{31}p_3-(Q_{12}+Q_{13})p_1,\\
& \frac{dp_2}{dt}=Q_{12}p_1+Q_{32}p_3-(Q_{21}+Q_{23})p_2,\\
&\frac{dp_3}{dt}=Q_{13}p_1+Q_{23}p_2-(Q_{31}+Q_{32})p_3. 
\end{aligned}\right.
\end{equation}
Denote $\pi_1$, $\pi_2$, $\pi_3$ as the equilibrium of the master equation \eqref{mc}. This means 
\begin{equation*}
Q_{21}\pi_2+Q_{31}\pi_3-(Q_{12}+Q_{13})\pi_1=0.
\end{equation*}
Similar conditions hold for the second and third equations in the master equation system \eqref{mc}. In this case, the Onsager reciprocal relation means that the following detailed balance condition holds. We require 
\begin{equation*}
Q_{12}\pi_1=Q_{21}\pi_2, \quad Q_{23}\pi_2=Q_{32}\pi_3, \quad Q_{31}\pi_3=Q_{13}\pi_1. 
\end{equation*}
Clearly, the detailed balance condition implies that 
\begin{equation*}
Q_{12}Q_{23}Q_{31}=Q_{21}Q_{32}Q_{13}. 
\end{equation*}
The above is known as the time-reversible property of the Markov chain. Denote a relative entropy function (a free energy up to a constant) by
\begin{equation*}
\mathrm{D}_{\mathrm{KL}}(p\|\pi)=p_1\log\frac{p_1}{\pi_1}+p_2\log\frac{p_2}{\pi_2}+p_3\log\frac{p_3}{\pi_3}.
\end{equation*}
The force vector is the differential of the free energy 
\begin{equation*}
d \mathrm{D}_{\mathrm{KL}}(p\|\pi)=  (\log\frac{p_1}{\pi_1}+1)d p_1+  (\log\frac{p_2}{\pi_2}+1)d p_2+  (\log\frac{p_3}{\pi_3}+1)d p_3. 
\end{equation*}
Denote the response matrix function by
\begin{equation*}
L(\theta)=\begin{pmatrix}
\omega_{12}\theta_{12}+\omega_{13}\theta_{13} & -\omega_{12}\theta_{12} & -\omega_{13}\theta_{13} \\
-\omega_{12}\theta_{12}   & \omega_{12}\theta_{12}+\omega_{23}\theta_{23} & -\omega_{23}\theta_{23}\\
-\omega_{13}\theta_{13}  & -\omega_{23}\theta_{23}   & \omega_{13}\theta_{13}+\omega_{23}\theta_{23}
\end{pmatrix},
\end{equation*}
where  
\begin{equation*}
\omega_{12}=Q_{12}\pi_1, \quad \omega_{23}=Q_{23}\pi_2, \quad \omega_{13}=Q_{13}\pi_1,  
\end{equation*}
and 
\begin{equation*}
\theta_{12}(p)=\frac{\frac{p_1}{\pi_1}-\frac{p_2}{\pi_2}}{\log\frac{p_1\pi_2}{p_2\pi_1}},\quad \theta_{23}(p)=\frac{\frac{p_2}{\pi_2}-\frac{p_3}{\pi_3}}{\log\frac{p_2\pi_3}{p_3\pi_2}}, \quad \theta_{13}(p)=\frac{\frac{p_1}{\pi_1}-\frac{p_3}{\pi_3}}{\log\frac{p_1\pi_3}{p_3\pi_1}}. 
\end{equation*}
Thus, $R(\theta)=L(\theta)^{\dd}$. Thus, equation \eqref{mc} can be written as the Onsager gradient flow: 
\begin{equation*}
\frac{dp}{dt}=-L(\theta)\cdot \nabla_p \mathrm{D}_{\mathrm{KL}}(p\|\pi).  
\end{equation*}

\subsection{Levi-Civita connection in probability manifolds}
Denote the probability simplex set on a three-state space as 
\begin{equation*}
\Delta_3:=\Big\{(p_1, p_2, p_3)\in\mathbb{R}^3\colon p_1+p_2+p_3=1,\quad p_1, p_2, p_3>0\Big\}.
\end{equation*}
Given a vector $\Phi\in \mathbb{R}^3$ and a point $p\in \Delta_3$, the metric $\g$ satisfies 
\begin{equation*}
\begin{split}
\langle \V_\Phi, \V_\Phi\rangle
=&(\nabla_\omega\Phi)^2_{12} \theta_{12}+(\nabla_\omega \Phi)_{23}^2\theta_{23}+(\nabla_\omega\Phi)_{13}^2\theta_{13},
\end{split}
\end{equation*}
 where $\V_\Phi=L(\theta)\Phi$. For a curve $\gamma\in C^{\infty}([0, T]; \sP_+)$, its arc-length in manifold $(\Delta_3, \g)$ satisfies 
 \begin{equation*}
 \begin{split}
 \bar L(\gamma)=&\int_0^T \Big(\dot\gamma(t)^{\ts}R(\gamma(t))\gamma(t)\Big)^{\frac{1}{2}}dt\\
=&\int_0^T \Big\{(\nabla_\omega\Phi(t))^2_{12} \theta_{12}(t)+(\nabla_\omega \Phi(t))_{23}^2\theta_{23}(t)+(\nabla_\omega\Phi(t))_{13}^2\theta_{13}(t)\Big\}^{\frac{1}{2}}dt,
\end{split}
 \end{equation*} 
 where $R(t)=R(\theta(\gamma(t)))$, $\theta(t)=\theta(\gamma(t))$, and $\dot\gamma(t)=\V_{\Phi(t)}=L(\theta(\gamma(t)))\Phi(t)$. 
 
 We are ready to derive the Levi-Civita connection. Given vectors $\Phi_1$, $\Phi_2$, $\Phi_3\in\mathbb{R}^3/\mathbb{R}$ and $p\in \Delta_3$, we define 
\begin{equation*}
\begin{split}
\Gamma(\Phi_1, \Phi_2, p)=&\Big((\nabla_\omega\Phi_1)_{12}^2\frac{\partial \theta_{12}}{\partial p_i}+(\nabla_\omega\Phi_1)_{13}^2\frac{\partial\theta_{13}}{\partial p_i}+(\nabla_\omega\Phi_1)_{23}^2\frac{\partial\theta_{23}}{\partial p_i}\Big)_{i=1}^3. 
\end{split}
\end{equation*}
Similarly, one can define $\Gamma(\Phi_1, \Phi_3, p), \Gamma(\Phi_2, \Phi_3, p)\in \mathbb{R}^3$. Then the commutator in $(\Delta_3, \g)$ satisfies  
\begin{equation*}
[\V_{\Phi_1}, \V_{\Phi_2}]=L(\V_{\Phi_1}\theta)\Phi_2-L(V_{\Phi_2}\theta)\Phi_1,
\end{equation*}
with $k=1,2$, 
\begin{equation*}
L(\V_{\Phi_k}\theta)=\begin{pmatrix}
\omega_{12}(\V_{\Phi_k}\theta)_{12}+\omega_{13}(\V_{\Phi_k}\theta)_{13} & -\omega_{12}(\V_{\Phi_k}\theta)_{12} & -\omega_{13}(\V_{\Phi_k}\theta)_{13} \\
-\omega_{12}(\V_{\Phi_k}\theta)_{12}   & \omega_{12}(\V_{\Phi_k}\theta)_{12}+\omega_{23}(\V_{\Phi_k}\theta)_{23} & -\omega_{23}(\V_{\Phi_k}\theta)_{23}\\
-\omega_{13}(\V_{\Phi_k}\theta)_{13}& -\omega_{23}(\V_{\Phi_k}\theta)_{23}   & \omega_{13}(\V_{\Phi_k}\theta)_{13}+\omega_{23}(\V_{\Phi_k}\theta)_{23}
\end{pmatrix},
\end{equation*}
and $i,j=1, 2, 3$, 
\begin{equation*}
(\V_{\Phi_k}\theta)_{ij}=\frac{\partial\theta_{ij}}{\partial p_i}(\V_{\Phi_k})_i+\frac{\partial\theta_{ij}}{\partial p_j}(\V_{\Phi_k})_j. 
\end{equation*}
Thus, the Levi-Civita connection in $(\Delta_3, \g)$ satisfies 
\begin{equation*}
\begin{split}
&\langle \bar\nabla_{\V_{\Phi_1}}\V_{\Phi_2}, \V_{\Phi_3}\rangle\\=&\frac{\theta_{12}}{2} \Big\{(\nabla_\omega \Phi_1)_{12}(\nabla_\omega \Gamma(\Phi_2, \Phi_3, p))_{12}+(\nabla_\omega \Phi_2)_{12}(\nabla_\omega \Gamma(\Phi_1, \Phi_3,p))_{12}\\
&\qquad-(\nabla_\omega \Phi_3)_{12}(\nabla_\omega \Gamma(\Phi_1, \Phi_2,p))_{12}\Big\}\\
+&\frac{\theta_{23}}{2} \Big\{(\nabla_\omega \Phi_1)_{23}(\nabla_\omega \Gamma(\Phi_2, \Phi_3,p))_{23}+(\nabla_\omega \Phi_2)_{23}(\nabla_\omega \Gamma(\Phi_1, \Phi_3,p))_{23}\\
&\qquad-(\nabla_\omega \Phi_3)_{23}(\nabla_\omega \Gamma(\Phi_1, \Phi_2,p))_{23}\Big\}\\
+&\frac{\theta_{13}}{2} \Big\{(\nabla_\omega \Phi_1)_{13}(\nabla_\omega \Gamma(\Phi_2, \Phi_3,p))_{13}+(\nabla_\omega \Phi_2)_{13}(\nabla_\omega \Gamma(\Phi_1, \Phi_3,p))_{13}\\
&\qquad-(\nabla_\omega \Phi_3)_{13}(\nabla_\omega \Gamma(\Phi_1, \Phi_2,p))_{13}\Big\}.
\end{split}
\end{equation*}
\section{Example II: Curvatures of probability manifolds on a three-point lattice graph}\label{sec5}
In this section, we derive analytical expressions for sectional, Ricci, and scalar curvatures of probability manifolds on a three-point graph.  

Consider a three-point graph below. 
\begin{center}
\begin{tikzpicture}[->,shorten >=1pt,auto,node distance=2cm,
        thick,main node/.style={circle,fill=blue!20,draw,minimum size=0.5cm,inner sep=0pt]} ]
   \node[main node] (1) {A};
    \node[main node] (2) [right of=1]  {B};
    \node[main node](3)[right of=2]{C};
    \path[<->]
    (1) edge node {} (2)
    (2) edge node{} (3);
\end{tikzpicture}
\end{center}
Again, denote the probability simplex set as $\Delta_3$. We simply notations that $\theta_1(p):=\theta_{12}(p)$, and $\theta_2(p):=\theta_{23}(p)$. Given a vector $\Phi\in \mathbb{R}^3$ and a point $p\in \Delta_3$, consider the metric $\g$ satisfying 
\begin{equation*}
\begin{split}
\langle \V_\Phi, \V_\Phi\rangle 
=&(\nabla_\omega\Phi)^2_{12} \theta_{1}+(\nabla_\omega \Phi)_{23}^2\theta_{2},
\end{split}
\end{equation*}
 where $\V_\Phi=L(\theta)\Phi$. We let $\omega_{12}=\omega_{23}=1$, $\omega_{13}=0$, such that  
\begin{equation*}
L(\theta)=\begin{pmatrix}
\theta_{12} & -\theta_{12} & 0 \\
-\theta_{12}   & \theta_{12}+\theta_{23} & -\theta_{23}\\
0 & -\theta_{23}   & \theta_{23}
\end{pmatrix}.
\end{equation*}
 There is a particular coordinate for $\Delta_3$, which simplifies geometric calculations. Denote the cumulative distribution function (CDF) on discrete states, such that 
\begin{equation*}
x_1=p_1, \quad x_2=p_1+p_2. 
\end{equation*}
Denote a set $\mathrm{CDF}=\{(x_1, x_2)\in [0,1]^2\colon x_1\leq x_2\}$. Thus, the metric $\g$ in coordinates $(x_1, x_2)$ is a diagonal matrix, such that $\g=(\g_{ij})_{1\leq i,j\leq 2}\in\mathbb{R}^{2\times 2}$, with 
\begin{equation*}
\g_{11}=\frac{1}{\theta_1(x)},\quad \g_{22}=\frac{1}{\theta_2(x)}, \quad \g_{12}=\g_{21}=0. 
\end{equation*}
We next derive formulas for Riemannian curvatures. 
\begin{proposition}[Curvatures on $\Delta_3$ with a lattice graph]
The sectional curvature satisfies 
\begin{equation}\label{K12}
\bar K_{12}(p)=\frac{1}{\theta_2}\Big[\frac{1}{2}\partial_{11}\log \theta_2+\frac{1}{4}\partial_1\log\frac{\theta_1}{\theta_2}\cdot \partial_1\log\theta_2\Big]+\frac{1}{\theta_1}\Big[\frac{1}{2}\partial_{22}\log\theta_1+\frac{1}{4}\partial_2\log\frac{\theta_2}{\theta_1}\cdot\partial_2\log\theta_1\Big].
\end{equation}
The Ricci curvature satisfies 
\begin{equation*}
\bar R_{11}(p)=\bar K_{12}(p)\theta_2(p), \quad \bar R_{22}(p)=\bar K_{12}(p)\theta_1(p), \quad \bar R_{12}(p)=\bar R_{21}(p)=0. 
\end{equation*}
The scalar curvature satisfies
\begin{equation*}
\bar S(p)=2\bar K_{12}(p)\theta_1(p) \theta_2(p).
\end{equation*}
\end{proposition}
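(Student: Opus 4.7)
The plan is to exploit the CDF coordinate system $(x_1, x_2) = (p_1, p_1+p_2)$ introduced just before the proposition, which identifies $(\Delta_3, \g)$ with a two-dimensional orthogonal coordinate chart carrying the diagonal metric $\g_{11} = 1/\theta_1(x)$, $\g_{22} = 1/\theta_2(x)$. Because the manifold has dimension two, the Riemann tensor has a single independent component and all curvature quantities (sectional, Ricci, scalar) are determined by a single scalar function. Thus the main task is to compute $\bar K_{12}$, after which the formulas for $\bar R_{ij}$ and $\bar S$ follow from standard two-dimensional identities.

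To compute $\bar K_{12}$, I would specialize the general sectional curvature formula from the preceding proposition. Choose potentials $\Phi^{(1)}, \Phi^{(2)} \in \mathbb{R}^3/\mathbb{R}$ so that $\V_{\Phi^{(1)}} = \partial_{x_1}$ and $\V_{\Phi^{(2)}} = \partial_{x_2}$ in CDF coordinates (for instance, $\Phi^{(1)} = (1/\theta_1, 0, 0)$ and $\Phi^{(2)} = (0, 0, -1/\theta_2)$ work up to a constant shift). The three-point lattice has only two active edges ($\omega_{13}=0$), so every sum appearing in the auxiliary quantities $\Gamma(\Phi^{(a)}, \Phi^{(b)}, p)$, $\V_{\Phi^{(a)}}\theta$, and $m(\Phi^{(a)}, \Phi^{(b)})$ collapses to just two contributions indexed by $\theta_1$ and $\theta_2$. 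Substituting these into the sectional curvature formula and expressing all derivatives in the CDF variables yields, after regrouping, the claimed expression for $\bar K_{12}$.

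For the Ricci components I would use the two-dimensional identity $R_{ij} = K\, g_{ij}$, together with the relation between the paper's $\bar K_{12}$ and the classical Gaussian curvature $K$: a direct comparison with the Brioschi formula for $(E, G) = (1/\theta_1, 1/\theta_2)$ shows that $K$ and $\bar K_{12}$ differ by the factor $\det\g = 1/(\theta_1\theta_2)$. Substituting the diagonal metric components produces $\bar R_{11} = \bar K_{12}\theta_2$, $\bar R_{22} = \bar K_{12}\theta_1$, and $\bar R_{12} = 0$. The scalar curvature follows from the trace $\bar S = \g^{ij}\bar R_{ij}$, giving $\bar S = 2\bar K_{12}\theta_1\theta_2$.

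The main obstacle is the bookkeeping in the sectional curvature step: even with the graph sparsity and the CDF parameterization, the general formula contains several pieces (a quadratic-in-$\Gamma$ part, a commutator part, and the $m$-tensor part) that must be carefully organized so that the manifestly symmetric structure $\theta_1\leftrightarrow\theta_2$, $x_1\leftrightarrow x_2$ of the final expression emerges. The Brioschi formula for orthogonal 2D metrics offers a shorter path to the same answer and serves as a useful independent check.
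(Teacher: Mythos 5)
Your strategy is sound, and in fact your ``shorter path'' is exactly what the paper does: the paper's proof works entirely in the CDF coordinates $(x_1,x_2)$, writes down the Christoffel symbols $\bar\Gamma^k_{ij}$ of the diagonal metric $\g_{11}=1/\theta_1$, $\g_{22}=1/\theta_2$, computes $\bar R^2_{121}$ and $\bar R^1_{212}$ from the standard coordinate formula, and then reads off the sectional, Ricci, and scalar curvatures from their definitions (this is the Brioschi-type computation you mention as a check). Your primary route --- specializing the intrinsic formula of Theorem \ref{thm2} to the two-edge lattice with potentials $\Phi^{(1)}=(1/\theta_1,0,0)$, $\Phi^{(2)}=(0,0,-1/\theta_2)$ representing $\partial_{x_1},\partial_{x_2}$ --- is genuinely different and would also work, but it is considerably heavier; its main virtue is that it exercises the general machinery of Section \ref{sec3} on a concrete example, whereas the coordinate computation buys brevity. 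Your treatment of the Ricci and scalar curvatures via the two-dimensional identity $R_{ij}=K\,\g_{ij}$ is cleaner than the paper's componentwise evaluation and correctly accounts for the normalization: the paper's $\bar K_{12}=\frac{1}{\theta_2}\bar R^2_{121}=R_{2121}$ equals the Gaussian curvature times $\det\g=1/(\theta_1\theta_2)$, which is precisely what makes $\bar R_{11}=\bar K_{12}\theta_2$, $\bar R_{22}=\bar K_{12}\theta_1$, and $\bar S=2\bar K_{12}\theta_1\theta_2$ come out as stated.

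One caution on your primary route: the potentials $\Phi^{(a)}$ realizing the coordinate fields depend on $p$ through $\theta_1,\theta_2$, while Theorem \ref{thm2} is stated for constant potentials. Since the curvature is tensorial you may freeze $\Phi^{(a)}$ at each point, but then the commutators $[\V_{\Phi^{(1)}},\V_{\Phi^{(2)}}]$ appearing in \eqref{tensor} do \emph{not} vanish even though $[\partial_{x_1},\partial_{x_2}]=0$ as coordinate fields; dropping those terms on that ground would be an error. Your remark about careful bookkeeping should explicitly include this point.
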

\begin{proof}
The proof is based on standard geometric calculations for curvatures with a diagonal metric. The Riemannian curvature tensor $\bar R$ in $(\mathrm{CDF}, \g)$ is computed as follows. Let $i,j,k,l\in\{1,2\}$. Denote the Levi-Civita connection as
\begin{equation*}
\bar\Gamma_{ij}^k=\frac{1}{2}\sum_{k=1}^2 \theta_k\Big(\partial_i\g_{jk}+\partial_j\g_{ik}-\partial_k\g_{ij}\Big),
\end{equation*}
for $i,j,k=1,2$. Thus, using the fact $\partial_i\log\theta_j=\frac{\partial_i\theta_j}{\theta_j}$, for $i,j=1,2$, we have
\begin{equation*}
\left\{\begin{aligned}
&\bar\Gamma_{11}^1=-\frac{1}{2}\partial_1\log\theta_1,\quad \bar\Gamma_{11}^2=\frac{1}{2}\frac{\theta_2}{\theta_1}\partial_2\log\theta_1,\\
&\bar\Gamma_{22}^1=\frac{1}{2}\frac{\theta_1}{\theta_2}\partial_1\log\theta_2, \quad \bar\Gamma^2_{22}=-\partial_2\log\theta_2,\\
&\bar\Gamma_{12}^1=-\frac{1}{2}\partial_2\log\theta_1,\quad \bar\Gamma_{12}^2=-\frac{1}{2}\partial_1\log\theta_2. 
\end{aligned}\right.
\end{equation*}
We note that the Riemannian curvature satisfies 
\begin{equation*}
\bar R^i_{jkl}=\partial_k\bar\Gamma_{lj}^i-\partial_l\bar\Gamma^i_{kj}+\sum_{m=1}^2\Big(\bar\Gamma_{lj}^m\cdot\bar\Gamma_{km}^i-\bar\Gamma_{kj}^m\cdot\bar\Gamma_{lm}^i\Big), 
\end{equation*}
for $i,j,k,l=1,2$. Thus, by direct calculations, we have $\bar R_{111}^1=\bar R^1_{211}=\bar R_{221}^2=\bar R_{222}^2=0$, and 
\begin{equation*}
\left\{\begin{aligned}
&\bar R_{121}^2=\frac{1}{2}\partial_{11}\log\theta_2+\frac{1}{4}\partial_1\log\theta_2\cdot\partial_1\log\frac{\theta_1}{\theta_2}+\frac{\theta_2}{\theta_1}\Big[\frac{1}{2}\partial_{22}\log\theta_1+\frac{1}{4}\partial_2\log\theta_1\cdot\partial_2\log\frac{\theta_2}{\theta_1}\Big],\\
&\bar R_{212}^1=\frac{1}{2}\partial_{22}\log\theta_1+\frac{1}{4}\partial_2\log\theta_1\cdot\partial_2\log\frac{\theta_2}{\theta_1}+\frac{\theta_1}{\theta_2}\Big[\frac{1}{2}\partial_{11}\log\theta_2+\frac{1}{4}\partial_1\log\theta_2\cdot\partial_1\log\frac{\theta_1}{\theta_2}\Big].
\end{aligned}\right.
\end{equation*}
We note the following facts from the definition of curvatures. The section curvature is defined by $\bar K_{12}=\frac{1}{\theta_2}\bar R_{121}^2$. The Ricci tensor is defined by $\bar R_{jk}=\sum_{i=1}^2\bar R^i_{kij}$. And the scalar curvature is defined by $\bar S=\sum_{i=1}^2\theta_i\bar R_{ii}$. We then exchange the coordinates from the cumulative distribution function $x$ to the probability function $p$. This finishes the derivation. 
\end{proof}
We also derive Ricci curvatures for a class of theta functions from Onsager's response matrix with the $\alpha$-divergence mean, and the geometric mean.   

\begin{example}[Curvatures from Onsager's response matrix with $\alpha$-divergence means]\label{ex2}
Denote $\theta_i(p)=c\cdot\frac{p_i-p_{i+1}}{f'(cp_i)-f'(cp_{i+1})}$, with $c=3$. Let $p_1$, $p_2$, $p_3$ be distinct numbers. Then the sectional curvature satisfies 
\begin{equation*}
\begin{split}
\bar K_{12}(p)=&-\Big\{\quad\frac{1}{2(p_1-p_2)^2}\Big[\frac{3}{2\theta_1}-\frac{1}{2}f''(cp_2)^2\theta_1-\big(f''(cp_2)-cf'''(cp_2)(p_2-p_1)\big)\Big]\\
&\qquad+\frac{1}{2(p_2-p_3)^2}\Big[\frac{3}{2\theta_2}-\frac{1}{2}f''(cp_2)^2\theta_2-\big(f''(cp_2)-cf'''(cp_2)(p_2-p_3)\big)\Big]\\
&\qquad+\frac{1}{4(p_2-p_1)(p_2-p_3)}\big(2-(f''(cp_2)+f''(cp_3))\theta_2\big)(f''(cp_2)-\frac{1}{\theta_1})\\
&\qquad+\frac{1}{4(p_2-p_1)(p_2-p_3)}\big(2-(f''(cp_1)+f''(cp_2))\theta_1\big)(f''(cp_2)-\frac{1}{\theta_2})\Big\}.
\end{split}
\end{equation*}
\begin{itemize}
\item[(i)] Consider the KL divergence with $f(z)=z\log z-(z-1)$ and the logarithm mean $\theta_i(p)=c\cdot \frac{p_i-p_{i+1}}{\log p_i-\log p_{i+1}}$. The sectional curvature satisfies 
\begin{equation*}
\begin{split}
\bar K_{12}(p)=&-\Big\{\quad\frac{1}{2(p_1-p_2)^2}\Big[\frac{3}{2\theta_1}+\frac{1}{c}\big(\frac{p_1}{p_2}-2)\frac{1}{p_2}-\frac{\theta_1}{2c^2p_2^2}\Big]\\
&\qquad+\frac{1}{2(p_2-p_3)^2}\Big[\frac{3}{2\theta_2}+\frac{1}{c}(\frac{p_3}{p_2}-2)\frac{\theta_2}{p_2}-\frac{\theta_2^2}{2c^2p_2^2}\Big]\\
&\qquad+\frac{1}{4(p_2-p_1)(p_2-p_3)}\big(2-\frac{1}{c}(\frac{1}{p_2}+\frac{1}{p_3})\theta_2\big)(\frac{1}{cp_2}-\frac{1}{\theta_1})\\
&\qquad+\frac{1}{4(p_2-p_1)(p_2-p_3)}\big(2-\frac{1}{c}(\frac{1}{p_1}+\frac{1}{p_2})\theta_1\big)(\frac{1}{cp_2}-\frac{1}{\theta_2})\Big\}.
\end{split}
\end{equation*}
\item[(ii)]  Consider the $\alpha$-divergence with $f(z)=\frac{4}{1-\alpha^2}(\frac{1-\alpha}{2}+\frac{1+\alpha}{2}z-z^{\frac{1+\alpha}{2}})$, $\alpha\neq 1$, and the polynomial mean $\theta_i=\frac{1}{2}c^{\frac{3-\alpha}{2}}(\alpha-1)\cdot \frac{p_i-p_{i+1}}{p_i^{\frac{\alpha-1}{2}}-p_{i+1}^{\frac{\alpha-1}{2}}}$. The sectional curvature satisfies 
\begin{equation*}
\begin{split}
\bar K_{12}(p)=&-\Big\{\quad\frac{1}{2(p_1-p_2)^2}\Big[\frac{3}{2\theta_1}-\frac{1}{2}(cp_2)^{\alpha-3}\theta_1-\big((cp_2)^{\frac{\alpha-3}{2}}-\frac{c(\alpha-3)}{2}(cp_2)^{\frac{\alpha-5}{2}}(p_2-p_1)\big)\Big]\\
&\qquad+\frac{1}{2(p_2-p_3)^2}\Big[\frac{3}{2\theta_2}-\frac{1}{2}(cp_2)^{\alpha-3}\theta_2-\big((cp_2)^{\frac{\alpha-3}{2}}-\frac{c(\alpha-3)}{2}(cp_2)^{\frac{\alpha-5}{2}}(p_2-p_3)\big)\Big]\\
&\qquad+\frac{1}{4(p_2-p_1)(p_2-p_3)}\big(2-((cp_2)^{\frac{\alpha-3}{2}}+(cp_3)^{\frac{\alpha-3}{2}})\theta_2\big)((cp_2)^{\frac{\alpha-3}{2}}-\frac{1}{\theta_1})\\
&\qquad+\frac{1}{4(p_2-p_1)(p_2-p_3)}\big(2-((cp_1)^{\frac{\alpha-3}{2}}+(cp_2)^{\frac{\alpha-3}{2}})\theta_1\big)((cp_2)^{\frac{\alpha-3}{2}}-\frac{1}{\theta_2})\Big\}.
\end{split}
\end{equation*}
\end{itemize}
From numerics in Figure \ref{fig1}, we observe that the sectional curvature with logarithm mean in probability manifolds is always negative. 

\begin{figure}
    \centering
    \includegraphics[width=0.45\linewidth]{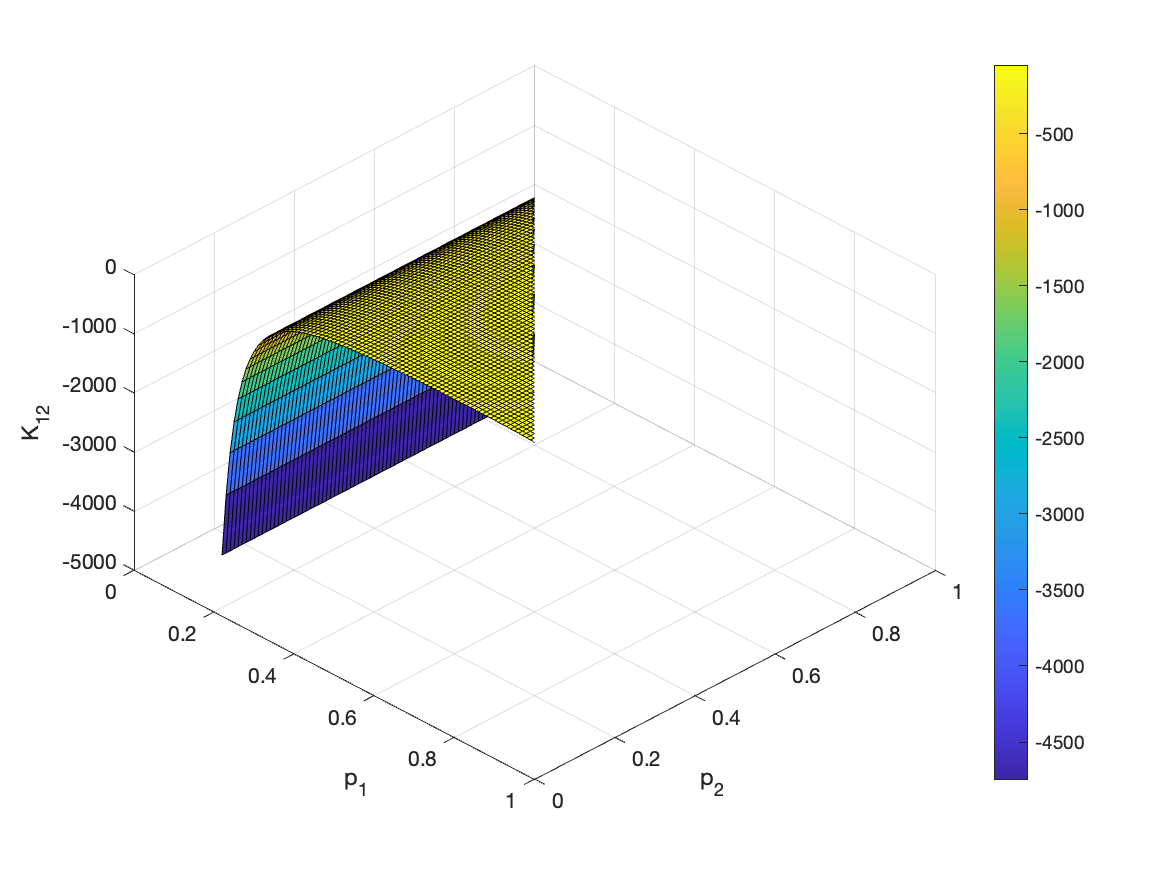}
        \includegraphics[width=0.45\linewidth]{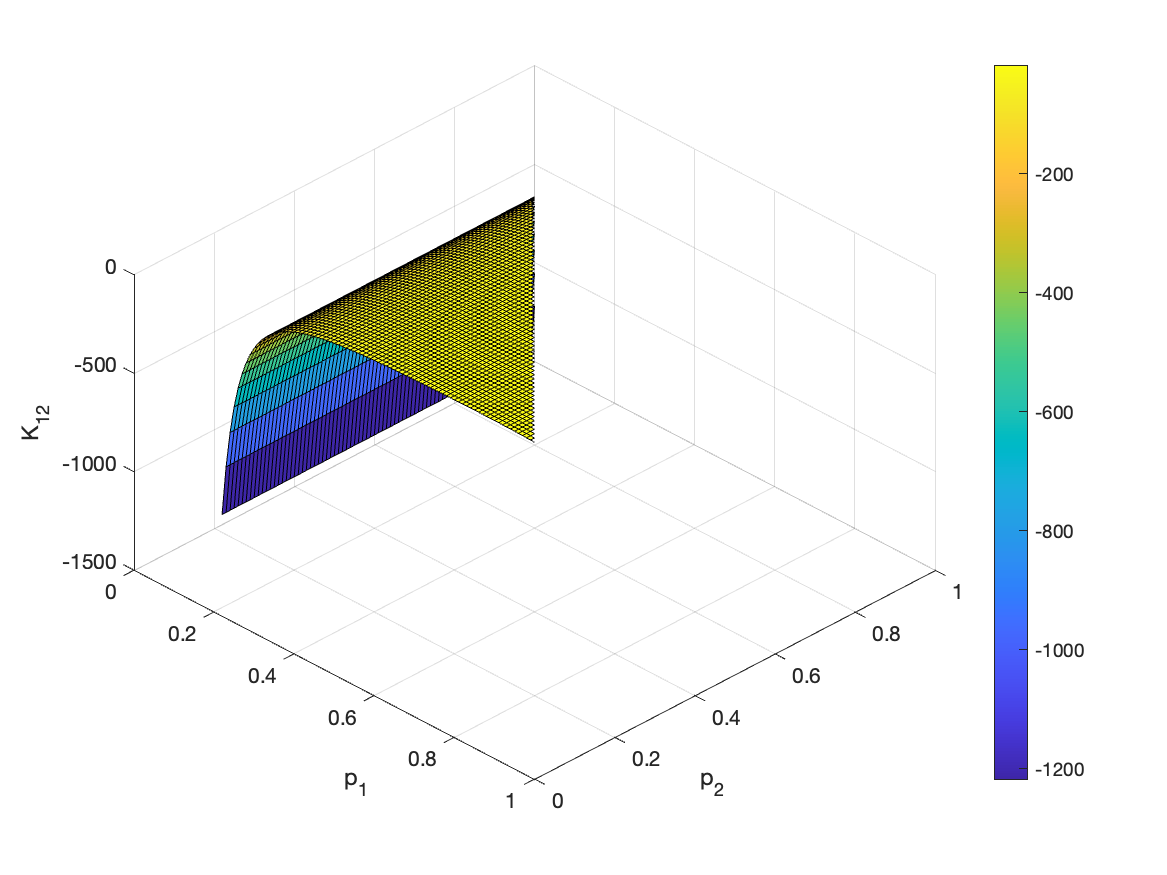}\\
            \includegraphics[width=0.45\linewidth]{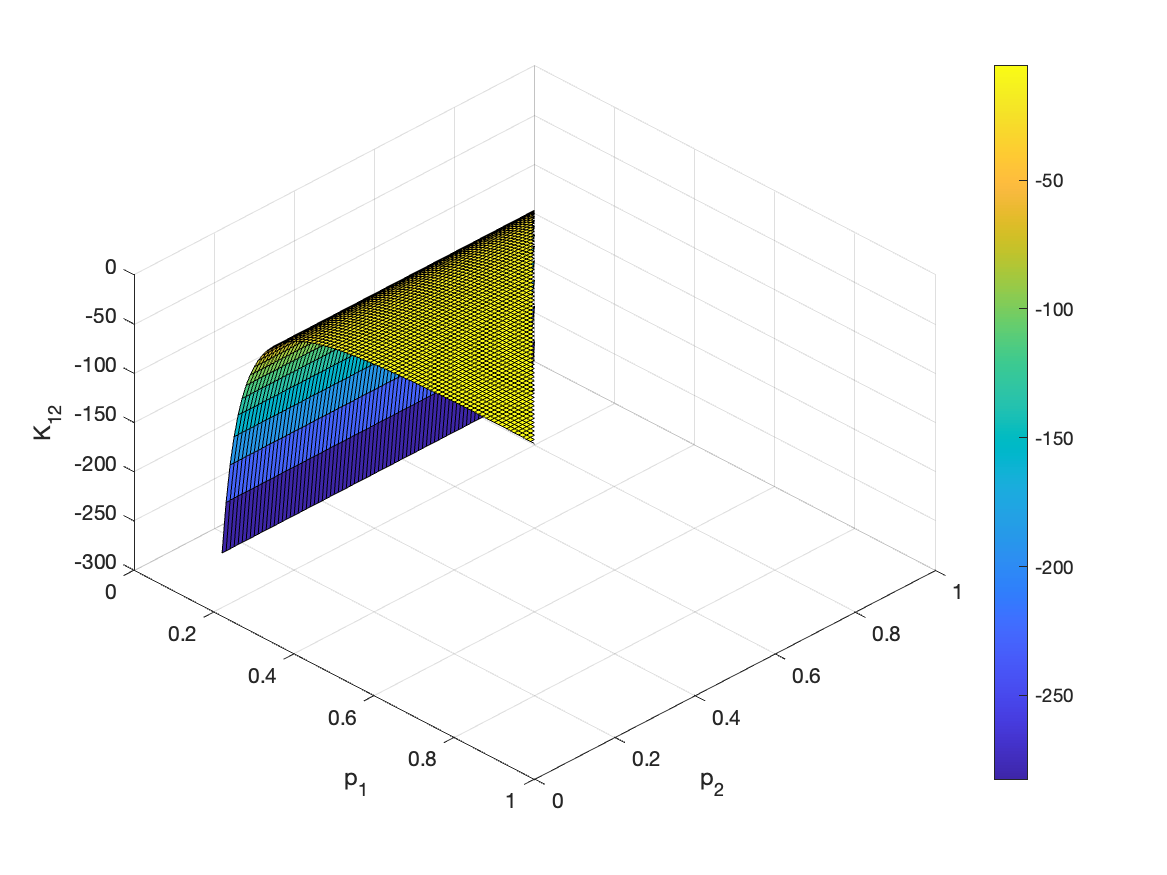}
                \includegraphics[width=0.45\linewidth]{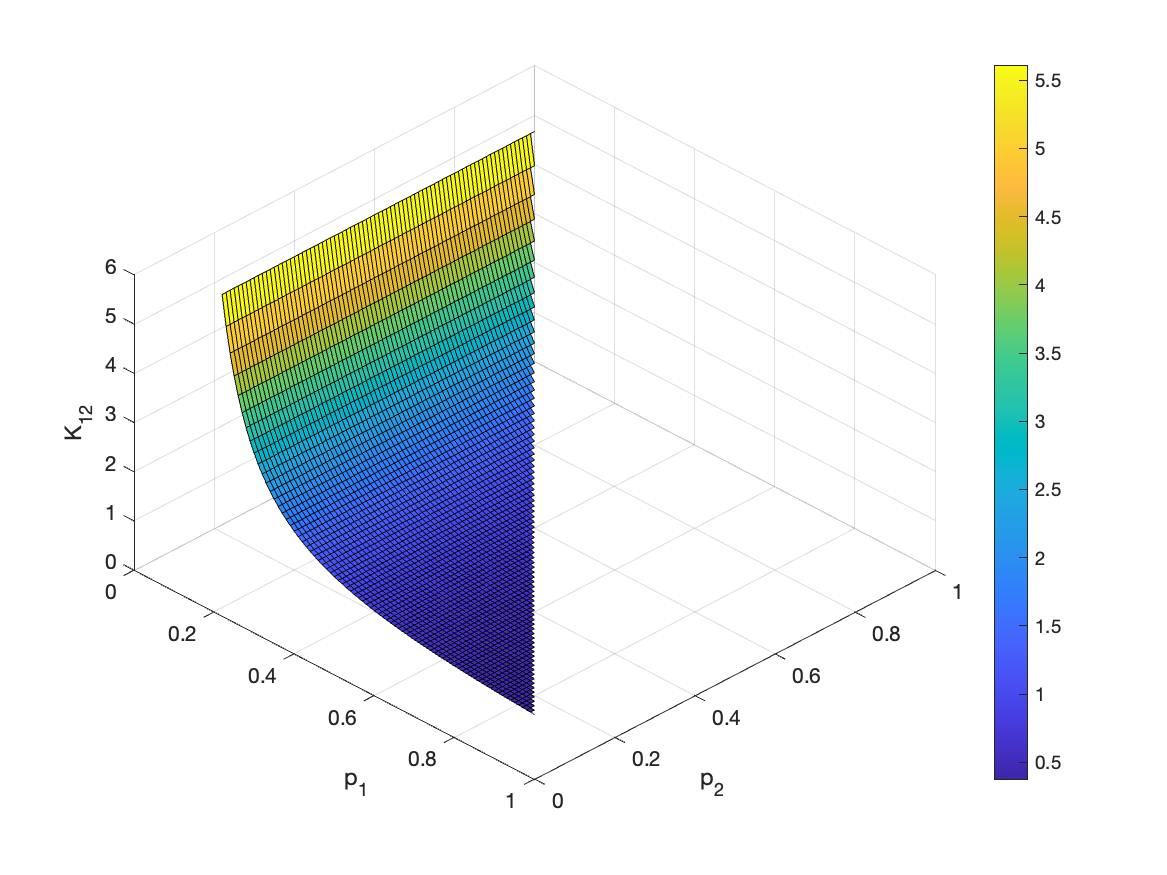}\\
   \caption{ \label{fig1}Plots of sectional curvatures with $\alpha$-divergence means in Example \ref{ex2} for $\alpha=-1$ (up left), $\alpha=0$ (up right), $\alpha=1$ (low left), and $\alpha=4$ (low right).}
\end{figure}
\end{example}
\begin{example}[Curvatures from Onsager's response matrix with generalized geometric means]\label{ex3}
Denote $\theta_i(p)=c\cdot p_i^\beta p_{i+1}^\beta$, with $\beta\in\mathbb{R}$ and $c=3^{2\beta}>0$. 
If $\beta=\frac{1}{2}$, $\theta$ is the geometric mean function. 
Then the sectional curvature satisfies 
\begin{equation*}
\bar K_{12}(p)=-\frac{1}{2}\Big[\frac{1}{\theta_2}\big(\frac{\beta}{p_2^2}+\frac{\beta^2}{2p_1p_2}\big)+\frac{1}{\theta_1} \big(\frac{\beta}{p_2^2}+\frac{\beta^2}{2p_2p_3}\big)\Big]. 
\end{equation*}
The Ricci curvature satisfies 
\begin{equation*}
\bar R_{11}(p)=-\frac{1}{2}\Big[\frac{\beta}{p_2^2}+\frac{\beta^2}{2p_1p_2}+(\frac{p_3}{p_1})^\beta \big(\frac{\beta}{p_2^2}+\frac{\beta^2}{2p_2p_3}\big)\Big],
\end{equation*}
and 
\begin{equation*}
\bar R_{22}(p)=-\frac{1}{2}\Big[\frac{\beta}{p_2^2}+\frac{\beta^2}{2p_2p_3}+(\frac{p_1}{p_3})^\beta (\frac{\beta}{p_2^2}+\frac{\beta^2}{2p_1p_2})\Big],
\end{equation*}
with $\bar R_{12}(p)=0$. In addition, the scalar curvature satisfies 
\begin{equation*}
\bar S(p)=-c\beta\Big[p_1^\beta p_2^{\beta-2}+p_2^{\beta-2}p_3^\beta+\frac{\beta}{2}(p_1^{\beta-1}p_2^{\beta-1}+p_2^{\beta-1}p_3^{\beta-1})\Big].
\end{equation*}
If $\beta>0$, the sectional curvature in probability manifold $(\mathcal{P}_+, \g)$ is always negative. And the Ricci curvature is always a negative definite matrix. We also provide several numerical examples of sectional curvatures in Figure \ref{fig2}. 
\begin{figure}
    \centering
    \includegraphics[width=0.45\linewidth]{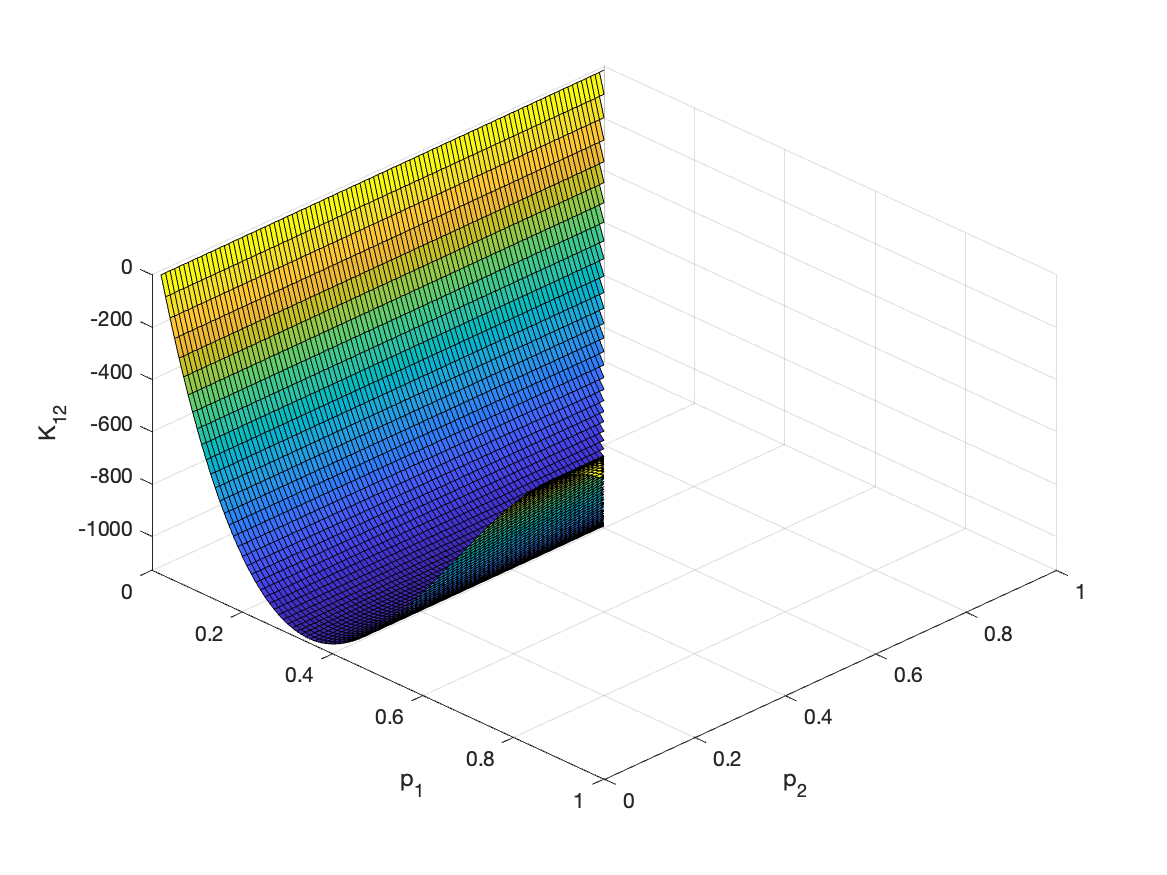}
        \includegraphics[width=0.45\linewidth]{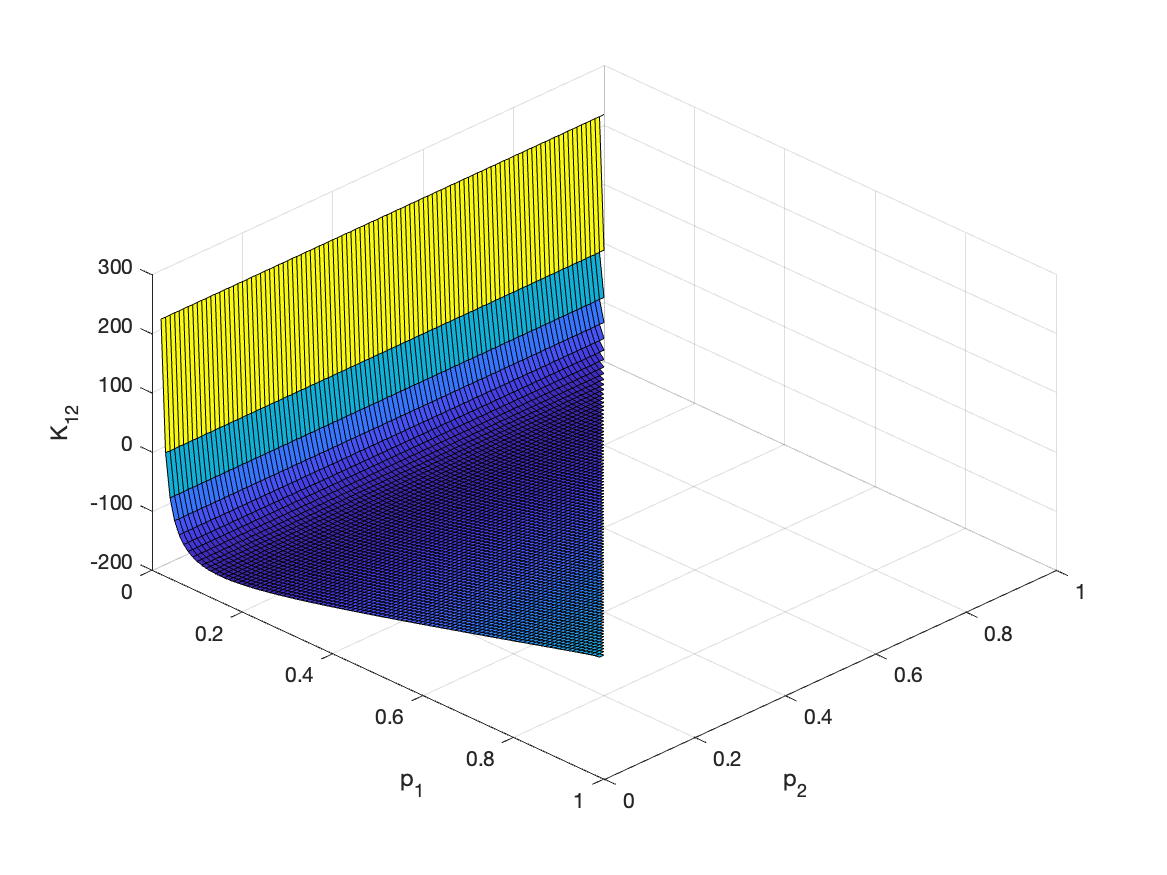}\\
            \includegraphics[width=0.45\linewidth]{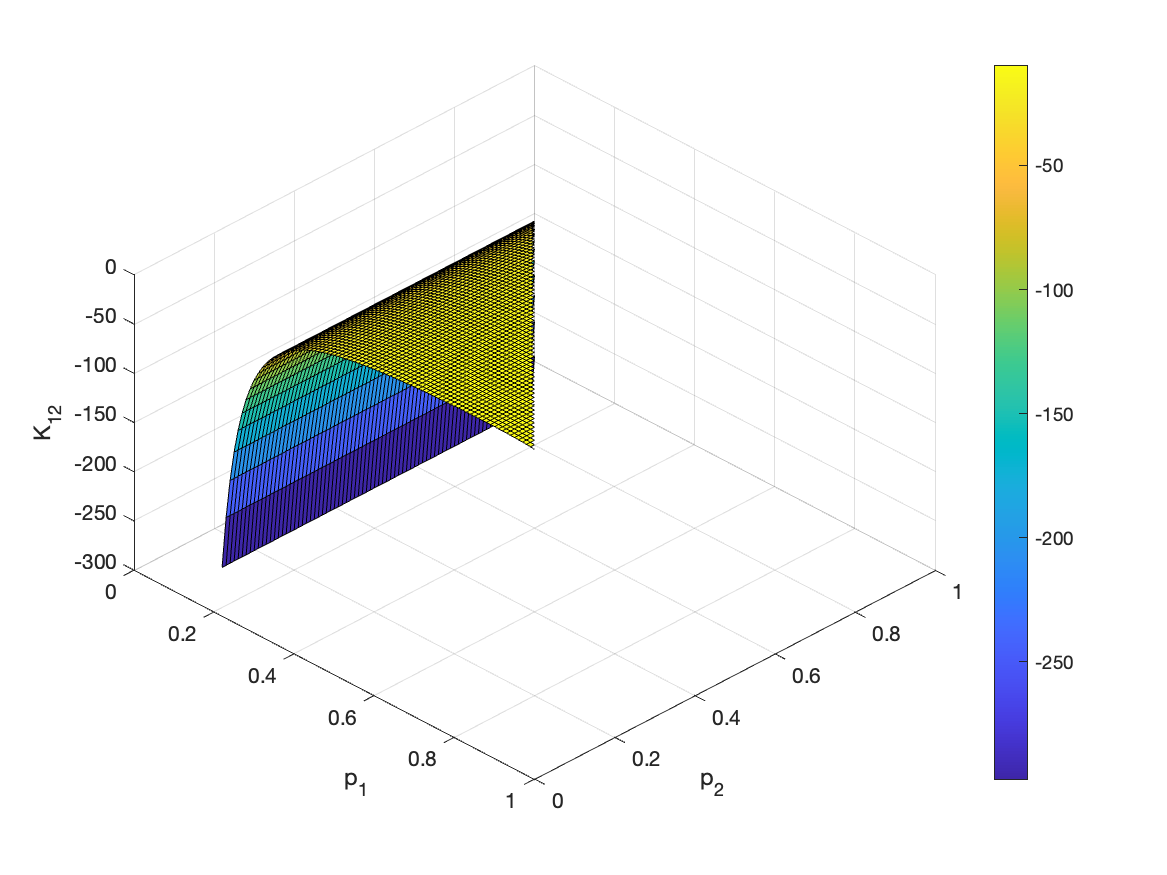}
                \includegraphics[width=0.45\linewidth]{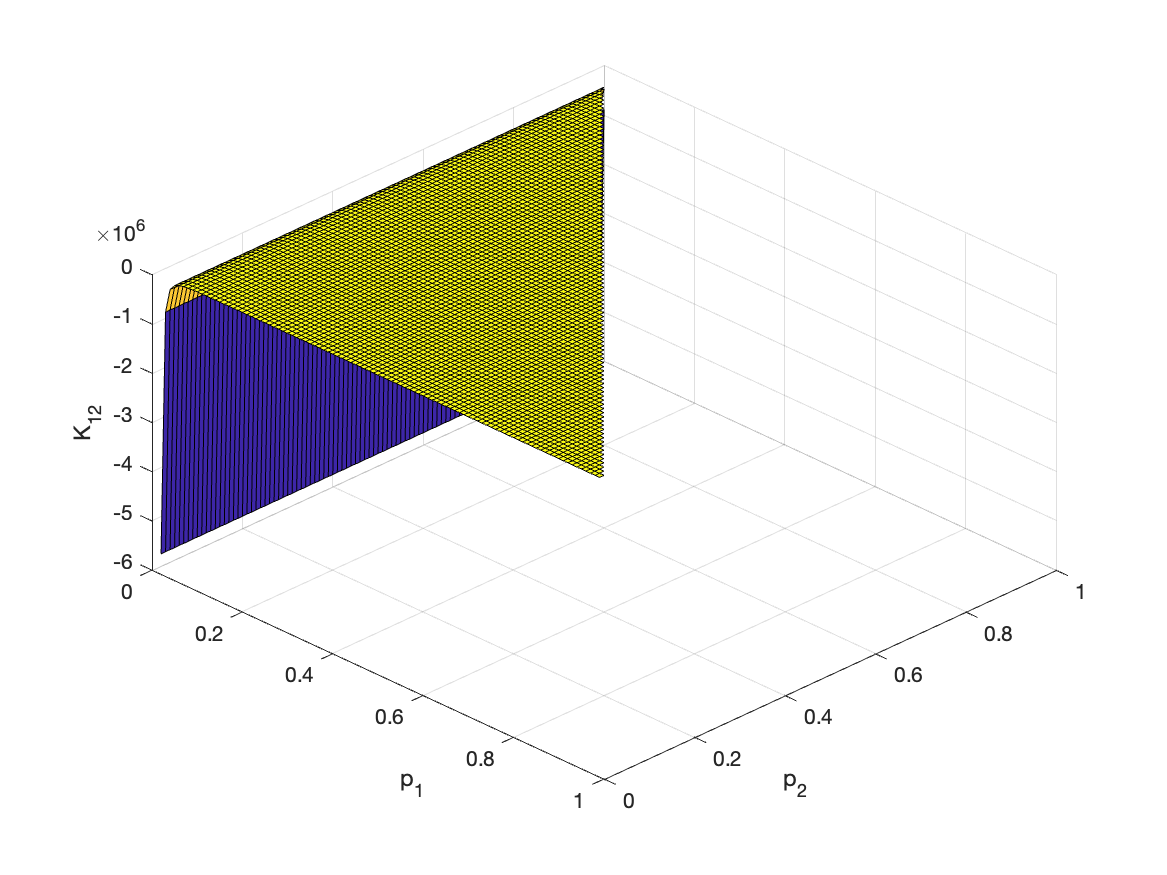}\\
   \caption{\label{fig2}Plots of sectional curvatures with generalized geometric means in Example \ref{ex3} for $\beta=-2$ (up left), $\beta=-1$ (up right), $\beta=\frac{1}{2}$ (low left), and $\beta=1$ (low right).}
\end{figure}

\end{example}
In the appendix, we present the proofs of Examples \ref{ex2} and \ref{ex3}. 

 \section{Discussions}
This paper presents geometric calculations in probability manifolds induced from Onsager reciprocal relations. In other words, irreversible processes in complex systems can be formulated as gradient flows in the probability manifold $(\mathcal{P}_+, \g)$, where Onsager's response matrix constructs the metric $\g$. This paper derives the Riemannian Levi-Civita connection, parallel transport, geodesics, Riemannian curvature, and sectional curvature in probability manifolds. 

There are insights and applications from geometric calculations in probability manifolds for studying ``geometry'' on discrete domains. One example is the generalization of Gamma calculus \cite{BE}. Typical examples include the iterative Gamma and Gamma-2 operators. We derive a generalization towards the classical Gamma calculus from the Hessian operator in the thermodynamical probability manifold; see formula \eqref{Hess_explicit}. This paper also proposes a generalized Gamma calculus in formula \eqref{tensor_explicit}. We shall investigate properties of Riemannian curvatures, in particular Ricci curvatures, in probability manifolds, which characterize stochastic dynamical behaviors of complex systems \cite{LiN1}. 

We also mention that there are generalized Onsager response matrices from nonlinear chemical master equations \cite{Onsager}. They are related to general choices of $f$-divergence functions and physical systems in generalized Onsager reciprocal relations. Typical examples include reaction-diffusion equations on discrete domains and dynamical density functional theories on lattices \cite{DDFT}. Systemic computations and spectral analysis from geometric calculations in probability manifolds are essential in understanding dynamical behaviors and fluctuations of thermodynamics. In addition, there are recent studies in non-equilibrium dynamics, from Macroscopic fluctuation theory (MFT) \cite{MFT} and General Equation for Non-Equilibrium Reversible-Irreversible Coupling (GENERIC) \cite{GEN}. These studies formulate additional nonreciprocal structures within the Onsager principles, including the conservation of energy and the dissipation of Lyapunov functions \cite{Ito}.
In future works, we leave the geometric calculations and analysis for non-reversible or nonreciprocal relation systems. We shall apply Riemannian curvature on probability manifolds to analyze and design numerical algorithms for these systems. 

\noindent\textbf{Acknowledgements}. {W. Li's work is supported by AFOSR YIP award No. FA9550-23-1-0087, NSF RTG: 2038080, NSF DMS-2245097, and the McCausland Faculty Fellow in University of South Carolina.}

\section*{Appendix}
In this section, we present the proofs of Examples \ref{ex2} and \ref{ex3}. 

\begin{proof}[Proof of Example \ref{ex2}]
Note that $p_1=x_1$, $p_2=x_2-x_1$, $p_3=1-x_2$. We have 
\begin{equation*}
\theta_1=c\cdot\frac{2x_1-x_2}{f'(cx_1)-f'(c(x_2-x_1))}, \quad \theta_2=c\cdot\frac{2x_2-x_1-1}{f'(c(x_2-x_1))-f'(c(1-x_2))}.
\end{equation*}
Thus, we have
\begin{equation*}
\begin{split}
\partial_1\log\theta_2=&-\frac{1}{2x_2-x_1-1}+\frac{cf''(c(x_2-x_1))}{f'(c(x_2-x_1))-f'(c(1-x_2))}\\
=&-\frac{1}{p_2-p_3}+\frac{cf''(cp_2)}{f'(cp_2)-f'(cp_3)}.
\end{split}
\end{equation*}
In addition, 
\begin{equation*}
\begin{split}
\partial_{11}\log\theta_2=&-\frac{1}{(2x_2-x_1-1)^2}+\frac{c^2f''(c(x_2-x_1))^2}{(f'(c(x_2-x_1))-f'(c(1-x_2)))^2}\\
&-\frac{c^2f'''(c(x_2-x_1))}{f'(c(x_2-x_1))-f'(c(1-x_2))}\\
=&-\frac{1}{(p_2-p_3)^2}+\frac{c^2f''(cp_2)^2}{(f'(cp_2)-f'(cp_3))^2}-\frac{c^2f'''(cp_2)}{f'(cp_2)-f'(cp_3)}.
\end{split}
\end{equation*}
And
\begin{equation*}
\begin{split}
\partial_1\log\theta_1=&\frac{2}{2x_1-x_2}-\frac{cf''(cx_1)+cf''(c(x_2-x_1))}{f'(cx_1)-f'(c(x_2-x_1))}\\
=&\frac{2}{p_1-p_2}-c\cdot\frac{f''(cp_1)+f''(cp_2)}{f'(cp_1)-f'(cp_2)}.
\end{split}
\end{equation*}
Similarly, we derive the formulas for $\partial_2\log\theta_1$, $\partial_{22}\log\theta_1$, $\partial_2\log\theta_2$. From equation \eqref{K12}, we finish the proof. 
\end{proof}
\begin{proof}[Proof of Example \ref{ex3}]
Note that $\theta_1=c(x_2-x_1)^\beta x_1^\beta$, $\theta_2=c(x_2-x_1)^\beta(1-x_2)^\beta$. 
Thus, we have
\begin{equation*}
\left\{\begin{aligned}
&\partial_1\log\theta_1=\beta(\frac{1}{x_1}-\frac{1}{x_2-x_1}),  \quad\partial_1\log\theta_2=-\frac{\beta}{x_2-x_1},\\
&\partial_{2}\log\theta_1=\frac{\beta}{x_2-x_1},  \quad\partial_2\log\theta_2=\beta(\frac{1}{x_2-x_1}-\frac{1}{1-x_2}),\\
&\partial_{22}\log\theta_1=-\frac{\beta}{(x_2-x_1)^2}, \quad \partial_{11}\log\theta_2=-\frac{\beta}{(x_2-x_1)^2}.
\end{aligned}\right.
\end{equation*}
From equation \eqref{K12}, we finish the proof. 
\end{proof}


\begin{thebibliography}{10}

\bibitem{IG1}
S.~Amari. 
\newblock{$\alpha$-Divergence Is Unique, Belonging to Both $f$-Divergence and Bregman Divergence Classes.} 
\newblock{\em IEEE Transactions on Information Theory}, vol. 55, no. 11, pp. 4925--4931, 2009. 

\bibitem{am2006}
L.~Ambrosio, N.~Gigli, and G.~Savar{\'e}.
\newblock {\em Gradient flows: in metric spaces and in the space of probability
  measures}.
\newblock Springer Science \& Business Media, 2006.

\bibitem{AyJostLeSchwachhoefer2017}
N.~Ay, J.~Jost, H.~V.~L\^{e}, and L.~Schwachh\"ofer.
\newblock {\em Information Geometry}, volume~64.
\newblock Springer, Cham, 2017.

\bibitem{BE}
D.~Bakry and M.~{\'E}mery.
\newblock Diffusions hypercontractives.
\newblock {\em S{\'e}minaire de
  {{Probabilit{\'e}s XIX}}}, volume 1123, 177--206, 1985.

\bibitem{MFT}
L.~Bertini, A.D.~Sole, D.~Gabrielli, G.~Jona-Lasinio, and C.~Landim. 
\newblock{Macroscopic fluctuation theory.}
\newblock{\em Rev. Mod. Phys.}, 87, 593, 2015. 


\bibitem{Chow2012}
S.-N.~Chow, W.~Huang, Y.~Li, and H.~Zhou.
\newblock Fokker--Planck equations for a free energy functional or Markov process on a graph.
\newblock {\em Archive for Rational Mechanics and Analysis}, volume~203, 969--1008, 2012.

\bibitem{GLL}
Y.~Gao, W.~Li, and J.G.~Liu.
\newblock Master equations for finite state mean field games with nonlinear activations.
\newblock{\em Discrete and Continuous Dynamical Systems - B}, 29(7): 2837-2879, 2024. 

\bibitem{GEN}
M. Grmela and H.C. Ottinger. 
\newblock Dynamics and thermodynamics of complex fluids. I. Development of a general formalism. 
\newblock{\em Phys. Rev. E.}, 56 (6): 6620--6632, 1997. 

\bibitem{Ito}
S.~Ito.
\newblock Geometric thermodynamics for the Fokker--Planck equation: stochastic thermodynamic links between information geometry and optimal transport. 
\newblock{\em Information Geometry}, 7, 441--483, 2024.

\bibitem{JKO}
R.~Jordan, D.~Kinderlehrer, and F.~Otto.
\newblock The {{Variational Formulation}} of the {{Fokker}}--{{Planck
  Equation}}.
\newblock {\em SIAM Journal on Mathematical Analysis}, 29(1):1--17, 1998.

\bibitem{Lafferty}
J.D.~Lafferty.
\newblock The {{Density Manifold}} and {{Configuration Space Quantization}}.
\newblock {\em Transactions of the American Mathematical Society},
  305(2):699--741, 1988.


\bibitem{LiHess}
W.~Li.
\newblock Hessian metric via transport information geometry. 
\newblock {\em Journal of Mathematical Physics}, 62, 033301, 2021.

\bibitem{LiBreg}
W.~Li.
\newblock Transport information Bregman divergences. 
\newblock {\em Information Geometry}, 4, 435--470, 2021.

\bibitem{LiG1}
W.~Li.
\newblock Transport information geometry: Riemannian calculus on probability simplex. 
\newblock {\em Information Geometry}, 5, 161--207, 2022.

\bibitem{LiG2}
W.~Li.
\newblock Diffusion Hypercontractivity via Generalized Density Manifold. 
\newblock{\em Information Geometry}, 59--95, 2024. 

\bibitem{LiN1}
W.~Li. 
\newblock{Langevin dynamics for the probability of finite state Markov processes.}
\newblock{\em Information Geometry}, 2024. 


\bibitem{LiHess1}
W.~Li, and L.~Lu.
\newblock Mean field information Hessian matrices on graphs. 
\newblock {\em arXiv:2203.06307}, 2022.

\bibitem{LiYing}
W.~Li, and L.~Ying. 
\newblock Hessian transport gradient flows
\newblock{\em Research in the Mathematical Sciences}, 6, 34, 2019. 

\bibitem{Lott}
J.~Lott.
\newblock Some {{Geometric Calculations}} on {{Wasserstein Space}}.
\newblock {\em Communications in Mathematical Physics}, 277(2):423--437, 2008.

\bibitem{EM1}
J.~Maas.
\newblock Gradient flows of the entropy for finite {{Markov}} chains.
\newblock {\em Journal of Functional Analysis}, 261(8):2250--2292, 2011.

\bibitem{M}
A.~Mielke.
\newblock A gradient structure for reaction\textendash{}diffusion systems and
  for energy-drift-diffusion systems.
\newblock {\em Nonlinearity}, 24(4):1329, 2011.


\bibitem{Onsager}
L.~Onsager.
\newblock{Reciprocal Relations in Irreversible Processes. I.}
\newblock{\em Phys. Rev.} 37, 1931. 

\bibitem{otto2001}
F.~Otto.
\newblock The geometry of dissipative evolution equations: The porous medium
  equation.
\newblock {\em Communications in Partial Differential Equations},
  26(1-2):101--174, 2001.

\bibitem{WD}
M.K. {von Renesse}, and K.T. Sturm.
\newblock Entropic measure and {{Wasserstein}} diffusion.
\newblock {\em The Annals of Probability}, 37(3):1114--1191, 2009.

\bibitem{JS}
J.~Schnakenberg. 
\newblock Network theory of microscopic and macroscopic behavior of master equation systems. 
\newblock{\em Rev. Mod. Phys.} 48, 571, 1976. 


\bibitem{vil2008}
C.~Villani.
\newblock {\em Optimal Transport: Old and New}.
\newblock Number 338 in Grundlehren der mathematischen Wissenschaften.
  {Springer}, Berlin, 2009.

\bibitem{DDFT}
M.T.~Vrugt, H.~Lowen, and R.~Wittkowski. 
\newblock{Classical dynamical density functional theory: from fundamentals to applications.}
\newblock{\em Adv. Phys}. 69, 2, 121--247, 2020. 

\bibitem{WS}
T. Wada, and A. M. Scarfone. 
\newblock{Onsager's Non-Equilibrium Thermodynamics as Gradient Flow in Information Geometry.}
\newblock{\em Entropy}, 24, 1743, 2022.

\bibitem{Zhang2004}
J.~Zhang.
\newblock Divergence function, duality, and convex analysis.
\newblock {\em Neural Computation}, 16(1):159--195, 2004.
\end{thebibliography}
\end{document}